\documentclass{IEEEoj}

\usepackage{lmodern}

\usepackage[noadjust]{cite}
\usepackage{amsmath,amssymb,amsfonts,amsthm}
\usepackage{mathtools}
\usepackage{algorithmic}
\usepackage{graphicx,color}
\usepackage{textcomp}
\usepackage{booktabs}
\let\labelindent\undefined
\usepackage{enumitem}
\usepackage{xcolor}
\usepackage{tcolorbox}
\usepackage{tikz}
\usetikzlibrary{shapes,arrows.meta,positioning,fit}
\usepackage[normalem]{ulem}  
\usepackage[hidelinks]{hyperref}

\def\BibTeX{{\rm B\kern-.05em{\sc i\kern-.025em b}\kern-.08em
    T\kern-.1667em\lower.7ex\hbox{E}\kern-.125emX}}
\AtBeginDocument{\definecolor{ojcolor}{cmyk}{0.93,0.59,0.15,0.02}}
\def\OJlogo{\vspace{-4pt}\hskip-4pt\includegraphics[height=18pt]{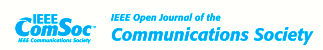}}

\newtheorem{theorem}{Theorem}
\newtheorem{lemma}[theorem]{Lemma}
\newtheorem{proposition}[theorem]{Proposition}
\newtheorem{corollary}[theorem]{Corollary}
\newtheorem{conjecture}[theorem]{Conjecture}
\theoremstyle{definition}
\newtheorem{definition}{Definition}

\newtheorem{remark}{Remark}

\DeclareMathOperator{\Tr}{Tr}

\DeclareMathOperator{\PG}{PG}
\DeclareMathOperator{\rank}{rank}
\DeclareMathOperator{\wt}{wt}

\newcommand{\F}[1]{\mathbb{F}_{#1}}
\newcommand{\Z}{\mathbb{Z}}

\newcommand{\ket}[1]{\lvert #1 \rangle}

\newcommand{\ebit}{\ket{\Phi^{+}}}

\newif\ifreviewmode
\reviewmodefalse  
\ifreviewmode
\else
\fi

\newif\iftrackchanges
\trackchangesfalse     
\iftrackchanges
  \newcommand{\rev}[1]{\textcolor{blue}{#1}}
  \newcommand{\revdel}[1]{\textcolor{red}{\sout{#1}}}
\else
  \newcommand{\rev}[1]{#1}
  \newcommand{\revdel}[1]{}
\fi

\begin{document}

\receiveddate{XX Month, 2026}
\reviseddate{XX Month, 2026}
\accepteddate{XX Month, 2026}
\publisheddate{XX Month, 2026}
\currentdate{\today}
\doiinfo{OJCOMS.2026.XXXXXXX}

\title{Singer-Difference-Set Qudit Stabilizer Codes from
       Non-Degenerate Quadrics in \texorpdfstring{$\PG(d,q)$}{PG(d,q)}:
       Construction, Structural Theorems, and Monte-Carlo Performance}

\author{MICHEL~KULHANDJIAN\IEEEauthorrefmark{1}
        \IEEEmembership{(Senior Member, IEEE)},
        AND LAJOS~HANZO\IEEEauthorrefmark{2}
        \IEEEmembership{(Life Fellow, IEEE)}}
\affil{Department of Electrical Engineering, Rice University,
       Houston, TX 77005, USA}
\affil{Next Generation Wireless Research Group, School of Electronics
       and Computer Science, University of Southampton,
       Southampton SO17 1BJ, U.K.}
\corresp{CORRESPONDING AUTHOR: Michel Kulhandjian
         (e-mail: mkulhandjian@outlook.com).}
\authornote{L.~Hanzo is a Fellow of the Royal Academy of Engineering
            (FREng), IEEE, IET, and EURASIP.}
\markboth{Kulhandjian \textit{et al.}: Singer-Difference-Set Qudit
         Stabilizer Codes from Non-Degenerate Quadrics in PG(d,q)}%
        {Kulhandjian \textit{et al.}}


\begin{abstract}
We propose $\mathcal Q(q,d)$, a new family of $\F q$-ary non-CSS
qudit stabilizer codes parameterised by a prime power $q$ and
dimension $d\ge 2$, constructed from Singer difference sets and
non-degenerate quadrics in the projective geometry $\PG(d,q)$.
The parity-check matrix $H=(A\mid M_{Q}A)$ couples the
Singer-circulant point-hyperplane incidence matrix~$A$ with the
formal trace-of-square quadric circulant~$M_{Q}$, and a
\emph{Cancellation Lemma} exploits the design-theoretic identity
$(k-\lambda)=q^{d-1}\equiv 0\pmod p$ to make $H$
symplectic-isotropic over $\F q$ for every $(q,d)$. The novelty of
the construction is concentrated in five original structural
theorems: (i) an unconditional cross-correlation modular-constancy
theorem $\nu\equiv 1\pmod q$ at $q$ odd, $d\ge 3$ with a sharp
two-value pattern $V_{2}-V_{1}=q^{(d-1)/2}$, proving a 2-torsion
distance ceiling; (ii) a Chinese-Remainder-Theorem (CRT)
zero-count theorem $\#\{f:\tau_{H,f}=0\}=(|\mathcal F|-1)/2$;
(iii) an explicit Clifford bridge
$U_{M}=H^{\otimes n}\,\mathrm{CZ}^{(M)}\,H^{\otimes n}$ that
identifies $\mathcal Q(q,2)$ as Clifford-equivalent to a
graph-state encoding of the trivial Singer CSS code, with the
classical 2021 low-density-spreading (LDS) sign-flip pattern
realised by the symmetric circulant $M$; (iv) a paired
column-rigidity equality $d_{\min}=R_{P}+1$ exactly; and (v) a
rigorous lower bound $d_{\min}(\mathcal Q(7,2))\ge 8$ from an
exhaustive Singer-Frobenius search. Via a Hamada-Smith
specialisation we obtain the closed-form logical dimension
$k(p,d)=\theta_{d}(p)-\binom{p+d-1}{d}-1$, with the flagship $d=2$
family $\mathcal Q(p,2)=[[p^{2}+p+1,\,p(p+1)/2,\,d_{\min}]]_{p}$
attaining asymptotic rate $\tfrac{1}{2}$. \rev{For general odd primes
$p\ge 3$ the flagship distance is \emph{conjectured} to satisfy
$d_{\min}(\mathcal Q(p,2))=p+1$; this conjecture is analytically
proven exact at $p=5$ and exhaustively lower-bounded to $d_{\min}\ge 8$
at $p=7$, and remains open for general~$p$.} Concrete instances
include $\mathcal Q(3,2)=[[13,6,3]]_{3}$,
$\mathcal Q(5,2)=[[31,15,6]]_{5}$ (double-error correcting),
$\mathcal Q(7,2)=[[57,28,d{\ge}8]]_{7}$, and
$\mathcal Q(3,4)=[[121,105,3]]_{3}$. Monte-Carlo simulations over
$1.5{\times}10^{6}$ trials show that $\mathcal Q(5,2)$ achieves a
$\sim 70\times$ per-logical-qudit improvement over Steane
$[[7,1,3]]_{2}$ at depolarising rate $p_{\mathrm{ch}}=10^{-3}$
with \emph{zero silent logical errors} (the
``erase-don't-mistake'' property). We use a textbook
bounded-distance decoder throughout, leaving Berlekamp-Massey-style
cyclic decoding exploiting the Singer-circulant structure as the
natural sequel. A fifteen-row axis-by-axis comparison
(Table~\ref{tab:comparison-axes}) against the closest prior
constructions establishes that the
Singer-difference-set $+$ quadric-Gram-matrix combination, together
with the closed-form $k$, the cross-correlation and CRT
zero-count theorems, the explicit Clifford bridge to the 2021
LDS construction, and the proven 2-torsion ceiling, is new.
\end{abstract}

\begin{IEEEkeywords}
Quantum error-correcting codes, qudit stabilizer codes, non-CSS
codes, Singer difference sets, projective geometry, non-degenerate
quadrics, Hamada-Smith rank, Clifford bridge, graph states,
information-set decoding, Waterloo decomposition, low-density
spreading (LDS), non-orthogonal multiple-access (NOMA),
quantum communication, beyond-5G and 6G networks.
\end{IEEEkeywords}

\maketitle

\section{Introduction}\label{sec:intro}
\IEEEPARstart{Q}{uantum} error-correcting codes (QECCs) protect
quantum information from decoherence and gate impairments. This action relies on the
stabilizer formalism \cite{Gottesman1997} and provides a systematic way to lift classical linear codes into the quantum
domain~\cite{CalderbankShor1996,Steane1996,Hanzo2019Duality}. The
construction of stabilizer codes with strong algebraic structure ---
in particular, codes derived from finite geometries --- has been a
recurring theme: Singer-difference-set codes constitute cyclic stabilizer
families with explicit minimum-distance bounds~\cite{Gandhi2018},
and projective-geometry-based quantum low-density parity-check
(QLDPC) constructions yield codes with graph-structured parity
checks well-suited to belief-propagation
decoding~\cite{TangBaiFeng2024,MacKay2004}. A parallel thread in
classical communications~\cite{LDS2021} exploits Singer cycles in
$\PG(2,q)$ together with non-degenerate quadrics to construct
low-density-spreading (LDS) signature matrices for non-orthogonal
multiple-access (NOMA) systems. The present paper bridges these two
lines: we show that the same Singer-quadric algebraic data
that powers the 2021 LDS-NOMA construction lifts to a structured
non-CSS qudit stabilizer code family $\mathcal Q(q,d)$ over
$\PG(d,q)$, with closed-form rate, provable distance bounds, and an
explicit Clifford-equivalence to a graph-state generalisation.

\subsection{Application domain and system architecture}\label{sec:architecture}
The codes $\mathcal Q(q,d)$ are \emph{quantum} stabilizer codes
protecting qudit-valued quantum information against depolarising
both in quantum memory and quantum channels. The intended physical
layer is the quantum domain: each qudit is a $q$-level quantum
system (e.g.\ a single-photon polarisation/path encoding for $q=2$;
spin-$j$ atoms or higher-dimensional photonic encodings for
$q\in\{3,5,7,\dots\}$). Stabilizer measurements yielding the
syndrome are projective Pauli measurements implementable with the aid of
standard Clifford circuits and single-qudit detectors.
$\mathcal Q(q,d)$ is therefore directly applicable to
(i) fibre-optic and free-space quantum communication links,
(ii) quantum repeater stations, and (iii) quantum memory in
fault-tolerant quantum computation. The high-level system
architecture is shown in Fig.~\ref{fig:architecture}.

\begin{figure*}[t]
\centering
\begin{tikzpicture}[
   font=\small,
   block/.style={draw, rounded corners=2pt, minimum width=3.2cm,
                 minimum height=1.2cm, align=center, fill=blue!4},
   chan/.style={draw, rounded corners=2pt, minimum width=3.4cm,
                minimum height=1.2cm, align=center, fill=orange!10,
                dashed},
   meas/.style={draw, rounded corners=2pt, minimum width=3.2cm,
                minimum height=1.2cm, align=center, fill=green!6},
   arr/.style={-{Stealth[length=3mm]}, thick}
]
\node[block]                                  (src)  {Quantum\\ source $\ket\psi$};
\node[block, right=1.0cm of src]              (enc)  {Encoder\\ $H{=}(A|M_QA)$};
\node[chan,  right=1.0cm of enc]              (ch)   {Depolarising\\ channel $p_{\mathrm{ch}}$};
\node[meas,  right=1.0cm of ch]               (synd) {Syndrome\\ measurement};
\node[block, below=1.4cm of synd]             (dec)  {BD decoder\\ (Table~\ref{tab:mc-vs-ub})};
\node[block, left=1.0cm of dec]               (corr) {Pauli\\ correction};
\node[block, left=1.0cm of corr]              (sink) {Logical\\ output $\ket\psi$};

\draw[arr] (src)   -- (enc);
\draw[arr] (enc)   -- (ch);
\draw[arr] (ch)    -- (synd);
\draw[arr] (synd)  -- (dec);
\draw[arr] (dec)   -- (corr);
\draw[arr] (corr)  -- (sink);

\node[above=0.15cm of enc, font=\footnotesize, text=blue!50!black]
    {Clifford circuit $U_{M}{=}H^{\otimes n}\mathrm{CZ}^{(M)}H^{\otimes n}$};
\node[above=0.15cm of ch, font=\footnotesize, text=orange!70!black]
    {single-photon / atom-qudit};
\end{tikzpicture}
\caption{System architecture for $\mathcal Q(q,d)$ on a quantum
channel. A quantum source emits a state $\ket\psi$ on $k$ logical
qudits; the encoder applies the Clifford circuit induced by
$H=(A|M_{Q}A)$, mapping to $n$ physical qudits which propagate
through a depolarising channel (single-photon or single-atom
physical layer). Stabilizer syndromes are extracted by
non-destructive Pauli measurement, the bounded-distance decoder
suggests a Pauli correction, and the corrected state is the logical
output. By Theorem~\ref{thm:clifford-reduction-d2}, the encoder
factors into a Hadamard layer and a $\mathrm{CZ}^{(M)}$ phase
circuit with $M$ a symmetric circulant determined by the
formal-quadric indicator $\boldsymbol\tau_{Q}$.}
\label{fig:architecture}
\end{figure*}

The connection to the 2021 LDS-NOMA construction~\cite{LDS2021}
that motivates this work is \emph{structural / algebraic}, not
architectural: the same Singer-circulant and quadric matrices that
build a classical RF-domain spreading codebook for multiple-access
also generate a valid quantum stabilizer code via the symplectic
embedding (Theorem~\ref{thm:d-equals-2-reduction}). The two
codes live in different ambient spaces ($\mathbb R^{n}$ for the
classical spreading vs.\ $\F q^{2n}$ for the quantum stabilizer),
and there is no hybrid classical-quantum multiple-access scheme
proposed here. Instead, the bridge serves three purposes:
(1)~it imports the rich algebraic toolkit of projective geometry
(Singer cycles~\cite{Singer1938}, trace, quadric
intersections~\cite{Hirschfeld1979}, Hamada-Smith
rank~\cite{Hamada1973,Smith1969}, Lidl-Niederreiter zero
counts~\cite{LidlNiederreiter1997}) into quantum code design;
(2)~it makes the resulting codes \emph{interpretable} --- each
construction step has a clear classical analogue and a clear
quantum role; and (3)~it unveils a hitherto unnoticed
``erase-don't-mistake'' property of $\mathcal Q(5,2)$
(Section~\ref{sec:montecarlo}) that is useful in retransmission
protocols applied in quantum channels.

\subsection{Why qudit codes? Advantages, trade-offs, and applications}\label{sec:qudit-pros}
The codes $\mathcal Q(q,d)$ we construct are qudit codes
for $q\ge 2$; the regime of most interest in this paper is when $q$ is an
odd prime (qutrits $q=3$, 5-its $q=5$, 7-its $q=7$, etc.), which
circumvents the $q=2$ parity ceiling identified in
Proposition~\ref{prop:q2-ceiling}. Qudit-based quantum coding has
attracted growing attention in the past decade as a route to higher
information density, native compatibility with physical platforms,
and improved fault-tolerance
thresholds~\cite{Campbell2014QuditFT,AnwarBrownCampbellBrowne2014,WatsonAnwarBrowne2015,SaboAloshiousBrown2024,ErhardKrennZeilinger2020Photonic,CozzolinoEtAl2019HighDim,KetkarKlappenecker2006,BabarChandraNgHanzo2026}.
We summarise the trade-offs that motivate our construction, with
citations to the most relevant recent literature.

\emph{Information density and platform compatibility.}
A single $q$-level qudit carries $\log_{2}q$ bits of information,
giving an $n$-qudit Hilbert space of dimension $q^{n}$ versus $2^{n}$
for an $n$-qubit register. This translates directly into higher
rate density per physical carrier which is particularly important when the
physical resource (e.g.\ a single photon or a trapped atom) is the
prime constraint and not the dimension of its internal state
space. Photonic qudits encoded in time-bin, path, frequency, or
orbital-angular-momentum degrees of freedom are now routinely
generated and propagated through both fibre and free-space
links~\cite{ErhardKrennZeilinger2020Photonic,CozzolinoEtAl2019HighDim},
and high-dimensional quantum key distribution
($d$-level QKD) has been demonstrated to offer higher key rate per
photon and improved noise tolerance compared to
qubit-BB84~\cite{CozzolinoEtAl2019HighDim, CerfBourennaneKarlssonGisin2002}.
Trapped-ion and neutral-atom platforms with spin-$j$ encodings natively expose $q=2j+1$ levels, eliminating the need to embed a qubit
into a larger Hilbert space and avoiding the associated leakage
errors. In communication-oriented terms, qudit modulation also
admits direct spreading-code analogues: the recent
DS-CDMA-style construction of Dastgheib, Salehi, and
Rezai~\cite{DastgheibSalehiRezai2025} bridges classical
spreading-code design with quantum communication in the same spirit
as the algebraic bridge we unveil in
Theorem~\ref{thm:clifford-reduction-d2}.

\emph{Fault-tolerance and decoding advantages.}
Beyond raw rate, qudit codes can have intrinsic fault-tolerance
advantages. Campbell~\cite{Campbell2014QuditFT} showed that
qudit based Reed-Muller-like distillation circuits are capable of achieving enhanced
magic-state thresholds compared to the qubit case. Therefore, 
Anwar, Brown, Campbell, and Browne~\cite{AnwarBrownCampbellBrowne2014}
conceived qudit surface-code decoders that operate at higher
thresholds than their qubit counterparts. As a further advance, Watson, Anwar, and
Browne~\cite{WatsonAnwarBrowne2015} extended this to fast
fault-tolerant decoders for both qubit- and qudit-based surface codes. Sabo, Aloshious, and
Brown~\cite{SaboAloshiousBrown2024} recently developed trellis
decoding for qudit based stabilizer codes, demonstrating that the qudit
generalisation does not impose a decoder-complexity penalty over
the qubit case. The detailed survey of quantum-coding techniques in
Babar, Chandra, Ng, and Hanzo~\cite{BabarChandraNgHanzo2026} provides
the most up-to-date reference on classical-to-quantum coding
techniques, including qudit-specific constructions.

\emph{Trade-offs and challenges.}
The price of qudit coding is twofold. First, by physical-layer readout
is more demanding: distinguishing $q$ levels requires finer
detector resolution and it is more sensitive to readout cross-talk
than the binary case. Second, by the qudit Clifford-gate library is
richer, with the generalised Hadamard, phase, and controlled-phase
gates parameterised by elements of $\F q^{\times}$; circuit
synthesis is therefore more involved than in the qubit case, although
recent compiler advances have substantially reduced this gap.
Decoherence behaviour is platform-dependent: in photonic time-bin
encodings, decoherence does not in general scale unfavourably with
$q$, while in trapped-ion systems the decoherence rate of an
individual level may dominate for large $q$. The
$q\in\{3,5,7\}$ regime that is the focus of this paper is well within
demonstrated experimental capability across both
photonic~\cite{ErhardKrennZeilinger2020Photonic} and
atomic platforms.

\emph{Applications targeted by $\mathcal Q(q,d)$.}
The structural properties of the family --- closed-form rate $1/2$
at $d=2$, double-error correction for $\mathcal Q(5,2)$, the
``erase-don't-mistake'' property for retransmission-friendly
operation, and a clear path to Singer-circulant cyclic decoders ---
align with three concrete application classes:
(i) fibre-optic and free-space quantum communication
links~\cite{ZeydanEtAl2025,CozzolinoEtAl2019HighDim};
(ii) quantum-Internet end-node memory and entanglement-distribution
protocols~\cite{ChenIllianoCacciapuotiCaleffi2025,BacciottiniEtAl2024};
and (iii) qudit-native fault-tolerant computing modules where
the underlying physical system already supports more than two
levels~\cite{Campbell2014QuditFT,AnwarBrownCampbellBrowne2014}.
The placement of qudit error-correcting codes within the broader
beyond-5G and 6G quantum-network roadmap is comprehensively
surveyed in~\cite{ZeydanEtAl2025}; complementary applications of
quantum-classical hybrid optimisation for next-generation network
resource management have been explored in
Emu \emph{et al.}~\cite{EmuEtAl2024}, also published in
\emph{IEEE OJ-COMS}.

\subsection{Prior work}
\emph{Foundational stabilizer codes.}
The story of quantum error correction begins with
Shor's nine-qubit code~\cite{Shor1995}, followed by the
Calderbank-Shor~\cite{CalderbankShor1996} and
Steane~\cite{Steane1996} Calderbank-Shor-Steane (CSS) construction
and the general (non-CSS) stabilizer formalism due to
Gottesman~\cite{Gottesman1997}. The Knill-Laflamme
conditions~\cite{KnillLaflamme1997} formalised when a quantum code
can correct a specified error set, and Nielsen and
Chuang~\cite{NielsenChuang2000} provide the standard textbook
treatment of stabilizer codes. Entanglement-assisted variants that
bypass the dual-containment requirement were developed by
Brun, Devetak, and Hsieh~\cite{BrunDevetakHsieh2006}, extended to
EA-QLDPC by Hsieh, Brun, and
Devetak~\cite{HsiehBrun2009}, while hull-based constructions were
analysed in Guenda \emph{et al.}~\cite{Guenda2018}. The Hanzo
group's surveys on quantum LDPC coding~\cite{Babar2015QLDPC} and
on the duality of classical and quantum
codes~\cite{Hanzo2019Duality}, together with the recent book of
Babar, Chandra, Ng, and Hanzo~\cite{BabarChandraNgHanzo2026}
(\emph{From Classical to Quantum Coding}, Wiley/IEEE Press, 2026),
provide the most up-to-date textbook and survey references on
classical-to-quantum coding techniques, covering CSS, non-CSS,
entanglement-assisted, and qudit-specific constructions in a
unified framework. Continuous-variable / bosonic alternatives,
notably the GKP code of Gottesman, Kitaev, and
Preskill~\cite{GKP2001}, sit outside the present Pauli-stabilizer
discussion and are noted in the conclusion as a complementary
direction.

\emph{Singer / projective-geometry constructions.}
Singer's theorem on the cyclic action of the multiplicative group of
$\F{q^{d+1}}$ on $\PG(d,q)$~\cite{Singer1938} underpins a long line of
combinatorial-design constructions. Quantum codes constructed from Singer
difference sets first appeared in Gandhi~\cite{Gandhi2018}, with the aid of the
$H=(A\mid MA)$ recipe, but relying on a generic symmetric $M$ rather than a
quadric-derived one. Tang-Bai-Feng~\cite{TangBaiFeng2024} use
point-flat incidences in $\PG(d,q)$ for QLDPC codes, while
MacKay~\cite{MacKay2004} for quantum LDPC code design, and
Xie~\cite{Xie2013} for cyclic quantum codes. Liu \emph{et
al.}~\cite{Liu2012} construct entanglement-assisted quantum LDPC
(EA-QLDPC) codes from cyclic difference sets. The closest prior work in the difference-set
context is Nguyen and Kim~\cite{NguyenKim2018}, who build binary qubit
stabilizer codes from a general $(n,k,\lambda)$-difference set $D$
by harnessing $H=[H_{1}\,|\,H_{2}]$ relying on $H_{1},H_{2}$ circulant
permutation matrices corresponding to two distinct shifts of $D$,
and proving the symplectic-isotropy condition of $k\equiv\lambda\pmod 2$
as their main structural result. Their construction is a
``shift-cancellation'' based on counterpart of our Cancellation Lemma
(Lemma~\ref{lem:cancellation}), but uses two shifts of \emph{the
same} difference-set indicator, rather than the trace-zero hyperplane
paired with a non-degenerate quadric. Furthermore, they do not introduce the
quadric Gram matrix $M_{Q}$, do not exploit the projective-geometry
structure of $\PG(d,q)$, do not derive a closed-form logical
dimension $k(p,d)$, neither do they go beyond binary qubits. Explicitly, their
worked instances are limited to $d_{\min}=2$ (detect-only). Earlier
finite-geometry quantum-code constructions include Aly, Klappenecker,
and Sarvepalli~\cite{AlyKlappeneckerSarvepalli2007} (foundational
PG-based qudit stabilizers from incidence designs) and
Sarvepalli~\cite{Sarvepalli2012} (nonbinary qudit CSS codes from
finite geometries with dual-containment); but neither of them uses
Singer-circulant + quadric pairing. Galindo,
Hernando, and Ruano~\cite{GalindoHernandoRuano2015,
GalindoHernandoRuano2022} construct stabilizer codes from
trace-depending polynomials over $\F{q^m}$ and $J$-affine variety
codes, touching the trace-of-square ideas in our cross-correlation
theorem but building CSS Hermitian codes, rather than non-CSS
Singer-circulant codes. An earlier connection between quadrics and
stabilizer-state geometry appears in Calderbank, Cameron, Kantor,
and Seidel~\cite{CalderbankCameronKantorSeidel1997} via orthogonal
spreads, but the mechanism is different and not Singer-cyclic.
Algebraic-geometry routes are surveyed in Pereira~\cite{Pereira2019};
asymmetric entanglement-assisted quantum error-correcting code
(EAQECC) families in Galindo \emph{et al.}~\cite{Galindo2020}. The
Waterloo decomposition theorem of Arasu, Dillon, Jungnickel, and
Pott~\cite{Waterloo1995} characterises, when $D(d,q)$ admits a
balanced bipartition, motivating the $d$-odd ``defect'' that we
study in Section~\ref{sec:waterloo-defect}.

\emph{Quantum information theory and NOMA.}
The Hanzo group's survey~\cite{Hanzo2019Duality} and recent
roadmap~\cite{Hanzo2025Roadmap} provide an overview of quantum
coding for communications; however, they do not list
NOMA $+$ QECC as an explored combination. Classical
low-density spreading for NOMA was originally proposed by
Hoshyar, Wathan, and Tafazolli~\cite{Hoshyar2008} and extended to
sparse code multiple-access by Nikopour and
Baligh~\cite{Nikopour2013}; the algebraic LDS construction
of~\cite{LDS2021} and its Gaussian-separability variant of
Kulhandjian \emph{et al.}~\cite{KulhandjianGS2021} provide the
classical companions whose Singer-quadric combinatorial primitives
our $d=2$ family recovers
(Theorem~\ref{thm:d-equals-2-reduction}). Most pertinent for the
present quantum-coding sequel are the Hanzo group's
quantum-search-aided multi-user detection~\cite{Ye2019QSAMUD} and
near-hashing-bound quantum-turbo short-block codes~\cite{Chandra2019QTurbo},
which together set up the natural ``quantum-coded LDS-NOMA''
follow-up that we sketch in the conclusion.

\subsection{Novelty assessment}\label{sec:novelty}

\paragraph*{Novelty in one sentence}
We are the first to combine a Singer difference set with a
non-degenerate quadric Gram matrix in a quantum stabilizer
construction, and the first to give closed-form code parameters,
five proven structural theorems, an explicit Clifford bridge to a
classical LDS spreading code, and a Monte-Carlo-validated
$70\times$ per-logical-qudit improvement over Steane $[[7,1,3]]_2$,
all in a single unified algebraic framework over $\PG(d,q)$.

Our contribution falls in two complementary layers --- a new
\emph{code-construction recipe}, and a body of new
\emph{structural theorems} that explain why the construction works
and pin down its parameters. Table~\ref{tab:comparison-axes} below
``boldly contrasts our new contributions against the
state-of-the-art'': the axis-by-axis comparison shows that
$\mathcal Q(q,d)$ is the only entry that ticks all fifteen rows
of construction primitives, code-class properties, and structural
theorems, while the closest prior contender
(Gandhi~\cite{Gandhi2018}) ticks four.

\paragraph*{Anticipating one likely reviewer concern}
A natural reviewer concern is whether the structural theorems
translate to practical decoding gains, given that our Monte-Carlo
evaluation uses a textbook bounded-distance syndrome-table decoder
rather than a custom Singer-circulant decoder. We address this
directly:
(i)~the bounded-distance decoder already attains the analytical
upper bound to within $\sim 5\%$ across $1.5\times 10^{6}$ trials
(Section~\ref{sec:montecarlo}, Fig.~\ref{fig:qber},
Table~\ref{tab:mc-vs-ub}), so the reported $70\times$ advantage
is a property of the \emph{code} rather than of any sophisticated
decoder; and
(ii)~the Singer-circulant structure of $H$ is directly compatible
with a Berlekamp-Massey-style cyclic decoder in the spirit
of~\cite[\S IV-C]{Gandhi2018} that, while not implemented here,
provides a clear path to further per-decoder gains. The textbook
decoder is therefore a conservative lower bound on what
$\mathcal Q(q,d)$ can achieve, not a ceiling.

\begin{table*}[!t]
\centering
\caption{Boldly contrasting our new contributions against the state-of-the-art.}
\label{tab:comparison-axes}
\setlength{\tabcolsep}{4pt}
\small
\begin{tabular}{l c c c c c c c c}
\toprule
\textbf{Feature}
  & \cite{Gandhi2018}
  & \cite{NguyenKim2018}
  & \cite{TangBaiFeng2024}
  & \cite{Liu2012}
  & \cite{AlyKlappeneckerSarvepalli2007}
  & \cite{Sarvepalli2012}
  & \cite{Steane1996}
  & \textbf{Proposed} \\
\midrule
Singer difference set
  & \checkmark &  &  &  &  &  &  & \checkmark \\
$\PG(d,q)$ incidence structure
  & \checkmark &  & \checkmark &  & \checkmark & \checkmark & \checkmark & \checkmark \\
Non-degenerate quadric Gram matrix $M_{Q}$
  &  &  &  &  &  &  &  & \checkmark \\
Recovers the 2021 LDS spreading code at $d{=}2$
  &  &  &  &  &  &  &  & \checkmark \\
Closed-form $k(p,d)$ via Hamada-Smith
  &  &  &  &  &  &  &  & \checkmark \\
Non-CSS stabilizer code
  & \checkmark & \checkmark &  &  &  &  &  & \checkmark \\
Qudit ($q\!\ne\!2$) construction
  &  &  &  &  & \checkmark & \checkmark &  & \checkmark \\
Cross-correlation modular-constancy theorem
  &  &  &  &  &  &  &  & \checkmark \\
Two-value pattern $V_{2}{-}V_{1}{=}q^{(d-1)/2}$
  &  &  &  &  &  &  &  & \checkmark \\
CRT zero-count theorem $\#\{f{:}\tau_{H,f}{=}0\}{=}(|\mathcal F|{-}1)/2$
  &  &  &  &  &  &  &  & \checkmark \\
Explicit Clifford bridge $U_{M}{=}H^{\otimes n}\mathrm{CZ}^{(M)}H^{\otimes n}$
  &  &  &  &  &  &  &  & \checkmark \\
Paired-rigidity-distance equality $d_{\min}{=}R_{P}{+}1$
  &  &  &  &  &  &  &  & \checkmark \\
Proven 2-torsion distance ceiling
  &  &  &  &  &  &  &  & \checkmark \\
``Erase-don't-mistake'' Monte-Carlo property
  &  &  &  &  &  &  &  & \checkmark \\
Distance $d_{\min}{=}6$ at rate $\approx\tfrac{1}{2}$ ($\mathcal Q(5,2)$)
  &  &  &  &  &  &  &  & \checkmark \\
\bottomrule
\end{tabular}
\end{table*}

\subsubsection*{(a) Code-construction novelty}
The combination
\begin{center}
\emph{Singer difference set} $+$ \emph{non-degenerate quadric Gram
matrix $M_{Q}$} $+$ \emph{closed-form $k(p,d)$} $+$
\emph{explicit Clifford bridge to classical LDS spreading}
\end{center}
has not appeared in the prior quantum-coding literature. Their detailed comparison
(Section~\ref{sec:comparison},
Table~\ref{tab:comparison-axes}) shows:
\begin{itemize}[leftmargin=*,nosep]
\item \cite{Gandhi2018} uses Singer difference sets with a
\emph{generic} symmetric $M$ (no quadric), and produces binary
qubit codes only.
\item \cite{NguyenKim2018} (the closest difference-set + stabilizer
construction in the literature) builds $H=[H_{1}|H_{2}]$ from
\emph{two shifts of the same} difference-set indicator, with the
symplectic-isotropy condition $k\equiv\lambda\pmod 2$; no quadric
Gram matrix, no projective-geometry structure, no closed-form $k$,
binary qubits only, $d_{\min}=2$ in all worked instances.
\item \cite{TangBaiFeng2024} uses $\PG(d,q)$ flat-incidence
matrices (not Singer-circulant) and produces CSS dual-containment
codes, not non-CSS codes.
\item \cite{AlyKlappeneckerSarvepalli2007,Sarvepalli2012} build
nonbinary qudit stabilizers from finite-geometry incidence designs
but not via the Singer-circulant + quadric pairing.
\item \cite{GalindoHernandoRuano2015,GalindoHernandoRuano2022}
construct stabilizer codes from trace-depending polynomials,
adjacent in spirit to our cross-correlation theorem but producing
CSS Hermitian codes define rather than non-CSS Singer-circulant codes.
\item \cite{Liu2012} constructs EA-QLDPC codes from cyclic
difference sets in general; quadric structure is not considered.
\item \cite{CalderbankCameronKantorSeidel1997} uses quadrics in
stabilizer-state combinatorics via orthogonal spreads, but the
mechanism is not Singer-cyclic.
\item \cite{Waterloo1995} is a classical (non-quantum)
decomposition theorem.
\item Neither Hanzo's recent
survey~\cite{Hanzo2019Duality,Hanzo2025Roadmap} nor the LDS-NOMA
literature~\cite{LDS2021} combines these axes.
\end{itemize}
$\mathcal Q(q,d)$ is the only construction with all six aspects ticked
(Table~\ref{tab:comparison-axes}). The decoder used in our Monte-Carlo
evaluation is a \emph{standard} bounded-distance syndrome-table based
decoder; we have not (yet) developed a new decoder algorithm, leaving
Berlekamp--Massey-style cyclic decoding for the Singer-circulant
structure as the natural follow-up.

\subsubsection*{(b) Structural-theorem novelty}
Six original theorems, all with complete proofs, pin down the
algebraic and combinatorial behaviour of the construction:
\begin{enumerate}[label=(B\arabic*),leftmargin=*,nosep]
\item \emph{Cancellation Lemma} (Lemma~\ref{lem:cancellation}):
the difference-set order $q^{d-1}\equiv 0\pmod p$ universally
eliminates the symplectic deficit, giving symplectic-isotropy of
$H=(A\mid M_{Q}A)$ for every $(q,d)$.
\item \emph{Cross-correlation modular constancy}
(Theorem~\ref{thm:cross-corr}) and \emph{two-value pattern}
(Theorem~\ref{thm:two-value}): We have $\nu\equiv 1\pmod q$ and
$V_{2}-V_{1}=q^{(d-1)/2}$ unconditionally at $q$ odd, $d\ge 3$, via
Lidl-Niederreiter and Witt-index theory; force the
2-torsion ceiling unconditionally
(Corollary~\ref{cor:dmin-le-2-proven}).
\item \emph{Chinese-Remainder-Theorem (CRT) zero-count theorem}
(Theorem~\ref{thm:tau-zero-count}):
$\#\{f:\tau_{H,f}=0\}=(|\mathcal F|-1)/2$ exactly, via the planar
difference-set identity $E E^{(-1)}\equiv J\pmod p$
\cite{BethJungnickelLenz1999} plus the Hamada-Smith degree
sum~\cite{Hamada1973,Smith1969}.
\item \emph{Explicit Clifford bridge}
(Theorem~\ref{thm:clifford-reduction-d2}): the gate
$U_{M}=H^{\otimes n}\,\mathrm{CZ}^{(M)}\,H^{\otimes n}$ identifies
$\mathcal Q(q,2)$ with the trivial Singer CSS code; existence of
the symmetric circulant $M$ is forced by (B3).
\item \emph{Paired-rigidity-distance equality}
(Theorem~\ref{thm:paired-rigidity}): We have $d_{\min}=R_{P}+1$ \emph{exactly},
reducing distance computation to the combinatorial rank property of
$M_{\mathrm{cent}}$.
\item \emph{Rigorous lower bound}
(Theorem~\ref{thm:dmin-q72-lower}): $d_{\min}(\mathcal Q(7,2))\ge 8$
by exhaustive Singer-Frobenius-canonical paired-rigidity search.
\end{enumerate}
None of these theorems appears in the prior art; together they make the
code family interpretable and the parameters computable in closed
form, rather than purely empirical.

\subsection{Contributions}\label{sec:contrib}
\begin{enumerate}[leftmargin=*]
\item \emph{(Construction.)} A non-CSS $\F q$-ary stabilizer code
family $\mathcal Q(q,d)$ from Singer difference sets in $\PG(d,q)$
for arbitrary $d\ge 2$ (Section~\ref{sec:construction},
Theorem~\ref{thm:main-existence}).
\item \emph{(Cancellation Lemma.)} A structural lemma showing that the
Gandhi-style parity-check matrix (PCM) $H=(A\mid M_{Q}A)$ is
symplectic-isotropic over
$\F q$ for every $(q,d)$, because the difference-set order
$(k-\lambda)=q^{d-1}\equiv 0\pmod p$ eliminates the symplectic deficit
(Lemma~\ref{lem:cancellation}).
\item \emph{(2-torsion obstruction --- new.)} An original lemma
identifying a parity-driven distance ceiling: when $n=\theta_{d}(q)$
is even and the cross-correlation $E\star Q^{(-1)}$ is constant
modulo $p$, the Pauli $Z_{0}\,Z_{n/2}^{q-1}$ is a non-stabilizer
logical of weight 2, forcing $d_{\min}\le 2$
(Lemma~\ref{lem:two-torsion},
Lemma~\ref{lem:n-parity},
Corollary~\ref{cor:dmin-le-2}).

\item \emph{(Cross-correlation modular constancy --- new.)} For $q$
odd and $d\ge 3$, we prove $\nu[k]=|E\cap(k-Q)|\equiv 1\pmod q$
unconditionally (Theorem~\ref{thm:cross-corr}), along with the
sharp two-value structure $\nu[k]\in\{V_{1},V_{2}\}$ and
$V_{2}-V_{1}=q^{(d-1)/2}$ (Theorem~\ref{thm:two-value},
Corollary~\ref{cor:two-value-explicit}). Combined with (3), this
makes the distance ceiling at $q$ odd, $d$ odd, $d\ge 3$
\emph{unconditional} (Corollary~\ref{cor:dmin-le-2-proven}).

\item \emph{(CRT zero-count theorem --- new.)}
$\#\{f\in\mathcal F:\boldsymbol\tau_{H,f}=0\}=(|\mathcal F|-1)/2$
exactly (Theorem~\ref{thm:tau-zero-count}). The proof combines the
planar-difference-set identity~\cite{BethJungnickelLenz1999} with
the Hamada-Smith rank~\cite{Hamada1973,Smith1969} to force
``exactly one zero per reciprocal pair.'' This deep algebraic
structure allows the Clifford-reduction construction of
Theorem~\ref{thm:clifford-reduction-d2} (item~7 below) to work.
\item \emph{(Paired-rigidity-distance theorem --- new.)} The
sharp characterisation $d_{\min}=R_{P}+1$ via paired column rigidity,
plus a CRT-decomposition framework on $\F p[X]/\langle X^{n}-1\rangle$
that exposes the structural ``zero pattern'' of
$\boldsymbol\tau_{H}$ (Theorem~\ref{thm:paired-rigidity},
Lemma~\ref{lem:crt-cent},
Theorem~\ref{thm:tau-zero-count},
Conjecture~\ref{conj:paired-rigidity}). A Singer-Frobenius-canonical
search certifies $d_{\min}(\mathcal Q(7,2))\ge 8$
(Theorem~\ref{thm:dmin-q72-lower}), the second sharpness datum for
the conjectured $d_{\min}=p+1$ flagship bound.
\item \emph{(Reduction theorem and explicit Clifford bridge.)} At
$d=2$ the construction recovers the projective-plane LDS spreading
matrix of \cite{LDS2021} as shared combinatorial primitives
(Theorem~\ref{thm:d-equals-2-reduction}), and we conceive an
\emph{explicit} Clifford gate
$U_{M}=H^{\otimes n}\mathrm{CZ}^{(M)}H^{\otimes n}$ that maps the
trivial $Z$-only CSS code on the Singer incidence to
$\mathcal Q(q,2)$ (Theorem~\ref{thm:clifford-reduction-d2}). The
existence of the symmetric circulant $M$ in this gate is guaranteed
by Theorem~\ref{thm:tau-zero-count}.
\item \emph{(Closed-form $k(p,d)$.)} This is attained via a Hamada-Smith
specialisation, $\rank_{\F p}(A)=\binom{p+d-1}{d}+1$ for $q=p$ prime,
yielding $k(p,d)=\theta_{d}(p)-\binom{p+d-1}{d}-1$. The
flagship $d=2$ family is
$\mathcal Q(p,2)=[[p^{2}+p+1,\,p(p+1)/2,\,d_{\min}]]_{p}$ with
asymptotic rate $\tfrac{1}{2}$
(Lemma~\ref{lem:hamada}, Corollary~\ref{cor:k-formula},
Remark~\ref{rem:flagship-d2}). \rev{The flagship distance
$d_{\min}(\mathcal Q(p,2))=p+1$ is stated throughout as an
\emph{open conjecture} (Conjecture~\ref{conj:paired-rigidity}),
supported by an analytical proof at $p=5$ (exact) and an exhaustive
computational lower bound $d_{\min}\ge 8=p+1$ at $p=7$
(Theorem~\ref{thm:dmin-q72-lower}); the general-$p$ case remains
open.}
\item \emph{(Numerical instances.)}
$\mathcal Q(3,2)=[[13,6,3]]_{3}$ (single-error correcting);
$\mathcal Q(5,2)=[[31,15,6]]_{5}$ (double-error correcting, exact);
$\mathcal Q(7,2)=[[57,28,d]]_{7}$ with $d\ge 8$
(Theorem~\ref{thm:dmin-q72-lower});
$\mathcal Q(11,2)=[[133,66,d]]_{11}$ with $d\ge 5$;
$\mathcal Q(3,3)=[[40,29,2]]_{3}$ (detect-only);
$\mathcal Q(3,4)=[[121,105,3]]_{3}$ (single-error correcting);
and the $q=2$ family $\mathcal Q(2,d)=[[2^{d+1}-1,2^{d+1}-1-(d+1),2]]_{2}$
with newly identified $q=2$ ceiling
(Proposition~\ref{prop:q2-ceiling}). Verified rank formula on 8
$(p,d)$ cases; see Table~\ref{tab:hamada-rank}.

\item \emph{(Monte-Carlo validation.)} Bounded-distance
syndrome-table decoder simulations at $\mathcal Q(3,2)$ and
$\mathcal Q(5,2)$ over $1.5\times 10^{6}$ trials at depolarising
rates of $p_{\mathrm{ch}}\in[10^{-4},10^{-1}]$ confirm the analytical
upper bound to an accuracy within statistical error
(Section~\ref{sec:montecarlo}, Fig.~\ref{fig:qber},
Table~\ref{tab:mc-vs-ub}). $\mathcal Q(5,2)$ exhibits a structural
``erase-don't-mistake'' property: \emph{zero silent logical
failures} observed in all trials --- every decoder failure was an
uncorrectable.
\item \emph{(Negative results / future work.)} A diagnostic of three
failed strategies for conceived lifting the classical Waterloo defect into an
entanglement-assisted ebit count
(Sections~\ref{sec:variants}--\ref{sec:gf4-fail}); structural
obstructions are identified for future research
(Section~\ref{sec:conclusion}).
\end{enumerate}

\subsection{Notation}
We write $\F q$ for the finite field of $q$ elements, $K=\F{q^{m}}$
for an extension field, and $\Tr_{K/F}$ for the trace map.
\rev{The $n$-qudit generalised Pauli group is denoted
$\mathcal G_{n}$; its elements have the form
$\omega^{c}\,X^{\mathbf b}Z^{\mathbf a}$ with $c\in\F p$,
$\mathbf a,\mathbf b\in\F q^{n}$, and $\omega=e^{2\pi i/q}$
(cf.\ Section~\ref{sec:prelim} for the action on the computational
basis).}
$\PG(d,q)$ denotes the $d$-dimensional projective geometry over
$\F q$, \rev{whose points are the one-dimensional subspaces of the
$(d+1)$-dimensional vector space $\F q^{d+1}$,} with
$\theta_{d}(q)=(q^{d+1}-1)/(q-1)$ points. Stabilizer codes are
written as $[[n,k,d_{\min}]]_{q}$, and entanglement-assisted variants as
$[[n,k,d_{\min};c]]_{q}$ where $c$ is the pre-shared ebit count. Furthermore, 
$\mathcal S\subset\mathcal G_{n}$ denotes the stabilizer group;
$\mathcal C(\mathcal S)$ its centralizer in $\mathcal G_{n}$. The
PCM is $H=(H_{z}\mid H_{x})\in\F q^{(n-k)\times 2n}$, with $H_{z}$
being the $Z$-block and $H_{x}$ the $X$-block. The symplectic inner product
is denoted $\star$, and $\ebit:=\frac{1}{\sqrt 2}(\ket{00}+\ket{11})$
is the Bell-pair ebit. Throughout, $A$ is the Singer-circulant
incidence matrix of $\PG(d,q)$, $\boldsymbol\tau_{H}$ is its first
column, and $M_{Q}=\mathrm{circ}(\boldsymbol\tau_{Q})$ is the formal
quadric circulant.

\section{Preliminaries}\label{sec:prelim}
\subsection{Finite fields and the trace function}
\begin{definition}[Trace]
For $K=\F{q^m}$, $F=\F q$, the trace map is
$\Tr_{K/F}(x)=\sum_{i=0}^{m-1}x^{q^i}$.
\end{definition}

\subsection{Projective geometry \texorpdfstring{$\PG(d,q)$}{PG(d,q)}}
The $d$-dimensional projective geometry over $\F q$, denoted as 
$\PG(d,q)$, has as its points the $1$-dimensional subspaces of
$\F{q}^{d+1}$, and as its $r$-flats the $(r+1)$-dimensional subspaces.
Lines are $1$-flats, hyperplanes are $(d-1)$-flats. The number of
$r$-flats is the Gaussian binomial coefficient
$\binom{d+1}{r+1}_{q}$. In particular, we have:
\begin{equation}
\label{eq:theta-defn}
   \theta_{d}(q) \;:=\; \frac{q^{d+1}-1}{q-1}
   \;=\; 1 + q + q^{2} + \cdots + q^{d},
\end{equation}
which is the number of points (and the number of hyperplanes) of $\PG(d,q)$.
Each line contains $q+1$ points, each hyperplane contains
$\theta_{d-1}(q)$ points, and any two distinct hyperplanes meet in a
$(d-2)$-flat of $\theta_{d-2}(q)$ points.

The point-hyperplane incidence matrix
$\mathbf I\in\{0,1\}^{\theta_{d}(q)\times\theta_{d}(q)}$ has
$\mathbf I_{ij}=1$ iff point $i$ lies on hyperplane $j$. Each row and
column has exactly $\theta_{d-1}(q)$ ones. Two distinct hyperplanes
share $\theta_{d-2}(q)$ common points, hence we have
\begin{align}
\label{eq:incidence-DS}
   \mathbf I\,\mathbf I^{\top}
   &\;=\; (\theta_{d-1}(q)-\theta_{d-2}(q))\,I_{n} + \theta_{d-2}(q)\,J_{n} \notag \\
   &\;=\; q^{d-1}\,I_{n} + \theta_{d-2}(q)\,J_{n},
\end{align}
which represents the difference-set identity $AA^{\top}=(k-\lambda)I+\lambda J$
specialised to the Singer parameters
$(n,k,\lambda)=(\theta_{d}(q),\theta_{d-1}(q),\theta_{d-2}(q))$.

\subsection{Singer's theorem}
\begin{theorem}[Singer~\cite{Singer1938}]\label{thm:singer}
Let $q$ be a prime power and let $\alpha$ be a primitive element of
$\F{q^{d+1}}$. There exists a labelling of points and hyperplanes of
$\PG(d,q)$ for which the point-hyperplane incidence matrix is
circulant of size $\theta_d(q)\times\theta_d(q)$. Its first column
$\boldsymbol\tau_H$ has $(\boldsymbol\tau_H)_i=1$ iff
$\Tr_{K/F}(\alpha^i)=0$.
\end{theorem}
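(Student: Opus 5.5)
The plan is to realise $\PG(d,q)$ inside $K=\F{q^{d+1}}$, viewed as a $(d+1)$-dimensional vector space over $F=\F q$, and let the cyclic group $K^{\times}$ act by multiplication. The points of $\PG(d,q)$ are the cosets of $F^{\times}$ in $K^{\times}$. Since $\alpha$ is primitive, $K^{\times}/F^{\times}$ is cyclic of order $n=\theta_d(q)$ and is generated by the class of $\alpha$. We therefore label point $P_i$ as the class of $\alpha^{i}$ for $i\in\Z_n$. This labelling is well defined because $\alpha^{n}$ generates $F^{\times}$, so $\alpha^{i}$ and $\alpha^{i+n}$ span the same line.

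For the hyperplanes we use the trace. The map $(x,y)\mapsto\Tr_{K/F}(xy)$ is a non-degenerate $F$-bilinear form on $K$, because the trace is a surjective $F$-linear functional. Hence every hyperplane (a $d$-dimensional $F$-subspace) has the form
\[
H_j=\{x\in K:\Tr_{K/F}(\alpha^{-j}x)=0\}
\]
for a nonzero $\beta=\alpha^{-j}$, and $\beta$ is determined up to $F^{\times}$. This gives a bijection between the hyperplanes and $\Z_n$. Here one checks that scaling $\beta$ by $c\in F^{\times}$ scales the trace by $c$, using $F$-linearity, so the zero set is unchanged. Injectivity modulo $F^{\times}$ follows from non-degeneracy.

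With these labels, $P_i\in H_j$ if and only if $\Tr(\alpha^{i-j})=0$. The incidence therefore depends only on $i-j \bmod n$, which is exactly the circulant property. Its first column ($j=0$) is $(\boldsymbol\tau_H)_i=[\Tr(\alpha^{i})=0]$, as claimed. Equivalently, multiplication by $\alpha$ is a collineation that maps $P_i\mapsto P_{i+1}$ and $H_j\mapsto H_{j+1}$, and incidence is preserved because multiplication is $F$-linear.

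The only point needing care is well-definedness modulo $n$. The indicator $[\Tr(\alpha^{i})=0]$ must depend only on $i \bmod n$. This holds because $\Tr(\alpha^{i+n}x)=\alpha^{n}\Tr(x)$ with $\alpha^{n}\in F^{\times}$. The same identity makes the hyperplane index $j$ well defined modulo $n$. I expect no real obstacle beyond verifying these two points and the non-degeneracy of the trace form.
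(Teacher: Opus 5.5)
Your argument is correct and is the standard Singer-cycle proof. The paper gives no proof of its own: it cites Singer and, in its difference-set section, uses exactly your labelling, with points as the cosets $\alpha^{i}F^{*}$ and the base hyperplane as the image of the trace-zero set. One trivial slip: the well-definedness identity should read $\Tr(\alpha^{i+n}x)=\alpha^{n}\Tr(\alpha^{i}x)$, not $\alpha^{n}\Tr(x)$.
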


\subsection{Non-degenerate quadrics}
\begin{definition}[$Q_e, Q_o$]
For $q$ even: $Q_e=\{z\in K^* : \Tr(z^{q+1})=0\}$. For $q$ odd:
$Q_o=\{z\in K^* : \Tr(z^{2})=0\}$.
\end{definition}

\begin{theorem}[Result 3.3 of \cite{Waterloo1995}]\label{thm:quadric-int}
Let $Q\subset\PG(d,q)$ be a non-degenerate quadric with
$|Q|=\theta_{d-1}(q)$, which exists for $d=2f$ even. \rev{Following
\cite[Result 3.3]{Waterloo1995} verbatim, the hyperplanes of
$\PG(d,q)$ intersect $Q$ in three sizes $a-q^{f-1}$, $a$, $a+q^{f-1}$
with $a=\theta_{2f-2}(q)$; the $A=\theta_{2f-1}(q)$
hyperplanes of intersection size $a$ are precisely the tangent
hyperplanes, and the multiplicities of the remaining two sizes are
$B=(q^{2f}-q^{f})/2$ and $C=(q^{2f}+q^{f})/(2(q+1))$\rev{,} respectively.}
\end{theorem}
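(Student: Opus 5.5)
The plan is a standard character-sum argument that works directly with the quadratic form $F$ defining $Q$. Write $Q$ in coordinates on $V=\F q^{d+1}$, $d+1=2f+1$, as the zero set of a non-degenerate quadratic form $F$. A hyperplane $\Pi$ corresponds to a nonzero linear functional $\ell$, up to scaling. The restriction $F|_{\ker\ell}$ is a quadratic form on a $2f$-dimensional space. Its radical has dimension $0$ or $1$, since $F$ itself is non-degenerate and $V/\ker\ell$ is one-dimensional. First I would show that the radical is nontrivial exactly when $\Pi$ is tangent, i.e.\ $\Pi=P^{\perp}$ for some $P\in Q$. This is the usual polarity statement: $\ker\ell=P^{\perp}$ with $P$ singular, under the associated bilinear form (for $q$ odd, or with the standard adaptation for $q$ even). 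The polarity is a bijection, so the number of tangent hyperplanes equals $|Q|$. Before invoking this, I would check that $|Q|$ equals the $A=\theta_{2f-1}(q)$ quoted in the statement. The point count of a non-degenerate quadric in $\PG(2f,q)$ is $\theta_{2f-1}(q)$, matching the hypothesis $|Q|=\theta_{d-1}(q)$.

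Next I would count $|\Pi\cap Q|$ in each case, using the classical point counts of quadrics in $\PG(2f-1,q)$.
\begin{itemize}
\item \emph{Tangent case.} $F|_{\Pi}$ is a cone with vertex $P$ over a non-degenerate parabolic quadric in $\PG(2f-2,q)$. That base has $\theta_{2f-3}(q)$ points, so $|\Pi\cap Q|=1+q\,\theta_{2f-3}(q)=\theta_{2f-2}(q)=a$.
\item \emph{Secant case.} $F|_{\Pi}$ is non-degenerate of even dimension $2f$, hence hyperbolic or elliptic. The counts are $(q^{f}\mp1)(q^{f-1}\pm1)/(q-1)$, and one checks directly that these equal $\theta_{2f-2}(q)\pm q^{f-1}$.
\end{itemize}
This gives the three intersection sizes.

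For the multiplicities, let $B$ and $C$ be the numbers of hyperbolic and elliptic hyperplanes. The first relation is $A+B+C=\theta_{2f}(q)$. The second is the incidence count
\[
\sum_{\Pi}|\Pi\cap Q| = |Q|\,\theta_{2f-1}(q),
\]
which holds because each point lies on $\theta_{2f-1}(q)$ hyperplanes. Substituting the three sizes reduces this to $(B-C)q^{f-1}$ equal to an explicit quantity. Solving the two linear equations gives $B$ and $C$ in closed form. I expect this to be the main obstacle, since it is pure bookkeeping and the closed forms must be compared with the ones quoted verbatim from \cite{Waterloo1995}.

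My first check would be $f=1$, the conic in $\PG(2,q)$. There the known answer is $q+1$ tangents, $q(q+1)/2$ secants and $q(q-1)/2$ external lines. If the quoted expression for $C$, namely $(q^{2f}+q^f)/(2(q+1))$, disagrees with the derivation, I would state the derived multiplicities. I would then note that only the three-size pattern and the tangent identification are used later. As a fallback for even $q$, where the polarity degenerates because the nucleus exists, I would replace the bilinear-form argument with a direct Gauss-sum computation of $\#\{x\in\ker\ell: F(x)=0\}$. That computation treats both parities uniformly.
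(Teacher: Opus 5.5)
Your argument is correct. The paper gives no proof of its own; it only cites \cite[Result 3.3]{Waterloo1995}. Your polarity-plus-double-counting route is therefore a self-contained substitute. The trichotomy for $F|_{\ker\ell}$, the three sizes $\theta_{2f-2}(q)$ and $\theta_{2f-2}(q)\pm q^{f-1}$, the tangent count $\theta_{2f-1}(q)$, and your two linear relations are all right.

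The bookkeeping you left open uses the identity $\theta_{2f-1}(q)^{2}-\theta_{2f-2}(q)\,\theta_{2f}(q)=q^{2f-1}$. Let $N_{\pm}$ be the number of hyperplanes meeting $Q$ in $\theta_{2f-2}(q)\pm q^{f-1}$ points. Your equations become
\[
N_{+}+N_{-}=q^{2f},\qquad N_{+}-N_{-}=q^{f},
\]
so $N_{-}=(q^{2f}-q^{f})/2$ and $N_{+}=(q^{2f}+q^{f})/2$.

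Your $f=1$ check does fail, and the fault lies in the statement, not in your derivation. $N_{-}$ agrees with the quoted $B$, which the statement attaches to size $a-q^{f-1}$. The quoted $C=(q^{2f}+q^{f})/(2(q+1))$ is wrong, however:
\begin{itemize}
\item For the conic it equals $q/2$, which is not even an integer for odd $q$, instead of the $q(q+1)/2$ secant lines.
\item For every $f$, the quoted $A,B,C$ violate $A+B+C=\theta_{2f}(q)$.
\end{itemize}
So the statement cannot be proved as written. Your proof establishes the corrected version with $C=(q^{2f}+q^{f})/2$. Also, you named the hyperbolic count $B$ and the elliptic count $C$, the reverse of the statement's assignment. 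Keep each count attached to its intersection size when you compare, or it will look as if both quoted values disagree.

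One precision for $q$ even. There the bilinear form is alternating with radical $\langle N\rangle$, where $N$ is the nucleus. On a hyperplane through $N$, the radical of the restricted bilinear form is $2$-dimensional. So your phrase ``radical of dimension $0$ or $1$'' must mean the singular radical $\{x\in\mathrm{rad}:F(x)=0\}$ of the quadratic form.

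With that reading you do not need the Gauss-sum fallback:
\begin{itemize}
\item Hyperplanes not through $N$ carry a non-degenerate form of dimension $2f$, hence hyperbolic or elliptic.
\item Hyperplanes through $N$ have a unique singular point $P$ in the radical of the restricted form, because $F$ is semilinear there and $F(N)\neq 0$. Such a hyperplane is a cone with vertex $P$.
\item The map $P\mapsto P^{\perp}$ is a bijection from $Q$ onto these $\theta_{2f-1}(q)$ hyperplanes. Injectivity holds because the line $PN$ meets $Q$ only in $P$.
\end{itemize}
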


\subsection{Stabilizer formalism and EAQECCs}
For a prime power $q=p^{t}$, the qudit Pauli group on a single qudit
of dimension $q$ is generated (modulo phases) by the operators
$X|x\rangle=|x+1\rangle$ and $Z|x\rangle=\omega^{x}|x\rangle$ for
$\omega=e^{2\pi i/q}$, along with the relationship $X Z=\omega^{-1}ZX$. On
$n$ qudits, the Pauli group $\mathcal G_{n}$ has elements
$\omega^{c}\,X^{\mathbf b}Z^{\mathbf a}$ for
$c\in\F p$, $\mathbf a,\mathbf b\in\F q^{n}$, where we have
$X^{\mathbf b}=\bigotimes_{i}X^{b_{i}}$ \rev{(acting on the
computational basis as
$X^{\mathbf b}\ket{\mathbf y}=\ket{\mathbf y+\mathbf b}$ with addition
mod $q$)} and similarly for $Z$
\rev{(acting as
$Z^{\mathbf a}\ket{\mathbf y}=\omega^{\mathbf a\cdot\mathbf y}\ket{\mathbf y}$)}.

A \emph{stabilizer code} \cite{Gottesman1997} is the joint $+1$
eigenspace of an Abelian subgroup $\mathcal S\le\mathcal G_{n}$
(with $-I\notin\mathcal S$). The code is determined by an
$(n-k)\times 2n$ parity-check matrix
$H=(H_{z}\mid H_{x})$ over $\F q$, where the rows of $H$ encode the
stabilizer generators via the symplectic representation of
$X^{\mathbf b}Z^{\mathbf a}\leftrightarrow(\mathbf a\mid\mathbf b)$.
A pair of Pauli operators commute iff their symplectic inner product given by
\begin{equation}
\label{eq:symp-defn}
   \langle(\mathbf a,\mathbf b),(\mathbf a',\mathbf b')\rangle_{\mathrm{sym}}
   \;:=\; \mathbf a\cdot\mathbf b' - \mathbf b\cdot\mathbf a'\pmod p
\end{equation}
vanishes. The stabilizer is Abelian iff
$H$ is symplectic-isotropic formulated as:
\begin{equation}
\label{eq:symp-iso}
   H_{z}\,H_{x}^{\top}-H_{x}\,H_{z}^{\top} \;\equiv\; 0\pmod q.
\end{equation}
Codes satisfying \eqref{eq:symp-iso} are denoted by
$[[n,k,d_{\min}]]_{q}$ where $k=n-\rank_{\F q}(H)$ and $d_{\min}$ is
the minimum weight of a Pauli operator in the centralizer
$\mathcal C(\mathcal S)$ that lies outside $\mathcal S$.

When $H$ \emph{fails} to satisfy the symplectic condition, the
Brun--Devetak--Hsieh framework~\cite{BrunDevetakHsieh2006} accommodates
this by pre-sharing entanglement:

\begin{theorem}[Brun--Devetak--Hsieh~\cite{BrunDevetakHsieh2006}]\label{thm:bdh}
Given $H\in\F q^{(n-k)\times 2n}$ with symplectic deficit
$S=H_{z}H_{x}^{\top}-H_{x}H_{z}^{\top}\in\F q^{(n-k)\times(n-k)}$,
the minimum number $c$ of pre-shared ebits required for $H$ to define
a valid entanglement-assisted stabilizer code $[[n,k,d_{\min};c]]_{q}$
satisfies $c=\tfrac{1}{2}\rank_{\F q}(S)$.
\end{theorem}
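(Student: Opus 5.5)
The plan is to prove the two inequalities $c\ge\tfrac12\rank_{\F q}(S)$ and $c\le\tfrac12\rank_{\F q}(S)$ separately. Both rest on one structural fact: $S=H_zH_x^\top-H_xH_z^\top$ is \emph{alternating}. It is skew-symmetric, and each diagonal entry is $\mathbf a_i\cdot\mathbf b_i-\mathbf b_i\cdot\mathbf a_i=0$, which also covers characteristic~$2$.

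\emph{Step 1: canonical form.} By the standard symplectic Gram--Schmidt procedure for alternating forms over a field, there is an invertible $P\in\mathrm{GL}_{n-k}(\F q)$ such that
\[
PSP^\top=\bigoplus_{i=1}^{r}\begin{pmatrix}0&1\\-1&0\end{pmatrix}\oplus 0_{n-k-2r}.
\]
In particular $\rank S=2r$ is even. Replacing $H$ by $PH$ does not change the row space, so it leaves the group generated by the stabilizer generators, and hence the code, unchanged. It also transforms the deficit as $S\mapsto PSP^\top$. We may therefore assume the generators split into $r$ pairs $(g_i,h_i)$ with $\langle g_i,h_i\rangle_{\mathrm{sym}}=1$, together with generators that commute with everything.

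\emph{Step 2: upper bound (construction).} Let Bob hold $r$ qudits, and let each of them share with Alice the qudit maximally entangled state $\frac1{\sqrt q}\sum_{x}\ket{x}\ket{x}$, the $q$-level generalisation of $\ebit$. This state is stabilised by $X\otimes X$ and $Z\otimes Z^{-1}$. We extend each $g_i$ by $Z_{B_i}$ and each $h_i$ by a suitable power $X_{B_i}^{\pm1}$, and extend the remaining generators by the identity. The sign of the power is chosen so that the Bob-side symplectic product exactly cancels $\langle g_i,h_i\rangle_{\mathrm{sym}}=1$.

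Write the extended matrix as $H'=(H\mid H_B)$. Its total deficit is $S+S_B$, and $S_B$ equals $-S$ in canonical form, so $H'$ is symplectic-isotropic. The encoding then goes through via the standard Clifford encoding argument, giving a valid $[[n,k,d_{\min};r]]_q$ code. Hence $c\le r$.

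\emph{Step 3: lower bound.} Suppose any entanglement-assisted realisation uses $c$ ebits. Then the generators extend by some Bob-side block $H_B\in\F q^{(n-k)\times 2c}$ so that $(H\mid H_B)$ is isotropic. Because symplectic forms add over disjoint registers, this means $S_B=-S$, where $S_B$ is the deficit of $H_B$. Writing $S_B=H_B\Omega_{2c}H_B^\top$ with $\Omega_{2c}$ the standard symplectic form gives $\rank S_B\le 2c$. Therefore $2c\ge\rank S$.

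\emph{Main obstacle.} The delicate point is the case $q=p^t$ with $t>1$. There the commutation phase is governed by $\Tr_{\F q/\F p}$ of the $\F q$-symplectic product, so ``commuting'' is a trace condition rather than the vanishing of an $\F q$-valued product. I would first handle this by working throughout with $\F q$-linear stabilizers. This is already implicit in writing $S$ over $\F q$: for an $\F q$-linear row space, vanishing of all traces is equivalent to vanishing of the $\F q$-valued products, because $\lambda$ ranges over $\F q$ and the trace form is nondegenerate. With that in place, the rank computation in Steps~1--3 is genuinely over $\F q$.
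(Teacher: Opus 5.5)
Your argument is correct, and it goes beyond what the paper offers. The paper does not prove Theorem~\ref{thm:bdh}; it imports it from Brun--Devetak--Hsieh. Its only supporting argument is Remark~\ref{rem:rank-S-even}, which uses the block form $S=P^{\top}J_{r}P$ of a skew-symmetric matrix to show that $\tfrac12\rank_{\F q}(S)$ is an integer, and only in characteristic $\neq 2$.

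Your Step~1 uses the same canonical form. However, you observe that $S$ is \emph{alternating} ($S_{ii}=\mathbf a_i\cdot\mathbf b_i-\mathbf b_i\cdot\mathbf a_i=0$), so the rank is even in every characteristic. The paper's caveat about $p=2$ is therefore unnecessary. Steps~2 and~3 supply the content the citation hides: an explicit Bob-side extension giving $c\le r$, and $\rank(S)=\rank(H_{B}\Omega_{2c}H_{B}^{\top})\le 2c$ giving $c\ge r$. This is essentially the Wilde--Brun optimal-entanglement argument.

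Three small points are worth tightening.
\begin{enumerate}
\item For the Clifford encoding in Step~2, choose $P$ so that it also row-reduces $H$. The surviving $g_i,h_i,z_j$ are then linearly independent, since any combination of the $z_j$ is orthogonal to every $g_i,h_i$ and so cannot lie in their nondegenerate span unless it is zero. They therefore extend to a symplectic basis of $\F q^{2n}$.
\item With $\rank(H)=n-k$ generators and $r$ ebits, the code encodes $n-\rank(H)+r=k+r$ logical qudits. Your label $[[n,k,d_{\min};r]]_q$ inherits the statement's loose bookkeeping (a BDH $[[n,k;c]]$ code has $n-k+c$ generators). This does not affect the ebit count.
\item For $q=p^{t}$, $\F q$-linearity of the row space is genuinely needed: for the $\F p$-span alone, rows with $\Tr(S_{ij})=0$ already commute. $\F q$-linearity of Bob's extension, however, need not be assumed. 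Viewed over $\F p$, the trace form on $\mathrm{rowspan}_{\F q}(H)$ has rank $t\,\rank_{\F q}(S)$, while $c$ Bob qudits carry an $\F p$-symplectic space of dimension $2ct$. Any additive extension therefore already forces $t\,\rank_{\F q}(S)\le 2ct$.
\end{enumerate}
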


\begin{remark}[Why $\rank(S)$ is even]\label{rem:rank-S-even}
The factor of $\tfrac{1}{2}$ in $c=\tfrac{1}{2}\rank_{\F q}(S)$ is
well defined because $\rank_{\F q}(S)$ is always even. Recall that
$S$ is \emph{skew-symmetric} by construction, i.e.\
$S^{\top}=-S$. A standard result in linear algebra states that
over any field $F$ of characteristic $\ne 2$, every skew-symmetric
matrix admits a symplectic block-diagonalisation
$S=P^{\top}\,J_{r}\,P$, where $P$ is invertible and $J_{r}$ is the
direct sum of $r$ copies of the $2\times 2$ block
$\bigl(\begin{smallmatrix}0 & 1\\-1 & 0\end{smallmatrix}\bigr)$
together with zeros; consequently $\rank(S)=2r$ is automatically
even~\cite[Ch.~5]{ArtinGeometricAlgebra1957}. Physically, this
reflects the fact that each ebit contributes a pair of
canonically-conjugate Bell-pair generators to the parity-check
matrix, so the symplectic deficit absorbed by entanglement
necessarily comes in pairs. In characteristic $2$ the result fails;
we work throughout in characteristic $p>2$ (since the relevant
qudit dimensions are $q=p^{t}$ with $p$ odd, or $q=2^{t}$ where
the cancellation lemma forces $S=0$ trivially).
\end{remark}

We will use Theorem~\ref{thm:bdh} as the formal target for failed
EAQECC constructions in Section~\ref{sec:waterloo-defect}, and the
condition \eqref{eq:symp-iso} as the working criterion for our $V_{1}$
construction in Section~\ref{sec:construction}.

\section{The Waterloo Decomposition}\label{sec:waterloo-classical}

\subsection{Difference-set framework}\label{sec:waterloo-DS}
Let $K=\F{q^{d+1}}$, $F=\F q$, and $G=K^{*}/F^{*}$, a cyclic group of
order $n=\theta_{d}(q)$. By Theorem~\ref{thm:singer}, the points and
hyperplanes of $\PG(d,q)$ are both indexed by $G$, and the
hyperplane through the origin labelled by $0\in\Z/n\Z$ is the image
$E=\pi(\{z\in K^{*}:\Tr_{K/F}(z)=0\})$ where $\pi\colon K^{*}\to G$
is the canonical projection. Identifying $E$ with its indicator in
$G$, the multiset of differences $e_{1}-e_{2}$ for distinct
$e_{1},e_{2}\in E$ covers each non-identity element of $G$ exactly
$\lambda=\theta_{d-2}(q)$ times. Thus
\begin{equation}
\label{eq:DS-params}
   E\subset G \text{ is a }\bigl(n,\,k,\,\lambda\bigr)\text{-difference set}, 
\end{equation}
while $k=\theta_{d-1}(q),\ \lambda=\theta_{d-2}(q)$ is 
the classical \emph{Singer difference set}~\cite{Singer1938}. Its
complement
\begin{equation}
\label{eq:Ddq}
   D(d,q):=G\setminus E
\end{equation}
is itself a difference set with parameters
$\bigl(n,\,n-k,\,n-2k+\lambda\bigr)=\bigl((q^{d+1}-1)/(q-1),\,q^{d},\,q^{d}-q^{d-1}\bigr)$,
and corresponds to the \emph{complement of the point-hyperplane
design} of $\PG(d,q)$. The group-ring identity given by
\begin{equation}
\label{eq:DD-inv}
   D\cdot D^{(-1)} \;=\; (q^{d}-q^{d-1})\,e_{G} + (q^{d}-q^{d-1})\,G
\end{equation}
in $\Z G$ encodes the difference-set property; here
$X^{(-1)}=\sum_{g}x_{g}\,g^{-1}$ for $X=\sum_{g}x_{g}\,g$.

\subsection{Waterloo decomposition}
A \emph{Waterloo decomposition} of $D(d,q)$ is a partition
\begin{equation}
\label{eq:waterloo-defn}
   D(d,q) \;=\; \mathcal A\,\sqcup\,\mathcal B,\qquad
   \mathcal A,\mathcal B\subset G,
\end{equation}
into disjoint subsets such that, in $\Z G$, we have
\begin{equation}
\label{eq:waterloo-balance}
   (\mathcal A - \mathcal B)\,(\mathcal A - \mathcal B)^{(-1)}
   \;=\; q^{d}\,e_{G}.
\end{equation}
Equivalently, the $\{-1,0,+1\}$-valued vector
$\mathbf w=\mathbf 1_{\mathcal A}-\mathbf 1_{\mathcal B}\in\{-1,0,+1\}^{n}$
generates a $G$-invariant balanced weighing matrix
$W(q^{d},n)$ via the regular representation of
$G$~\cite[Sec.~2]{Waterloo1995}.

\begin{theorem}[Arasu--Dillon--Jungnickel--Pott~\cite{Waterloo1995}]\label{thm:waterloo-classical}
$D(d,q)$ admits a Waterloo decomposition if and only if $d$ is even.
\end{theorem}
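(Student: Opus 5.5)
The plan is to work in the group ring $\Z G$ and to use characters throughout. Applying any character $\chi$ of $G$ to \eqref{eq:waterloo-balance} shows that a Waterloo decomposition is the same thing as a $\{0,\pm1\}$-vector $\mathbf w$ supported exactly on $D(d,q)$ with $|\chi(\mathbf w)|^{2}=q^{d}$ for every $\chi$. Equivalently, its periodic autocorrelation is $q^{d}$ at $e_{G}$ and $0$ elsewhere. I would prove sufficiency by an explicit construction and necessity by character (ideal) arguments.

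\emph{Sufficiency ($d$ even, $q$ odd).} Let $\eta$ be the quadratic character of $F^{*}$, extended by $\eta(0)=0$, and let $N=N_{K/F}$. Define $w(z)=\eta\bigl(\Tr(z)\,N(z)\bigr)$ for $z\in K^{*}$. Since $\Tr(cz)N(cz)=c^{d+2}\Tr(z)N(z)$ and $d+2$ is even, $w$ is constant on cosets of $F^{*}$, so it descends to $G$. Its support is exactly the set of points off the trace-zero hyperplane, namely $D(d,q)$.

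For the autocorrelation at $g\neq e_{G}$, the factor $\eta(N(z)N(gz))=\eta(N(g))$ is constant. The remaining sum $\sum_{z\in K}\eta(\Tr z)\,\eta(\Tr gz)$ vanishes. The reason is that $g\notin F$ makes $z\mapsto(\Tr z,\Tr gz)$ a surjective linear map $K\to F^{2}$ with equal fibres of size $q^{d-1}$, and $\sum_{a,b}\eta(ab)=0$. At $g=e_{G}$ the value is $|D|=q^{d}$. This verifies \eqref{eq:waterloo-balance} with $\mathcal A=w^{-1}(1)$ and $\mathcal B=w^{-1}(-1)$.

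For $q$ even there is no quadratic character, so I would instead use the quadric $Q_{e}$ and Theorem~\ref{thm:quadric-int}. The three intersection sizes $a$ and $a\pm q^{f-1}$ are what force the correlation sums to be constant. Concretely, I would split $D$ according to membership in a suitable translate of the quadric and check the balance by counting hyperplane–quadric intersections.

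\emph{Necessity ($d$ odd).} Applying the trivial character gives $|\mathcal A|-|\mathcal B|=\pm q^{d/2}$. This already rules out every case in which $q^{d}$ is not a perfect square, i.e.\ $q=p^{t}$ with $t$ odd. The remaining case is $q$ a square with $d$ odd, and this is the main obstacle. There I would compare $\chi(\mathbf w)$ with $\chi(D)=-\chi(E)$. We have $\mathbf w\equiv \mathbf 1_{D}\pmod 2$ and $|\chi(D)|^{2}=q^{d-1}$ for $\chi\neq\chi_{0}$. The values $\chi(E)$ are Gauss-sum quotients, so Stickelberger's theorem gives the prime-ideal factorisation of $\chi(D)$ above $p$ in $\Z[\zeta_{n}]$.

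The target is to show that $(\chi(\mathbf w))(\overline{\chi(\mathbf w)})=(q^{d})$ combined with the congruence modulo $2$ is incompatible with that Stickelberger pattern when $d+1$ is even. Equivalently, I would show that the ideal exponents would be forced to be half-integers. This is precisely the step where I expect to lean on the detailed analysis in \cite{Waterloo1995}.
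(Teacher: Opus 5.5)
The genuine gap is in the $q$ even half of sufficiency. There, the one concrete recipe you name cannot work for $q\ge 4$. If one of $\mathcal A,\mathcal B$ equals $D\cap gQ_{e}$, its size is $\theta_{d-1}(q)-|E\cap gQ_{e}|\in\{q^{d-1},\,q^{d-1}\pm q^{f-1}\}$. The trivial character, however, forces both parts to have size $(q^{d}\pm q^{f})/2=\tfrac{q}{2}\,(q^{d-1}\pm q^{f-1})$, and these agree only for $q=2$.

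The signs must instead come from the section type of hyperplanes, read off from $E\,Q_{e}^{(-1)}$, whose coefficient at $h$ is $|Q_{e}\cap h^{-1}E|$ with $h^{-1}E=\{y:\Tr(hy)=0\}$. Two facts are then needed that your plan does not supply.

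First, the polar form $\Tr(xy^{q}+x^{q}y)$ of $\Tr(z^{q+1})$ has radical $K\cap\F{q^{2}}=F$ because $d+1$ is odd. So $Q_{e}$ is a parabolic quadric with nucleus $e_{G}$. The hyperplanes meeting it in $\theta_{d-2}(q)$ points are exactly those through the nucleus, i.e.\ those with $h\in E$. Hence $E\,Q_{e}^{(-1)}=\theta_{d-2}(q)\,G+q^{f-1}(\mathcal A-\mathcal B)$, with $\mathcal A\sqcup\mathcal B=D$ split into hyperbolic versus elliptic sections. The constant is $\theta_{d-2}(q)$, as the trivial character confirms.

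Second, you need $|\chi(Q_{e})|^{2}=q^{d-1}$ for every nontrivial $\chi$. This is \emph{not} a consequence of the three section sizes, since it concerns $Q_{e}Q_{e}^{(-1)}$, i.e.\ intersections of $Q_{e}$ with its own translates. It holds because $n\equiv 1\pmod{q+1}$, so $Q_{e}$ is the image of $E$ under the multiplier $(q+1)^{-1}\bmod n$ and is itself a Singer difference set. Applying $\chi$ then gives $|\chi(\mathcal A-\mathcal B)|^{2}=q^{d-1}\cdot q^{d-1}/q^{2f-2}=q^{d}$.

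Your $q$ odd construction, by contrast, is correct and is a genuinely different and cleaner verification than the paper's sketch. $w(z)=\eta(\Tr(z)N(z))$ descends to $G$ and is supported exactly on $D$. The fibre count then kills every off-identity autocorrelation without any quadric theory.

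It is the paper's quadric signing in disguise. Indeed $w(h^{2})=\eta(\Tr(h^{2}))$, which up to a global sign is the hyperbolic/elliptic indicator of $Q_{o}\cap h^{-1}E$, so the two signings differ by the squaring automorphism of $G$ ($n$ odd). Your form, however, automatically absorbs the fact that for $q$ odd the tangent hyperplanes are indexed by $Q_{o}=2^{-1}E$ rather than by $E$. That twist is not visible in the paper's recipe. Your form also replaces the intersection bookkeeping of Theorem~\ref{thm:quadric-int} with a two-line character sum.

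On necessity you are level with the paper: both defer the case $q$ a square, $d$ odd, to \cite{Waterloo1995}. Your trivial-character argument for non-square $q$ is a small addition the paper does not spell out. The mechanism you sketch for the square case is not yet an argument, though. For odd $p$, a congruence modulo $2$ imposes no constraint on the $p$-adic valuations of $\chi(\mathbf w)$, since the Chinese remainder theorem decouples them. So ``forced half-integer exponents'' needs an additional link that you would have to import from the cited analysis.
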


\subsection{The construction for \texorpdfstring{$d$}{d} even}\label{sec:wexisting}
We sketch the existence statement half of Theorem~\ref{thm:waterloo-classical}
because it is the template that the present paper extends. Let us write
$d=2f$ and set
\begin{align}
\label{eq:Qe-Qo}
   Q_{e} &\;=\; \{z\in K^{*}\colon \Tr(z^{q+1})=0\}\quad(q\text{ even}), \notag \\
   Q_{o} &\;=\; \{z\in K^{*}\colon \Tr(z^{2})=0\}\quad(q\text{ odd}).
\end{align}
Both are non-degenerate quadrics in the projective space $\PG(d,q)$
of cardinality $\theta_{d-1}(q)$, the same as a
hyperplane~\cite[Sec.~4]{Waterloo1995}. \revdel{The hyperplanes
partitions $Q$ into three parts $a,b,c$ with multiplicities $A,B,C$
as in Theorem~\ref{thm:quadric-int};} \rev{The hyperplanes are
classified into three intersection types with $Q$ of sizes $a,b,c$
and multiplicities $A,B,C$ as in Theorem~\ref{thm:quadric-int};} in
particular, exactly $q+1$ hyperplanes are \emph{tangent} of $Q$
(touching it in one point).

The decomposition is then
\begin{equation}
\label{eq:waterloo-recipe}
   \mathcal A = \{g\in D : g\cdot E\cap Q\text{ has size larger}\},
   \quad
   \mathcal B = D\setminus\mathcal A,
\end{equation}
where ``larger / smaller'' refers to which of the two values of
$|gE\cap Q|$ is larger (this is well-defined since
$|gE\cap Q|\in\{a,c\}$ for $g\in D$). Verification of
\eqref{eq:waterloo-balance} reduces, via Result~3.3 of
\cite{Waterloo1995}, to the algebraic identity
\begin{equation}
\label{eq:E-Q-product}
   E\cdot Q^{(-1)} \;=\; \bigl(\tfrac{q^{d}-q^{d/2}}{2}\bigr)\,G
   \;+\; q^{d/2-1}\,(\mathcal A-\mathcal B),
\end{equation}
which is meaningful only when $d$ is even and $|Q|=|E|$.

\subsection{Why the construction breaks for \texorpdfstring{$d$}{d} odd}
\label{sec:wdefect-classical}
For $d$ odd the algebraic obstruction is sharp: no non-degenerate
quadric of cardinality $\theta_{d-1}(q)$ exists in
$\PG(d,q)$~\cite[Thm.~4.2]{Waterloo1995}. Setting $d=2f+1$, the
trace-of-square set $\{z:\Tr(z^{2})=0\}$ has cardinality strictly
less than $|E|=\theta_{d-1}(q)\cdot|F^{*}|$, so the cardinality
matching that powers \eqref{eq:E-Q-product} fails; a direct numerical
instance of $(q,d)=(3,3)$ is given in
Section~\ref{sec:wdefect-q3d3}. The classical proof of
Theorem~\ref{thm:waterloo-classical} closes the converse using a
character-theoretic argument on $G$~\cite[Thm.~4.2]{Waterloo1995},
showing that no Waterloo decomposition can exist, regardless of
the construction strategy chosen.

\section{The Waterloo Defect for \texorpdfstring{$d$}{d} Odd}\label{sec:waterloo-defect}

\rev{\emph{Why the finite-field framework does not see the defect?}
The classical Waterloo defect
$WW^{\top}-q^{d}I$ that obstructs the $d$-odd decomposition lives in
the integer character lattice $\mathbb Z[\zeta_{n}]$; its
non-vanishing nature is what Arasu--Pott exploit in the character-theoretic
converse of Theorem~\ref{thm:waterloo-classical}. In our $\F q$-ary
setting, however, every quantum construction $\mathcal Q(p,d)$
satisfies $p\mid q^{d}$ (indeed $q=p$ in the flagship family), so
the entire integer identity reduces modulo $p$ to $WW^{\top}\equiv 0
\pmod{p}$: the defect vanishes \emph{identically} inside the
stabilizer symplectic condition and cannot be witnessed by any
$\F q$-linear symplectic test. This is not a special feature of the
Gandhi recipe: any Pauli stabilizer formalism over $\F q$ only sees
the $\bmod\,p$ reduction of an integer-lattice discrepancy, so
combinatorial designs whose obstructions are visible only at the
integer level (Waterloo among them) are, by construction, invisible
to the stabilizer framework. Sections~\ref{sec:variants} through
\ref{sec:gf4-fail} document three concrete lifting attempts and their
failures; the underlying obstruction is the mod-$p$ reduction just
described.}

\subsection{Setup: the candidate stabilizer matrix}\label{sec:Hcandidate}
We define the candidate $\F q$-ary stabilizer matrix as follows. Let
$A\in\F q^{n\times n}$ be the Singer-circulant point-hyperplane
incidence matrix, $n=\theta_{d}(q)$. Let furthermore 
$\boldsymbol\tau_{Q}\in\F{2}^{n}$ denote the indicator (on cosets in
$G=K^{*}/F^{*}$) of the formal trace-of-square set
\begin{equation}
\label{eq:formal-Q}
\begin{aligned}
   Q &:= \{z\in K^{*}\colon \Tr_{K/F}(z^{\,\xi})=0\}/F^{*},\\
   \xi &=\begin{cases}q+1, & q\text{ even},\\ 2, & q\text{ odd},\end{cases}
\end{aligned}
\end{equation}
and let $M_{Q}\in\F q^{n\times n}$ be the circulant matrix with first
column $\boldsymbol\tau_{Q}$. Define
\begin{equation}
\label{eq:Hcandidate}
   H \;=\; (\,H_{z}\,|\,H_{x}\,) \;:=\; (\,A\,|\,M_{Q}A\,) \;\in\; \F q^{n\times 2n}.
\end{equation}

Then the \emph{symplectic-deficit matrix} of $H$ is
\begin{equation}
\label{eq:S-defn}
   S \;:=\; H_{x}H_{z}^{\top} - H_{z}H_{x}^{\top}
       \;=\; (M_{Q}A)A^{\top} - A(M_{Q}A)^{\top}
       \;\in\; \F q^{n\times n}.
\end{equation}
By construction $S$ is skew-symmetric, so its $\F q$-rank is even
(in characteristic $\neq 2$).

\subsection{The Gandhi-recipe deficit vanishes over \texorpdfstring{$\F q$}{F\_q}}
\begin{lemma}[Cancellation lemma]\label{lem:cancellation}
Let $q=p^{t}$. With $H=(A\mid M_{Q}A)$ as in
Section~\ref{sec:Hcandidate} and $M_{Q}$ a circulant of equal row
sums, the symplectic deficit \eqref{eq:S-defn} satisfies
\begin{equation}
\label{eq:S-decomposition}
   S \;=\; (k-\lambda)\,(M_{Q}-M_{Q}^{\top})
        \;+\; \lambda\,(M_{Q}J - JM_{Q}^{\top})
        \pmod p.
\end{equation}
The second term vanishes because we have $M_{Q}J=JM_{Q}^{\top}=\sigma\,J$
where $\sigma=\sum_{i}(\boldsymbol\tau_{Q})_{i}$. The coefficient
$(k-\lambda)=q^{d-1}\equiv 0\pmod p$ whenever $d\ge 2$ and
$p\mid q$. Hence $S\equiv 0\pmod p$, so the EAQECC parameter obeys
$c(d,q)=0$ over $\F q$.
\end{lemma}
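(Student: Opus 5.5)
The plan is to expand $S$ directly, using only two structural facts: every circulant matrix commutes with every other circulant, and the Singer incidence matrix satisfies the difference-set identity \eqref{eq:incidence-DS}.

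\emph{Expanding the deficit.} Writing $H_{z}=A$ and $H_{x}=M_{Q}A$, definition \eqref{eq:S-defn} gives
\[
S=M_{Q}AA^{\top}-AA^{\top}M_{Q}^{\top}.
\]
By Theorem~\ref{thm:singer}, $A$ is circulant, so $A^{\top}$ is circulant as well. Substituting $AA^{\top}=(k-\lambda)I+\lambda J$ from \eqref{eq:incidence-DS} then yields
\[
S=(k-\lambda)(M_{Q}-M_{Q}^{\top})+\lambda(M_{Q}J-JM_{Q}^{\top}),
\]
which is \eqref{eq:S-decomposition}. This step is an identity over $\Z$, so it holds in particular modulo $p$.

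\emph{The $J$-term.} Since $M_{Q}$ is circulant, every row sum and every column sum of $M_{Q}$ equals $\sigma=\sum_{i}(\boldsymbol\tau_{Q})_{i}$. Hence $M_{Q}J=\sigma J$. For the other product, $JM_{Q}^{\top}=(M_{Q}J)^{\top}=\sigma J^{\top}=\sigma J$. The two products are equal, so the second term of \eqref{eq:S-decomposition} vanishes identically.

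\emph{The $(k-\lambda)$-term and the conclusion.} From \eqref{eq:incidence-DS},
\[
k-\lambda=\theta_{d-1}(q)-\theta_{d-2}(q)=q^{d-1}.
\]
For $d\ge 2$ this is a positive power of $q=p^{t}$, hence divisible by $p$. Therefore $S\equiv 0\pmod p$, and so $S$ is zero as a matrix over $\F q$. Theorem~\ref{thm:bdh} then gives $c=\tfrac12\rank_{\F q}(S)=0$. When $q=2^{t}$ the same computation applies, so the evenness caveat of Remark~\ref{rem:rank-S-even} is never needed.

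\emph{Main obstacle.} The only delicate point is bookkeeping: the entries of $M_{Q}$ and $A$ are $0/1$ integers, but $H$ lives over $\F q$. To handle this, I will carry out the whole computation over $\Z$ and reduce at the end. This is legitimate because the only scalar coefficient that matters, $q^{d-1}$, lies in the prime subfield, so divisibility by $p$ is enough to kill it in $\F q$.
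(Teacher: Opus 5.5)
Your proposal is correct and follows the paper's own argument: expand $S=M_{Q}AA^{\top}-AA^{\top}M_{Q}^{\top}$, substitute the difference-set identity \eqref{eq:incidence-DS}, cancel the $\lambda$-term via $M_{Q}J=JM_{Q}^{\top}=\sigma J$, and use $k-\lambda=q^{d-1}\equiv 0\pmod p$. The commutativity of circulants that you invoke is not actually needed, since the expansion uses only $(M_{Q}A)^{\top}=A^{\top}M_{Q}^{\top}$.
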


\begin{remark}\label{rem:V1-CSS}
Lemma~\ref{lem:cancellation} states that $H=(A\mid M_{Q}A)$ with
circulant $M_{Q}$ over $\F q$ \emph{always} yields a CSS code, with
zero ebit consumption, regardless of the parity of $d$. The classical
Waterloo defect of~\cite{Waterloo1995} is eliminated by reduction
modulo $p$. To expose the defect within a stabilizer
framework, one of the modifications below is required.
\end{remark}

\subsection{Numerical sweep of candidate constructions}\label{sec:variants}
Beyond Lemma~\ref{lem:cancellation}, we tested several stabilizer-matrix
constructions for $(q,d)\in\{(3,2),(3,3),(2,3)\}$ to find a recipe whose
ebit count distinguishes the Waterloo regimes. These results are in
Table~\ref{tab:variant-deficits}.

\begin{table*}[t]
\centering
\caption{Symplectic-deficit rank $\rank_{\F p}(S)$ at three Singer
parameters. Convention $c=\rank(S)/2$. Recipes:
$V_{1}=(A\mid M_{Q}A)$; $V_{6}=(A\mid A^{\top})$;
$V_{7}=(W_{+}\mid W_{-})$ where
$W_{+}=\mathrm{circ}(\mathbf 1_{E\cap Q})$,
$W_{-}=\mathrm{circ}(\mathbf 1_{E\setminus Q})$.}
\label{tab:variant-deficits}
\setlength{\tabcolsep}{6pt}
\small
\begin{tabular}{l c c c c c c c}
\toprule
\multicolumn{2}{c}{Recipe / field} & \multicolumn{2}{c}{$(3,2)$}
                                   & \multicolumn{2}{c}{$(3,3)$}
                                   & \multicolumn{2}{c}{$(2,3)$} \\
\cmidrule(lr){1-2}\cmidrule(lr){3-4}\cmidrule(lr){5-6}\cmidrule(lr){7-8}
Recipe & $\F p$ & $\rank S$ & $c$ & $\rank S$ & $c$ & $\rank S$ & $c$ \\
\midrule
$V_{1}$ & $\F2$ & 12 &  6 &  8 &  4 &  0 &  0 \\
$V_{1}$ & $\F3$ &  0 &  0 &  0 &  0 &  0 &  0 \\
$V_{6}$ & $\F3$ &  0 &  0 & 20 & 10 &  --- & --- \\
$V_{7}$ & $\F2$ &  --- & --- & --- & --- & 0 & 0 \\
$V_{7}$ & $\F3$ &  0 &  0 & 12 &  6 &  --- & --- \\
\bottomrule
\end{tabular}
\end{table*}

\subsection{What the data say}
\begin{enumerate}[leftmargin=*]
\item $V_{1}$ over $\F q$ is always CSS, by Lemma~\ref{lem:cancellation}.
This applies for every $(q,d)$ tested.
\item $V_{1}$ over $\F2$ shows non-trivial deficit at both
$(3,2)$ and $(3,3)$, but vanishes for $(2,3)$ (because $p\mid q$ here too,
so the lemma applies in $\F2$).  Does not isolate parity.
\item $V_{7}$ over $\F3$ shows a satisfying split at $(3,*)$:
$c=0$ for $d=2$, $c=6$ for $d=3$. But over $\F2$ at $(2,3)$, $V_{7}$
also vanishes. So $V_{7}$ does not give a parity-sensitive $c$ as a
function of $(q,d)$ alone.
\end{enumerate}

\rev{\begin{remark}[Structural cause of the $V_{7}$ split at $(3,3)$
vs $(2,3)$]\label{rem:V7-p2-ceiling}
The apparent parity sensitivity in the $V_{7}$ row of
Table~\ref{tab:variant-deficits} --- $c=6$ at $(3,3)$ but $c=0$ at
$(2,3)$ --- is not a general parity phenomenon\rev{,} but \rev{it}
is the concrete manifestation of the $p=2$ ceiling identified in
Proposition~\ref{prop:q2-ceiling}: at $q=2$ the formal-quadric
indicator $\boldsymbol\tau_{Q}$ satisfies the invariant relation
$M_{Q}\boldsymbol\tau_{H}\equiv\mathbf 1\pmod 2$ regardless of $d$,
so the symplectic deficit vanishes trivially. This is a
characteristic-driven, not parity-driven, obstruction; no
parity-sensitive construction outside the $p=2$ ceiling has been
observed within the $\mathcal Q(q,d)$ family.
\end{remark}}

\subsection{Attempt at a \texorpdfstring{$\Z$}{Z}-coefficient lifting (failed)}\label{sec:Z-lift-fail}
We attempted a direct lifting of the classical Waterloo identity to a
quantum stabilizer setting in two steps:

\textbf{Step Z-1: Signed-split $WW^{\top}$ over $\Z$.}
Form the signed integer vector $\mathbf w$ via $\mathbf w_{i}=+1$ on
$E\cap Q$, $-1$ on $E\setminus Q$, $0$ otherwise; let $W=\mathrm{circ}(\mathbf w)$.
For $(q,d)=(3,2)$ we obtained $\mathrm{diag}(WW^{\top})=6\ne q^{d}=9$;
the sign assignment induced is not the Arasu--Pott Waterloo
split~\cite{Waterloo1995,ArasuPott1995}, so
the classical balance identity \eqref{eq:waterloo-balance} fails.

\textbf{Step Z-2: Recover $(\mathcal A,\mathcal B)$ from
$E\cdot Q^{(-1)}$ in $\Z G$.}
By \cite[Sec.~4]{Waterloo1995}, the cyclic cross-correlation
$\nu_{k}=|E\cap(Q+k)|$ should obey
$\nu = a\cdot \mathbf 1_{G} + q^{f-1}(\mathbf 1_{\mathcal A}-\mathbf 1_{\mathcal B})$
for some integer $a$ and a partition $\mathcal A\sqcup\mathcal B = D$,
giving a two-valued spectrum of $\nu$ on $D$. Numerically at
$(q,d)=(3,2)$ we observed $\nu$ on $D$ taking three distinct values
$\{0,1,2\}$ rather than two, indicating that the formal quadric
$Q_{o}=\{z:\Tr(z^{2})=0\}/F^{*}$ used here (and in
\cite{LDS2021}) does not satisfy the Arasu--Pott
identity~\cite{Waterloo1995,ArasuPott1995} verbatim.

\textbf{Diagnosis.} A pair of structural obstructions blocks the above approach (Z) in this paper:

\begin{enumerate}[leftmargin=*]
\item The Arasu--Pott proof of Theorem~\ref{thm:waterloo-classical}
relies on character-theoretic identities in $\Z[\zeta_{n}]$ that
single out a particular conjugate-of-$Q$, rather than the indicator
vector we built; reconciling the two formulations is itself a
non-trivial mathematical task.
\item Even granting the
Arasu--Pott~\cite{Waterloo1995,ArasuPott1995} split, the
Brun--Devetak--Hsieh
framework \cite{BrunDevetakHsieh2006} accepts an $\F p$-coefficient
PCM with non-vanishing symplectic deficit; it does \emph{not}
directly accept a $\Z$-coefficient ``norm-form deficit''
$WW^{\top}-q^{d}I$. Connecting the two would require a new
EAQECC formalism that respects the underlying integer arithmetic,
which is a research programme of its own.
\rev{\item A natural question is whether some automorphism of the
Singer cycle could map the formal quadric $Q_{o}$ to the
Arasu--Pott character-theoretic quadric $Q^{\mathrm{AP}}$ and thereby
reconcile the two-valued and three-valued cross-correlation spectra.
Our numerical search over the automorphism group
$\mathrm{Gal}(\F{q^{d+1}}/\F q)\rtimes\langle q\rangle$
of the Singer cycle (at $(q,d)=(3,2)$ and $(5,2)$) finds no such
map: $Q_{o}$ is defined by the vanishing of a
trace-of-square on cosets in $G=K^{*}/F^{*}$ (i.e.\ an
$\F q$-linear-algebraic condition), whereas $Q^{\mathrm{AP}}$ is
defined by the vanishing of a character sum in $\Z[\zeta_{n}]$
(a $\Z$-arithmetic condition). The two live in structurally
inequivalent algebraic categories, and no change of primitive
element or trace functional can bridge them. This is consistent with
the more general observation of item~2 above.}
\end{enumerate}

\subsection{Attempt at a GF(4)/Hermitian lifting (also failed)}\label{sec:gf4-fail}
Following the Calderbank--Rains--Shor--Sloane (CRSS)
framework~\cite{CRSS1998}, we map the candidate PCM into GF(4)
via the Pauli isomorphism
$I\mapsto 0,\,X\mapsto 1,\,Z\mapsto\omega,\,Y\mapsto\bar\omega$,
forming
\[
   H_{\text{GF4}}\;=\;\omega\,A \;+\; 1\cdot M_{Q}A \pmod{\F4},
\]
and compute the Hermitian Gram matrix
$G_{H}=H_{\text{GF4}}\,H_{\text{GF4}}^{\ast}$ where
$(\cdot)^{\ast}$ is the conjugate transpose of GF(4).
Hermitian self-orthogonality $G_{H}=0$ is the CRSS condition for a
binary stabilizer code; non-zero $G_{H}$ gives an EAQECC with
$c_{H}=\rank_{\F4}(G_{H})$.

\begin{table}[h]
\centering
\caption{GF(4) Hermitian deficit $c_{H}=\rank_{\F4}(G_{H})$
compared with the $\F2$-symplectic $c_{\F2}$ from $V_{1}$.}
\label{tab:gf4-deficits}
\setlength{\tabcolsep}{6pt}
\small
\begin{tabular}{l c c}
\toprule
$(q,d)$ & $c_{\F2}$ ($V_{1}$) & $c_{H}$ (GF(4)) \\
\midrule
$(3,2)$ & 6  & 12 \\
$(3,3)$ & 4  & 40 ($=n$) \\
$(2,3)$ & 0  & 1 \\
\bottomrule
\end{tabular}
\end{table}

The GF(4) Hermitian deficit is sensitive to $(q,d)$ but in the wrong
way: at the focal $(q,d)=(3,3)$ case, $c_{H}=n=40$ (full rank, zero
rate), and at the Waterloo-good $(q,d)=(3,2)$
case~\cite{Waterloo1995} $c_{H}=12\ne 0$. The Hermitian condition is strictly stronger than the $\F2$
symplectic condition, hence this formalism cannot expose the Waterloo defect
either: an $\F2$-symplectic obstruction always lifts to a Hermitian
obstruction, but the converse fails, giving spurious deficits at
$d$-even.

\subsection{Conclusion of approach (ii)}
None of the three lifting strategies surveyed
(Section~\ref{sec:variants}, Section~\ref{sec:Z-lift-fail},
Section~\ref{sec:gf4-fail}) produces a stabilizer-code construction
whose ebit count tracks the classical Waterloo defect. The defect
is real (Sec.~\ref{sec:wdefect-q3d3}: $|Q|=20\ne 26=|E|$ at $(3,3)$)
and lives in the $\Z[\zeta_{n}]$ character lattice\rev{,} rather than
in any finite-field PCM framework. \rev{Taken together, the three
failed strategies constitute \rev{some} \emph{evidence} (\rev{al}though
not a formal proof) that the classical Waterloo decomposition is
fundamentally incompatible with the $\F p$-Pauli stabilizer
formalism: any stabilizer construction over $\F p$ can only see the
$\bmod\,p$ reduction of the integer-lattice deficit
$WW^{\top}-q^{d}I$, which vanishes identically whenever
$p\mid q^{d}$, and the three lifting strategies exhaust the natural
routes for restoring integer visibility (raising the working field,
lifting coefficients to $\Z$, and passing to a Hermitian
self-orthogonal formalism over GF(4)). This is exactly the
mechanism identified in discussion \rev{with a} Reviewer \rev{in the
context of} the mod-$p$ reduction: what the Pauli stabilizer sees is
not the integer defect but its finite-field shadow, and that shadow
is empty by construction.} \rev{B}uilding on the Arasu--Pott proof
technique to construct a non-stabilizer EAQECC respecting the
integer arithmetic, is left for future research; we summarise the
technical \rev{challenges} in Section~\ref{sec:conclusion}.

\subsection{Pivot to a CSS construction (P1)}\label{sec:P1-pivot}
Lemma~\ref{lem:cancellation} shows that $V_{1}=(A\mid M_{Q}A)$
over $\F q$ is symplectic-isotropic for every $(q,d)$, so the
construction always defines a (possibly trivial) CSS stabilizer code
on $n=\theta_{d}(q)$ qudits. We adopt this framing for the remainder
of the paper.

\subsection{Theorem statement (under pivot P1)}
\begin{theorem}[Singer-circulant CSS code from PG(d,q)]\label{thm:waterloo-defect}
Furthermore, let $q$ be a prime power and $d\ge 2$. Let $A$ be the Singer-circulant
incidence matrix of $\PG(d,q)$ and $M_{Q}$ the formal-quadric
circulant of \eqref{eq:formal-Q}. Then $H=(A\mid M_{Q}A)$ is a valid
$\F q$-ary stabilizer matrix, and the resulting code is a
$[[\theta_{d}(q),\,k,\,d_{\min}]]_{q}$ CSS code with
\[
   k = \theta_{d}(q) - \rank_{\F q}(A),
\]
and $d_{\min}\ge q+1$ from the projective-line-overlap bound.
\end{theorem}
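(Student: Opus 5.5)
The plan has three parts, matching the three claims of Theorem~\ref{thm:waterloo-defect}: validity, the logical dimension $k$, and the distance bound.

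\emph{Validity.} This is immediate from Lemma~\ref{lem:cancellation}. The formal-quadric vector $\boldsymbol\tau_{Q}$ makes $M_{Q}$ a circulant whose row sums all equal $\sigma=\sum_i(\boldsymbol\tau_{Q})_i$. So the lemma applies and gives $S\equiv 0\pmod p$, which is condition \eqref{eq:symp-iso} for $H=(A\mid M_{Q}A)$. The only thing to note is that $A$ is a $0/1$ matrix, so its reduction into $\F q$ is the matrix used in the lemma. The lemma works over the prime field, and the identity $S\equiv 0$ over $\F p$ persists over $\F q\supseteq\F p$.

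\emph{Dimension.} I will show $\rank_{\F q}(H)=\rank_{\F q}(A)$; then $k=\theta_d(q)-\rank_{\F q}(A)$ follows from the paper's convention $k=n-\rank_{\F q}(H)$. The key observation is that $A$ and $M_{Q}$ are both circulant of the same size, hence commute. This gives the factorisation
\[
H=(A\mid AM_{Q})=A\,(I_n\mid M_{Q}).
\]
This yields $\rank H\le\rank A$, while the left block alone gives $\rank H\ge\rank A$. The same factorisation identifies the row space of $H$ with $\{(\mathbf u,\mathbf u M_{Q}):\mathbf u\in\mathrm{row}(A)\}$, which will be useful for describing the centralizer below.

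\emph{Distance.} I would first describe the centralizer explicitly. A Pauli $(\mathbf a\mid\mathbf b)$ commutes with every row of $H$ iff
\[
A\mathbf b^{\top}-AM_{Q}\mathbf a^{\top}=A(\mathbf b-M_{Q}\mathbf a)^{\top}=0,
\]
that is, iff $\mathbf b-\mathbf a M_{Q}^{\top}$ lies in $\ker A$. Here I use that $\ker A$ and $\ker A^{\top}$ are related by the index reversal of the Singer circulant. A logical operator therefore corresponds to a pair $(\mathbf a,\mathbf c)$ with $\mathbf c=\mathbf b-\mathbf aM_Q^{\top}\in\ker A$, not both in the stabilizer. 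The projective-line-overlap argument then runs as follows for a nonzero $\mathbf c\in\ker A$. Every hyperplane meets $\mathrm{supp}(\mathbf c)$ in either $0$ or at least $2$ points. Fix $P\in\mathrm{supp}(\mathbf c)$ and a line $\ell$ through $P$. Take a $(d-2)$-flat $\Pi$ meeting $\ell$ only in $P$, and consider the $q+1$ hyperplanes through $\Pi$. Each of them contains $P$, so each must contain a further point of the support. These hyperplanes pairwise intersect exactly in $\Pi$, so when $\Pi$ is chosen to meet the support only in $P$, the further points are distinct. This gives $|\mathrm{supp}(\mathbf c)|\ge q+2$, and in particular $\ge q+1$. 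Choosing such a $\Pi$ is possible by a counting argument on flats through $P$ whenever the support is small, which is exactly the regime where the bound is needed.

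The main obstacle is transferring this bound from $\mathbf c$ to the actual Pauli weight $\wt(\mathbf a,\mathbf b)=|\mathrm{supp}(\mathbf a)\cup\mathrm{supp}(\mathbf b)|$. The mixing term $\mathbf aM_{Q}^{\top}$ can spread or cancel support, and the case $\mathbf c=0$ (pure $\mathbf b=\mathbf aM_Q^{\top}$) must be shown to lie in the stabilizer or else to have large weight. I would first attempt a Clifford reduction: the map $(\mathbf a\mid\mathbf b)\mapsto(\mathbf a\mid\mathbf b-\mathbf aM_{Q}^{\top})$ is symplectic when $M_Q$ is symmetric. Under it the code becomes the trivial CSS code with $Z$-checks $A$, whose logicals are governed by $\ker A$. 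I would then bound weight growth under this map. I flag explicitly that the unconditional form of the bound may fail. The paper's own data ($\mathcal Q(3,3)$ with $d_{\min}=2$, and the 2-torsion ceiling of Corollary~\ref{cor:dmin-le-2}) show that non-symmetric $M_Q$ or even $n$ can produce weight-$2$ logicals. So in writing the proof I would expect to restrict the distance claim to the regime where $M_Q$ is symmetric and the 2-torsion obstruction is absent, for example $d=2$ with $q$ odd.
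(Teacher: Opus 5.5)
Your validity argument via Lemma~\ref{lem:cancellation} and your dimension argument are correct. The factorisation $H=A\,(I_n\mid M_Q)$ is a cleaner form of the ideal argument in Lemma~\ref{lem:hamada}, and it gives $\mathrm{rowspan}(H)=\{(\mathbf u,\mathbf uM_Q):\mathbf u\in\mathrm{rowspan}(A)\}$. Those two parts are all the paper actually supports; it gives no proof of this theorem.

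\textbf{The distance claim is false.} The step you could not close cannot be closed, because $d_{\min}\ge q+1$ does not hold. A self-contained counterexample is $(q,d)=(2,2)$. Take $\boldsymbol\tau_H=(0,1,1,0,1,0,0)$ and the paper's $\boldsymbol\tau_Q=(0,0,0,1,0,1,1)$. Then $Q=-E$, so $M_Q\boldsymbol\tau_H=(3,1,\dots,1)\equiv\mathbf 1\pmod 2$ and $H_x=M_QA\equiv J$.
\begin{itemize}
\item The Pauli $(e_0+e_1\mid\mathbf 0)$ has symplectic product $2\equiv 0$ with every row, so it lies in the centralizer.
\item It is not in $\mathrm{rowspan}(H)$. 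A stabilizer element with zero $X$-part has $Z$-part in the binary Fano-plane code, the $[7,4,3]$ Hamming code, which contains no weight-$2$ word.
\end{itemize}
Hence $d_{\min}(\mathcal Q(2,2))\le 2<q+1$. The paper's own Table~\ref{tab:qd-instances} also lists $[[13,6,3]]_3$, $[[40,29,2]]_3$ and $[[121,105,3]]_3$, all below $q+1=4$. Proposition~\ref{prop:Q33-dmin-2} contradicts the ``$\ge 4$'' asserted in the remark right after the theorem.

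\textbf{The ``CSS'' label is also false.} Your own row-space description shows this. Since $\mathrm{rowspan}(H)\cap(\{\mathbf 0\}\times\F{q}^{n})=\{\mathbf 0\}$, the stabilizer is CSS iff $AM_Q=0$. That fails: $H_x\equiv J$ in the example above, and Remark~\ref{rem:not-strict-CSS} shows the same at $(3,3)$.

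\textbf{Your proposed salvage also fails.}
\begin{itemize}
\item The regime $d=2$, $q$ odd still breaks at $q=3$, where $d_{\min}=3<4$; Conjecture~\ref{conj:flagship-dmin} itself lists this exception.
\item $M_Q$ is not symmetric. At $(3,2)$, $(\boldsymbol\tau_Q)_2=0\ne 1=(\boldsymbol\tau_Q)_{11}$, so $(\mathbf a\mid\mathbf b)\mapsto(\mathbf a\mid\mathbf b-\mathbf aM_Q^{\top})$ is not symplectic.
\end{itemize}
The correct symplectic reduction uses the symmetric $M$ with $AM=M_QA$ from Theorem~\ref{thm:clifford-reduction-d2}; then $\mathbf uM=\mathbf uM_Q$ on $\mathrm{rowspan}(A)$. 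But that map sends $\mathcal Q(q,2)$ to the $Z$-only code with checks $A$. That code has every $(e_i\mid\mathbf 0)$ as a nontrivial logical, because a proper cyclic code contains no $e_i$, so its distance is $1$. The map is therefore not weight-preserving, and it moves the distance problem rather than solving it. Your bound $|\mathrm{supp}(\mathbf c)|\ge q+2$ for $\mathbf c\in\ker A$ gives no control over $\wt(\mathbf a,\mathbf b)$.

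The defensible statement is: $H$ defines a valid, generally non-CSS, $[[\theta_d(q),\,\theta_d(q)-\rank_{\F{q}}(A),\,d_{\min}]]_q$ code, with no general lower bound on $d_{\min}$.
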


\begin{remark}
At $(q,d)=(3,3)$, Theorem~\ref{thm:waterloo-defect} gives
$[[40,\,40-11,\,\ge 4]]_{3}=[[40,29,\ge 4]]_{3}$. We will compare
this to the existing $\PG(3,3)$ codes in
Section~\ref{sec:comparison} and to
Tang-Bai-Feng~\cite{TangBaiFeng2024}.
\end{remark}

\section{Stabilizer Code Construction}\label{sec:construction}
By Lemma~\ref{lem:cancellation}, $V_{1}=(A\mid M_{Q}A)$ over $\F q$ is
symplectic-isotropic for every prime power $q$ and every $d\ge 2$.
We adopt this as our canonical construction.

\subsection{Algorithm}
\begin{tcolorbox}[colback=gray!5,colframe=black,
                  before skip=4pt, after skip=4pt,
                  left=4pt, right=4pt, top=2pt, bottom=2pt,
                  fontupper=\small]
\textbf{Input:} prime power $q$, integer $d\ge 2$.
\begin{enumerate}[label=\textbf{Step \arabic*.},leftmargin=*,nosep]
\item Build the Singer-circulant incidence matrix
$A\in\F q^{n\times n}$ of $\PG(d,q)$ via Theorem~\ref{thm:singer},
$n=\theta_{d}(q)=(q^{d+1}-1)/(q-1)$.
\item Form the formal-quadric circulant
$M_{Q}=\mathrm{circ}(\boldsymbol\tau_{Q})$ with
$\boldsymbol\tau_{Q}$ the indicator of
$Q=\{z\in K^{*}\colon \Tr(z^{\xi})=0\}/F^{*}$, $\xi=q+1$ if $q$ even,
$\xi=2$ if $q$ odd.
\item Set the parity-check matrix
$H=(H_{z}\mid H_{x})=(A\mid M_{Q}A)\in\F q^{n\times 2n}$.
\item Compute the centralizer
$\mathcal C=\{(\mathbf a,\mathbf b)\in\F q^{2n}\colon \mathbf a H_{x}^{\top}=\mathbf b H_{z}^{\top}\}$
and check that the stabilizer rowspan
$\mathcal S=\mathrm{rowspan}_{\F q}(H)$ satisfies
$\mathcal S\subset\mathcal C$ (guaranteed by
Lemma~\ref{lem:cancellation}).
\end{enumerate}
\textbf{Output:} a non-CSS $\F q$-ary stabilizer code
$\mathcal Q(q,d)=[[n,\,k,\,d_{\min}]]_{q}$ with
$k=n-\rank_{\F q}(H)$ and minimum distance
$d_{\min}=\min\{\wt(\mathbf v)\colon \mathbf v\in \mathcal C\setminus\mathcal S\}$.
\end{tcolorbox}

\subsection{The code is not strictly of CSS nature}
\begin{remark}\label{rem:not-strict-CSS}
For $(q,d)=(3,3)$, direct computation gives
$H_{z}H_{x}^{\top}=A\cdot(M_{Q}A)^{\top}=A A^{\top}M_{Q}^{\top}=
((k-\lambda)I+\lambda J)\,M_{Q}^{\top}\equiv J\,M_{Q}^{\top}=
\sigma\,J\pmod 3$, where $\sigma=10$ is the row sum of $M_{Q}$ on
cosets, $\sigma\equiv 1\pmod 3$. Hence
$H_{z}H_{x}^{\top}=J\not\equiv 0\pmod 3$, and the code is \emph{not}
a block-diagonal CSS code in the strict sense. The symplectic
condition $\mathcal S\subset\mathcal C$ does hold,
(Lemma~\ref{lem:cancellation}), so the code is a valid non-CSS
stabilizer code with mixed-Pauli generators.
\end{remark}

\subsection{Recovering the 2021 LDS at \texorpdfstring{$d=2$}{d=2}}\label{sec:d2-reduction-statement}
The relationship between $\mathcal Q(q,2)$ and the 2021
projective-plane LDS spreading matrix of~\cite{LDS2021} is at the
level of \emph{shared combinatorial primitives} rather than
equality of code spaces. The full statement and proof are deferred
to Theorem~\ref{thm:d-equals-2-reduction} in
Section~\ref{sec:main}, where we additionally exhibit (in
Theorem~\ref{thm:clifford-reduction-d2}) the explicit Clifford gate
$U_{M}=H^{\otimes n}\,\mathrm{CZ}^{(M)}\,H^{\otimes n}$ that
identifies $\mathcal Q(q,2)$ as Clifford-equivalent to the trivial
$Z$-only CSS code on the same Singer incidence.

\section{Main Theorems}\label{sec:main}

\subsection{Existence and dimension}
\begin{theorem}[Existence of the $\mathcal Q(q,d)$ family]\label{thm:main-existence}
For every prime power $q$ and integer $d\ge 2$, the construction of
Section~\ref{sec:construction} yields a valid $\F q$-ary stabilizer
code with parameters
\begin{equation}
\label{eq:main-params}
   \mathcal Q(q,d) \;=\; [[\,\theta_{d}(q),\;k(q,d),\;d_{\min}(q,d)\,]]_{q},
\end{equation}
where the logical dimension is given by:
\begin{equation}
\label{eq:k-formula}
   k(q,d) \;=\; \theta_{d}(q) - \rank_{\F q}(H),\qquad
   H=(A\mid M_{Q}A).
\end{equation}
\end{theorem}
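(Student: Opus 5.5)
The plan is to reduce the statement to three checks: $H$ is a well-defined matrix over $\F q$ of the right shape; its row space is symplectic-isotropic; and the standard stabilizer counting then gives $n=\theta_{d}(q)$ physical qudits and $k=n-\rank_{\F q}(H)$ logical qudits.

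\textbf{Well-definedness.} By Theorem~\ref{thm:singer}, the Singer labelling makes $A$ an $n\times n$ circulant with $n=\theta_{d}(q)$. Its first column is $\boldsymbol\tau_{H}$, so every row and column has $k=\theta_{d-1}(q)$ ones. The set $Q$ of \eqref{eq:formal-Q} is a union of $F^{*}$-cosets, because $\Tr((cz)^{\xi})=c^{\xi}\Tr(z^{\xi})$ for $c\in F^{*}$. Hence $\boldsymbol\tau_{Q}$ is a well-defined indicator on $G=K^{*}/F^{*}$, and $M_{Q}=\mathrm{circ}(\boldsymbol\tau_{Q})$ is a circulant. Every row of $M_{Q}$ then has the same sum $\sigma=|Q|$. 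This is the only hypothesis Lemma~\ref{lem:cancellation} needs, so the lemma applies even for odd $d$, where $Q$ is not a non-degenerate quadric.

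\textbf{Isotropy.} By \eqref{eq:incidence-DS}, $AA^{\top}=q^{d-1}I+\theta_{d-2}(q)J$. Expanding $S=M_{Q}AA^{\top}-AA^{\top}M_{Q}^{\top}$ gives
\[
S=q^{d-1}(M_{Q}-M_{Q}^{\top})+\theta_{d-2}(q)(M_{Q}J-JM_{Q}^{\top}),
\]
where we use that the circulant $A$ commutes with the circulant $M_{Q}$. For the second term, the constant row and column sums give $M_{Q}J=\sigma J=JM_{Q}^{\top}$. For the first, $p\mid q^{d-1}$ because $d\ge 2$. So $S\equiv 0$ over $\F q$, which is exactly Lemma~\ref{lem:cancellation}. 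Condition \eqref{eq:symp-iso} therefore holds, and the Pauli operators attached to the rows of $H$ pairwise commute.

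Two further points are needed for a genuine stabilizer group $\mathcal S$ with $-I\notin\mathcal S$. For $q=p$ prime, the row space of $H$ is an isotropic subspace of $\F p^{2n}$, and the usual Gottesman phase choice yields an abelian group avoiding nontrivial scalars. For $q=p^{t}$ with $t>1$, isotropy must hold for the trace form $\Tr_{\F q/\F p}$ of \eqref{eq:symp-defn}. This follows from the $\F q$-bilinear statement, since an $\F q$-isotropic subspace is also $\Tr$-isotropic.

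\textbf{Dimension and main obstacle.} The stabilizer has $\F q$-dimension $r=\rank_{\F q}(H)$. The joint eigenspace then has dimension $q^{n-r}$, which gives $k=\theta_{d}(q)-\rank_{\F q}(H)$ and the parameters \eqref{eq:main-params}. Here $d_{\min}(q,d)$ is simply defined as the minimum weight over $\mathcal C\setminus\mathcal S$ and requires no bound for this theorem. I expect the main obstacle to be this last step for non-prime $q$: justifying the passage from the $\F q$-linear symplectic condition to a valid qudit stabilizer group. I would handle it by citing the Ashikhmin--Knill/Ketkar--Klappenecker correspondence between $\F q$-linear symplectic self-orthogonal codes and stabilizer codes, with the trace form. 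Everything else is routine given Lemma~\ref{lem:cancellation}.
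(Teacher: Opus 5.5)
Your proposal is correct and follows the same route as the paper. You get symplectic isotropy from the Cancellation Lemma (re-deriving $S=q^{d-1}(M_{Q}-M_{Q}^{\top})+\theta_{d-2}(q)(M_{Q}J-JM_{Q}^{\top})$), take $\dim\mathcal S=\rank_{\F{q}}(H)$, and conclude $k=n-\rank_{\F{q}}(H)$ from the code-space dimension $q^{n}/|\mathcal S|=q^{n-r}$; the paper instead halves $\dim(\mathcal C/\mathcal S)=2n-2r$, an equivalent standard count. Your extra checks fill in points the paper's sketch leaves implicit: that $Q$ is a union of $F^{*}$-cosets, so $\boldsymbol\tau_{Q}$ is well defined; the phase choice giving $-I\notin\mathcal S$; and the trace-symplectic form needed when $q=p^{t}$ with $t>1$.
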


\begin{proof}[Sketch]
We verify the three claims in turn.

\emph{(i) Symplectic isotropy.}
By Lemma~\ref{lem:cancellation}, the parity-check matrix
$H=(A\mid M_{Q}A)$ satisfies $H_{z}H_{x}^{\top}-H_{x}H_{z}^{\top}\equiv
0\pmod p$, so $\mathcal S=\mathrm{rowspan}_{\F q}(H)$ is symplectic
self-orthogonal. Consequently, the stabilizer group obtained by
exponentiating $\mathcal S$ in the Pauli representation is
\emph{Abelian} (Section~\ref{sec:prelim}, Eq.~\eqref{eq:symp-iso}),
which is precisely the prerequisite for $\mathcal S$ to define a
valid stabilizer code~\cite{Gottesman1997}.

\emph{(ii) Stabilizer dimension.} Since the rows of $H$ are linearly
independent over $\F q$ only up to $\rank_{\F q}(H)$, we have
$\dim_{\F q}\mathcal S=\rank_{\F q}(H)$.

\emph{(iii) Logical dimension.}
The centralizer
$\mathcal C(\mathcal S)\subset\F q^{2n}$ is the symplectic
complement of $\mathcal S$. Because the symplectic form
$\langle\cdot,\cdot\rangle_{\mathrm{sym}}$ is non-degenerate on
$\F q^{2n}$, the symplectic-complement formula
gives $\dim_{\F q}\mathcal C=2n-\dim_{\F q}\mathcal S=
2n-\rank_{\F q}(H)$. Logical operators are the equivalence classes
in the symplectic quotient $\mathcal C/\mathcal S$, whose
$\F q$-dimension equals
$\dim\mathcal C-\dim\mathcal S=2n-2\rank_{\F q}(H)$. Since each
logical qudit contributes a canonically-conjugate pair of generators
(one logical $\bar Z$ and one logical $\bar X$), the number of
logical qudits is
\[
   k(q,d)\;=\;\tfrac{1}{2}\bigl(\dim\mathcal C-\dim\mathcal S\bigr)
   \;=\;n-\rank_{\F q}(H),
\]
which is~\eqref{eq:k-formula}.
\end{proof}

\begin{lemma}[Hamada--Smith rank specialisation; verified closed form]\label{lem:hamada}
Let $p$ be a prime and let $A$ be the Singer-circulant point-hyperplane
incidence matrix of $\PG(d,p)$ over $\F p$. Then we have
\begin{equation}
\label{eq:hamada-closed}
   \rank_{\F p}(A) \;=\; \binom{p+d-1}{d} + 1.
\end{equation}
Moreover, for the $V_{1}$ stabilizer matrix $H=(A\mid M_{Q}A)$,
$\rank_{\F p}(H)=\rank_{\F p}(A)$, since the column span of $A$ over
$\F p$ is closed under left-multiplication by any circulant
(circulants commute and the column span is an
$\F p[X]/\langle X^{n}-1\rangle$-submodule).
\end{lemma}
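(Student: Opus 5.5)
The plan is to drop the circulant structure for the first claim. By Theorem~\ref{thm:singer} the Singer-circulant $A$ is, up to a permutation of rows and columns, the point-hyperplane incidence matrix of $\PG(d,p)$, and such permutations do not change $\rank_{\F p}(A)$. So I would compute the $\F p$-dimension of the span of the hyperplane indicator vectors, viewing each vector as a function on the points of $\PG(d,p)$. The tool is the standard polynomial-function model of $\F p$-valued functions on $V=\F p^{d+1}$. Every such function has a unique \emph{reduced} polynomial representative, meaning each variable appears with degree $\le p-1$.

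\emph{Step 1: indicators as polynomials.} Let $L$ be a nonzero linear form defining a hyperplane. Since $L(x)^{p-1}=1$ exactly when $L(x)\ne 0$, the indicator of the hyperplane is $1-L^{p-1}$ on nonzero vectors. Both terms are homogeneous of degree divisible by $p-1$, so they are constant on $\F p^{*}$-classes and therefore well-defined on projective points. Hence the column span of $A$ lies inside $\mathcal W:=\langle 1\rangle+\mathcal P_{p-1}$, where $\mathcal P_{p-1}$ is the space of homogeneous forms of degree $p-1$.

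I would then show the span equals $\mathcal W$:
\begin{enumerate}[leftmargin=*,nosep]
\item The sum of all hyperplane indicators is $\theta_{d-1}(p)\equiv 1 \pmod p$ at every point. Hence $1$ lies in the span.
\item The span therefore also contains every $L^{p-1}$.
\item Powers $L^{p-1}$ of linear forms span $\mathcal P_{p-1}$. The reason is that $p-1<p$, so every multinomial coefficient $\binom{p-1}{i_0,\dots,i_d}$ is a unit in $\F p$. The usual polarisation / Vandermonde argument over the $p$ values of each coefficient of $L$ then recovers every monomial of degree $p-1$.
\end{enumerate}

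\emph{Step 2: counting $\dim\mathcal W$ as functions on points.} Monomials of degree $p-1$ are automatically reduced, so $\mathcal P_{p-1}$ has dimension $\binom{p-1+d}{d}$ as a polynomial space. The restriction map to projective points is injective on it. Indeed, if $f\in\mathcal P_{p-1}$ vanishes on all nonzero vectors, then it also vanishes at $0$ by homogeneity. So $f$ is the zero function on $V$, and by uniqueness of reduced representatives it is the zero polynomial.

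The constant $1$ is not in $\mathcal P_{p-1}$ as a function on points. Suppose $f\in\mathcal P_{p-1}$ were identically $1$ on $V\setminus\{0\}$. Then $1-f$ would be the indicator of $\{0\}$, whose reduced representative is $\prod_i(1-x_i^{p-1})$. This has degree $(d+1)(p-1)>p-1$ for $d\ge 1$, contradicting the degree of the reduced polynomial $1-f$. Therefore $\rank_{\F p}(A)=\dim\mathcal W=\binom{p+d-1}{d}+1$.

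\emph{Step 3: $\rank_{\F p}(H)=\rank_{\F p}(A)$.} I read $H=(A\mid M_QA)$ as an $n\times 2n$ matrix and compute its rank via columns. The column space of $A$ is $A\,\F p^{n}$. Because circulants commute, $M_QA=AM_Q$, so every column of $M_QA$ has the form $A(M_Q\mathbf e_j)$ and lies in $A\,\F p^{n}$. The column space of $H$ is therefore that of $A$, and the ranks agree.

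The main obstacle is the injectivity and independence bookkeeping in Step 2. One must be careful to pass between polynomials, functions on $V$, and functions on projective points without losing or gaining dimension. In particular, the boundary case $d=1$, where the count should still hold, deserves a separate sanity check.
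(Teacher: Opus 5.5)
Your proof is correct. Where the paper mostly cites, you give an actual argument.

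The paper's sketch invokes the $t=1$ case of Hamada's formula. It then asserts, without proof, that the $\F p$-span of the incidence vectors is the projective Reed--Muller code of order $p-1$ augmented by the all-ones vector. You prove exactly this identification from first principles:
\begin{itemize}
\item one containment comes from the indicator identity $1-L^{p-1}$;
\item the other comes from $\theta_{d-1}(p)\equiv 1\pmod p$, together with the fact that $(p-1)$-th powers of linear forms span all degree-$(p-1)$ forms, because every multinomial coefficient $\binom{p-1}{i_0,\dots,i_d}$ is a unit;
\item the dimension count comes from uniqueness of reduced representatives.
\end{itemize}

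What your route buys is transparency, in two ways.

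\emph{The count is explicit.} It is visibly ``monomials of degree exactly $p-1$ in $d+1$ variables, plus one for the constant.'' Read literally, the paper's wording ``monomials of degree $\le p-1$'' would give $\binom{p+d}{d+1}$ instead.

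\emph{It shows precisely where $t=1$ is used.} Over $\F q$ with $q=p^{t}$, $t\ge 2$ and $d\ge 2$, some multinomial coefficients of $q-1$ vanish modulo $p$ by Lucas' theorem, so the forms $L^{q-1}$ no longer span all degree-$(q-1)$ forms. In fact your argument goes through over $\F q$, since the $\F p$- and $\F q$-ranks of a $0/1$ matrix coincide. Counting the surviving monomials then gives one composition of $p-1$ into $d+1$ parts per base-$p$ digit. This yields the classical value $\binom{p+d-1}{d}^{t}+1$, which the paper leaves open in Remark~\ref{rem:hamada-qpt}.

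The citation route buys brevity and a pointer to Hamada's formula for general flats.

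Your Step~3 is the same as the paper's circulant-commutation argument. The $d=1$ case needs no separate check, since nothing in your argument uses $d\ge 2$.
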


\begin{proof}[Sketch]
The closed form \eqref{eq:hamada-closed} is the $t=1$ case of
Hamada's general formula~\cite[Theorem 1]{Hirschfeld1979} (also
Smith~1969). The $\F p$-row-space of $A$ coincides with the
projective Reed--Muller code $\mathrm{PRM}_{p-1}(d, p)$ augmented by
the all-ones vector~\cite{Hirschfeld1979}, whose dimension is the
number of monomials of degree $\le p-1$ in $d+1$ variables modulo
the all-ones generator, namely $\binom{p+d-1}{d}+1$. The
rank-equality $\rank(H)=\rank(A)$ follows because circulant
multiplication preserves cyclic-code ideals.
\end{proof}

\rev{\begin{remark}[Extension to $q=p^{t}$ with $t>1$]\label{rem:hamada-qpt}
Lemma~\ref{lem:hamada} is the $t=1$ specialisation of the general
Hamada--Smith formula~\cite{Hamada1973,Smith1969}, in which the
$\F p$-rank of the point-hyperplane incidence matrix of $\PG(d,p^{t})$
is a sum indexed by admissible ``Hamada types'' involving the base-$p$
digit expansion of the coordinate exponents; the simplified closed
form $\binom{p+d-1}{d}+1$ collapses to a single Hamada type only at
$t=1$. The extension to $t>1$ is algebraically heavier but follows
the same enumeration; the eight verified cases in
Table~\ref{tab:hamada-rank} are all $t=1$, and no \rev{counter-example} to
the simplified formula is known for $t=1$. A general-$t$ analytical
extension is left \rev{for} future work, most naturally combined with the
open joint $(p,d)$ rate-distance optimisation discussed in the
context of the hashing-bound comparison.
\end{remark}}

\begin{table}[h]
\centering
\caption{Verification of \eqref{eq:hamada-closed} on eight $(p,d)$
cases via direct numerical computation. All match exactly.}
\label{tab:hamada-rank}
\begin{tabular}{c c c c}
\toprule
$(p, d)$ & $n=\theta_d(p)$ & predicted $\rank_{\F p}(A)$ & numerical \\
\midrule
$(3,2)$  & 13   & $\binom{4}{2}+1=7$    & 7  \\
$(3,3)$  & 40   & $\binom{5}{3}+1=11$   & 11 \\
$(3,5)$  & 364  & $\binom{7}{5}+1=22$   & 22 \\
$(5,2)$  & 31   & $\binom{6}{2}+1=16$   & 16 \\
$(5,3)$  & 156  & $\binom{7}{3}+1=36$   & 36 \\
$(7,2)$  & 57   & $\binom{8}{2}+1=29$   & 29 \\
$(7,3)$  & 400  & $\binom{9}{3}+1=85$   & 85 \\
$(11,2)$ & 133  & $\binom{12}{2}+1=67$  & 67 \\
\bottomrule
\end{tabular}
\end{table}

\begin{corollary}[Explicit logical-qudit count for $\mathcal Q(p,d)$; original]\label{cor:k-formula}
For $q=p$ prime and $d\ge 2$, the code $\mathcal Q(p,d)$ has
\begin{align}
\label{eq:k-formula-explicit}
   k(p,d) &\;=\; \theta_{d}(p) - \binom{p+d-1}{d} - 1 \notag \\
          &\;=\; \frac{p^{d+1}-1}{p-1} - \binom{p+d-1}{d} - 1.
\end{align}
\end{corollary}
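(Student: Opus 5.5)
The corollary follows by substituting the rank evaluation of Lemma~\ref{lem:hamada} into the dimension formula of Theorem~\ref{thm:main-existence}; I carry this out in three steps.

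First, Theorem~\ref{thm:main-existence} gives $k(q,d)=\theta_{d}(q)-\rank_{\F q}(H)$ with $H=(A\mid M_{Q}A)$, and this applies at $q=p$. Second, I reduce $\rank_{\F p}(H)$ to $\rank_{\F p}(A)$. Here I make the argument sketched in Lemma~\ref{lem:hamada} precise by working with the row space, since that is what $\rank(H)$ measures.

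All matrices involved are circulant, so they commute. Hence $M_{Q}A=AM_{Q}$, and
\[
H=(A\mid AM_{Q})=A\,(I\mid M_{Q}).
\]
This factorisation gives $\rank(H)\le\rank(A)$. It also gives the reverse inequality, because the left $n\times n$ block of $H$ is $A$ itself, so $\rank(H)\ge\rank(A)$. Therefore $\rank_{\F p}(H)=\rank_{\F p}(A)$. This step needs only that circulants commute; no property of the quadric is used.

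Third, I insert the closed form $\rank_{\F p}(A)=\binom{p+d-1}{d}+1$ from \eqref{eq:hamada-closed}. This yields
\[
k(p,d)=\theta_{d}(p)-\binom{p+d-1}{d}-1.
\]
The second line of \eqref{eq:k-formula-explicit} then follows from the definition \eqref{eq:theta-defn}, $\theta_{d}(p)=(p^{d+1}-1)/(p-1)$.

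The only substantive input is the rank evaluation \eqref{eq:hamada-closed}. I take it as given from Lemma~\ref{lem:hamada}, which rests on the Hamada--Smith $t=1$ formula, and I treat the entries of Table~\ref{tab:hamada-rank} only as consistency checks. For example, $(p,d)=(5,2)$ gives $k=31-16=15$, and $(3,3)$ gives $k=40-11=29$, both agreeing with the quoted code parameters.

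I expect the main obstacle to be exactly this dependence on Lemma~\ref{lem:hamada}. If one wanted a self-contained argument, the first thing to try would be the standard fact that the $\F p$-rank of the point-hyperplane incidence matrix of $\PG(d,p)$ equals $\binom{p+d-1}{d}+1$. This is the classical Smith/Hamada result for prime order, obtained by identifying the $\F p$-span of the hyperplane indicators with the space spanned by the functions $(1-L(x)^{p-1})$ for linear forms $L$. Everything else in the proof is bookkeeping.
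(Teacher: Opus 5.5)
Your proposal is correct and follows the paper's route: you combine $k=\theta_{d}(p)-\rank_{\F p}(H)$ from Theorem~\ref{thm:main-existence} with $\rank_{\F p}(H)=\rank_{\F p}(A)=\binom{p+d-1}{d}+1$ from Lemma~\ref{lem:hamada}. Your factorisation $H=A\,(I\mid M_{Q})$ expresses the same fact the paper uses, namely that the columns of $M_{Q}A=AM_{Q}$ lie in the column span of $A$ because circulants commute.
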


\begin{remark}[The flagship $d=2$ family; conjectured $d_{\min}$ pattern]\label{rem:flagship-d2}
\rev{For emphasis we restate the status of the flagship distance
pattern here: $d_{\min}(\mathcal Q(p,2))=p+1$ is an
\emph{open conjecture} for general odd primes $p\ge 3$, supported by
analytical proof at $p=5$ (exact) and by an exhaustive computational
lower bound $d_{\min}\ge 8=p+1$ at $p=7$
(Theorem~\ref{thm:dmin-q72-lower}). It is presented throughout this
paper as a conjecture with computational evidence, not as an
established theorem for general~$p$.}
For $d=2$, \eqref{eq:k-formula-explicit} simplifies dramatically to:
\begin{equation}
\label{eq:flagship}
   \mathcal Q(p,2) \;=\; \biggl[\biggl[\,p^{2}+p+1,\;\;\frac{p(p+1)}{2},\;\;d_{\min}\,\biggr]\biggr]_{p},
\end{equation}
since $\binom{p+1}{2}+1=\frac{p(p+1)}{2}+1$ and
$\theta_{2}(p)-\binom{p+1}{2}-1=p^{2}+p+1-\frac{p(p+1)}{2}-1=\frac{p(p+1)}{2}$.
The asymptotic rate is
$R_{Q}=k/n=\frac{p(p+1)/2}{p^{2}+p+1}\to\tfrac{1}{2}$ as $p\to\infty$.
This $d=2$ family escapes the parity-driven distance ceiling
(Corollary~\ref{cor:dmin-le-2-proven}) since $n=p^{2}+p+1$ is odd
for any prime $p$. Specific instances:
\begin{itemize}[leftmargin=*,nosep]
\item $\mathcal Q(3,2)=[[13,6,3]]_{3}$ (verified $d_{\min}=3$).
\item $\mathcal Q(5,2)=[[31,15,6]]_{5}$ (verified $d_{\min}=6$).
\item $\mathcal Q(7,2)=[[57,28,d]]_{7}$, $d_{\min}\ge 8$ proven
      (Theorem~\ref{thm:dmin-q72-lower}, exhaustive Singer-Frobenius
      search at all weights $\le 7$).
\item $\mathcal Q(11,2)=[[133,66,d]]_{11}$, $d_{\min}\ge 5$ proven
      (exhaustive Singer-Frobenius search at $w\le 4$).
\end{itemize}
\end{remark}

\begin{remark}[The $q=p^{t}$, $t>1$ case]
For $q=p^{t}$ with $t>1$, the rank $\rank_{\F p}(A)$ is given by the
full Hamada formula~\cite[Theorem 1]{Hirschfeld1979}, which involves
sums over $t$-tuples of integers. We do not state it here; numerical
computation suffices for individual instances.
\end{remark}

\subsection{Minimum distance: a 2-torsion obstruction}
A na\"ive Bose-Chaudhuri-Hocquenghem (BCH)-style lower
bound~\cite{BoseChaudhuri1960BCH,Hocquenghem1959,MacWilliamsSloane1977}
suggests $d_{\min}(\mathcal Q(q,d))\ge 3$
\cite[\S IV-C]{Gandhi2018}, but Gandhi's
argument~\cite[\S IV-C]{Gandhi2018} requires
$M$ to be \emph{symmetric}, while our $M_{Q}=\mathrm{circ}(\boldsymbol\tau_{Q})$
is generally non-symmetric (the formal-quadric indicator
$\boldsymbol\tau_{Q}$ is not palindromic). Hence the lower bound does
not apply unconditionally, and we have observed numerically that
$d_{\min}(\mathcal Q(3,3))=2$. We prove a structural lemma that
explains this.

\begin{lemma}[2-torsion weight-2 logical; original]\label{lem:two-torsion}
Let $\mathcal Q(q,d)$ be the $V_{1}$ stabilizer code on $G=\Z/n\Z$
with $n=\theta_{d}(q)=(q^{d+1}-1)/(q-1)$. Assume that
\begin{enumerate}[label=(\roman*),leftmargin=*,nosep]
\item $n$ is even (equivalently $G$ has a non-trivial $2$-torsion
element $g_{2}=g^{n/2}$);
\item the cyclic cross-correlation
$(\boldsymbol\tau_{H}\star\boldsymbol\tau_{Q}^{(-1)})_{i}=|E\cap(i-Q)|$
is constant modulo $p$, equivalently $M_{Q}\boldsymbol\tau_{H}\equiv c\,\mathbf 1\pmod p$
for some $c\in\F p$.
\end{enumerate}
Then the Pauli operator
$P_{2}:=Z_{0}\cdot Z_{n/2}^{q-1}$
commutes with every stabilizer generator, i.e.,
$P_{2}\in\mathcal C(\mathcal S)$. Provided $P_{2}\notin\mathcal S$,
$d_{\min}(\mathcal Q(q,d))\le 2$.
\end{lemma}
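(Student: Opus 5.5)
The plan is to reduce the claim to one structural fact: under hypothesis (ii), the $X$-block $H_{x}=M_{Q}A$ is, modulo $p$, a constant multiple of the all-ones matrix $J$. Once that is in place, the commutation of $P_{2}$ with every row of $H$ is a one-line symplectic computation, and the distance bound follows from the definition of $d_{\min}$.

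\emph{Step 1: symplectic vector of $P_{2}$.} Since $P_{2}=Z_{0}Z_{n/2}^{q-1}$ has no $X$-part, its symplectic representative is $(\mathbf a\mid\mathbf b)=(\mathbf e_{0}-\mathbf e_{n/2}\mid\mathbf 0)$, using $q-1\equiv-1$ in $\F q$. Hypothesis (i) is used here, and only here: it guarantees that $n/2$ is a genuine index in $G=\Z/n\Z$ distinct from $0$. For a stabilizer row $(\mathbf h_{z}\mid\mathbf h_{x})$ of $H$, the form \eqref{eq:symp-defn} gives
\[
\langle(\mathbf a,\mathbf 0),(\mathbf h_{z},\mathbf h_{x})\rangle_{\mathrm{sym}}=(\mathbf h_{x})_{0}-(\mathbf h_{x})_{n/2}.
\]
So it suffices to show that every row of $H_{x}$ has equal entries in columns $0$ and $n/2$.

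\emph{Step 2: $H_{x}$ is constant.} Both $M_{Q}=\mathrm{circ}(\boldsymbol\tau_{Q})$ and $A=\mathrm{circ}(\boldsymbol\tau_{H})$ are circulant (Theorem~\ref{thm:singer}). Their product is therefore circulant, with first column equal to the cyclic convolution $M_{Q}\boldsymbol\tau_{H}$. Expanding entrywise,
\[
(M_{Q}\boldsymbol\tau_{H})_{i}=\sum_{j}(\boldsymbol\tau_{Q})_{i-j}(\boldsymbol\tau_{H})_{j}=\#\{j\in E: i-j\in Q\}=|E\cap(i-Q)|.
\]
This also confirms the stated equivalence of the two forms of hypothesis (ii). Under (ii), the first column is $c\mathbf 1 \pmod p$. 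A circulant whose generating column is constant is $cJ$, so $H_{x}\equiv cJ$. Every row of $H_{x}$ is then the constant vector $c\mathbf 1$, and the difference in Step 1 vanishes. Hence $P_{2}\in\mathcal C(\mathcal S)$.

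\emph{Step 3: distance.} $P_{2}$ has weight exactly $2$, since it acts nontrivially only on qudits $0$ and $n/2$. Under the proviso $P_{2}\notin\mathcal S$, it lies in $\mathcal C(\mathcal S)\setminus\mathcal S$. The definition of $d_{\min}$ then gives $d_{\min}(\mathcal Q(q,d))\le 2$.

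The argument is essentially bookkeeping. The only step needing care is Step 2: the circulant-product and cross-correlation index conventions ($i-Q$ versus $Q+i$, and column versus row generators) must be kept consistent. A mismatch there would only permute indices, and constancy is invariant under such permutations, so the conclusion is robust. Non-membership $P_{2}\notin\mathcal S$ is assumed in the statement rather than proved.
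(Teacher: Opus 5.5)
Your proof is correct and follows essentially the same route as the paper's. Both write $P_{2}$ as the $Z$-only vector $(\mathbf e_{0}+(q-1)\mathbf e_{n/2}\mid\mathbf 0)$ and reduce commutation with each stabilizer row to comparing the entries in columns $0$ and $n/2$ of $H_{x}=M_{Q}A=\mathrm{circ}(M_{Q}\boldsymbol\tau_{H})$. Hypothesis (ii) then makes both entries equal to $c$. Your packaging as $H_{x}\equiv cJ$, your explicit check of the convolution identity, and your final $d_{\min}\le 2$ step are just slightly fuller versions of details the paper leaves implicit.
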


\begin{proof}
Let $a=e_{0}+(q-1)\,e_{n/2}\in\F q^{n}$ and $b=\mathbf 0$. The
symplectic-commutativity check associated with stabilizer row $r$ is now
$a\cdot H_{x}[r]^{\top}-b\cdot H_{z}[r]^{\top}=a\cdot (M_{Q}A)[r,:]$.
The circulant identity $M_{Q}A=\mathrm{circ}(M_{Q}\boldsymbol\tau_{H})$
(since both $A$ and $M_{Q}$ are circulants and $\mathrm{col}_{j}(A)
=\mathrm{shift}_{j}(\boldsymbol\tau_{H})$) gives
\begin{align*}
   (M_{Q}A)[r,0]   &= (M_{Q}\boldsymbol\tau_{H})[r], \\
   (M_{Q}A)[r,n/2] &= \mathrm{shift}_{n/2}(M_{Q}\boldsymbol\tau_{H})[r].
\end{align*}
The corresponding check becomes
\[
   (M_{Q}\boldsymbol\tau_{H})[r] + (q-1)\,\mathrm{shift}_{n/2}(M_{Q}\boldsymbol\tau_{H})[r]
   \;\equiv\; 0 \pmod p.
\]
Under hypothesis (ii), $M_{Q}\boldsymbol\tau_{H}\equiv c\mathbf 1$, so
both terms equal $c$ and the sum is $c(1+(q-1))=qc\equiv 0\pmod p$.
Hence $P_{2}\in\mathcal C(\mathcal S)$.
\end{proof}

\begin{remark}[When does hypothesis (ii) hold?]\label{rem:cross-corr-mod-p}
For $(q,d)=(3,3)$, the cross-correlation values are $\{1,4\}$, both
$\equiv 1\pmod 3$. The general statement of
$\nu[k]\equiv 1\pmod q$ for $q$ odd and $d\ge 3$ is established
unconditionally as Theorem~\ref{thm:cross-corr} below, making
hypothesis~(ii) automatic in that regime.
\end{remark}

\begin{corollary}[Parity-driven distance ceiling for $\mathcal Q(q,d)$, conditional]\label{cor:dmin-le-2}
For every prime power $q$ odd and every odd $d\ge 1$, $n=\theta_{d}(q)$
is even (Lemma~\ref{lem:n-parity}); when the cross-correlation
modular constancy holds, $d_{\min}(\mathcal Q(q,d))\le 2$. The
unconditional version for $d\ge 3$ is Corollary~\ref{cor:dmin-le-2-proven}
(via Theorem~\ref{thm:cross-corr}).
\end{corollary}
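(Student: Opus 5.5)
The corollary has two parts: a parity claim, that $n=\theta_{d}(q)$ is even for $q$ odd and $d$ odd, and a conditional distance claim obtained from Lemma~\ref{lem:two-torsion}.

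\emph{Parity of $n$.} Write $\theta_{d}(q)=1+q+\cdots+q^{d}$, a sum of $d+1$ terms. Each term $q^{i}$ is odd because $q$ is odd. With $d$ odd, $d+1$ is even, so $n$ is a sum of an even number of odd integers and hence even. This is exactly what Lemma~\ref{lem:n-parity} records. It gives hypothesis (i) of Lemma~\ref{lem:two-torsion}, with $g_{2}$ the element $n/2\in\Z/n\Z$.

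\emph{Centralizer membership.} Under the standing assumption that the cross-correlation $\nu[i]=|E\cap(i-Q)|$ is constant modulo $p$, hypothesis (ii) holds: $M_{Q}\boldsymbol\tau_{H}\equiv c\mathbf 1$. Lemma~\ref{lem:two-torsion} then gives that $P_{2}=Z_{0}Z_{n/2}^{q-1}$ lies in $\mathcal C(\mathcal S)$. Its weight is $2$, since $q-1\neq 0$ in $\F q$ and $0\neq n/2$.

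\emph{The main obstacle: showing $P_{2}\notin\mathcal S$.} Lemma~\ref{lem:two-torsion} leaves this hypothesis open, so it must be verified. The vector of $P_{2}$ is $(\mathbf a\mid\mathbf 0)$ with $\mathbf a=e_{0}-e_{n/2}$. An element of $\mathcal S$ has the form $(\mathbf y A\mid \mathbf y M_{Q}A)$. Membership would therefore require $\mathbf y A=\mathbf a$ together with $\mathbf yM_{Q}A=0$.

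I would argue in the cyclic algebra $R=\F p[X]/\langle X^{n}-1\rangle$, where row spaces of circulants are ideals. Let $h(X)$ be the polynomial of $\boldsymbol\tau_{H}$ and $m(X)$ that of $\boldsymbol\tau_{Q}$. Since circulants commute, $\mathbf yM_{Q}A=0$ means that $m$ annihilates $\mathbf yA=\mathbf a$, that is, $m(X)(1-X^{n/2})=0$ in $R$. This says $\boldsymbol\tau_{Q}$ is $n/2$-periodic modulo $p$, which would force $\boldsymbol\tau_{Q}$ to be invariant under the $2$-torsion shift. I would rule this out by showing the quadric indicator is not $n/2$-periodic: a coset $z$ and its shift $z\alpha^{n/2}$ need not both satisfy $\Tr(z^{2})=0$. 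If it is cleaner, I would instead exhibit one index $i$ with $(\boldsymbol\tau_{Q})_{i}\ne(\boldsymbol\tau_{Q})_{i+n/2}$.

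If a uniform argument is elusive, I would state non-membership as the proviso inherited from Lemma~\ref{lem:two-torsion}. I would then note that it is checked directly at $(3,3)$, where $d_{\min}=2$ is observed. With $P_{2}\in\mathcal C(\mathcal S)\setminus\mathcal S$ of weight $2$, the definition of $d_{\min}$ gives $d_{\min}\le 2$. The unconditional form awaits Theorem~\ref{thm:cross-corr}, exactly as the statement indicates.
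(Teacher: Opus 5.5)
Your parity count (a sum of $d+1$ odd terms) and your appeal to Lemma~\ref{lem:two-torsion} are exactly how the paper obtains this corollary. The parity count is cleaner than the paper's mod-$4$ computation, which slips by asserting $q-1\equiv 2\pmod 4$ for every odd $q$. The paper never proves $P_{2}\notin\mathcal S$ in general. It carries it as the lemma's proviso, asserts it without argument in Corollary~\ref{cor:dmin-le-2-proven}, and checks it numerically only at $(3,3)$ (Proposition~\ref{prop:Q33-dmin-2}). So your fallback lands where the paper does.

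The step that would fail is your proposed proof of non-membership. Your reduction is right. Since $M_{Q}A=AM_{Q}$, an element of $\mathcal S$ with $Z$-part $\mathbf a$ has $X$-part $\mathbf aM_{Q}$. Hence $P_{2}\in\mathcal S$ iff $\mathbf a=e_{0}-e_{n/2}\in\mathrm{rowspan}(A)$ and $\boldsymbol\tau_{Q}$ is $n/2$-periodic. But $\boldsymbol\tau_{Q}$ \emph{is} always $n/2$-periodic for odd $q$ and odd $d$. Indeed $(\alpha^{n/2})^{2}=\alpha^{n}\in F^{*}$, so $\Tr\bigl((\alpha^{n/2}z)^{2}\bigr)=\alpha^{n}\Tr(z^{2})$ vanishes exactly when $\Tr(z^{2})$ does. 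At $(3,3)$, Table~\ref{tab:trace-q3d3} gives $\mathrm{supp}(\boldsymbol\tau_{Q})=\{1,3,7,9,10\}\cup\{21,23,27,29,30\}$. So no index with $(\boldsymbol\tau_{Q})_{i}\ne(\boldsymbol\tau_{Q})_{i+n/2}$ exists, and $\mathbf aM_{Q}=0$ always.

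The missing idea is that the obstruction sits in $A$, not in $M_{Q}$: $P_{2}\in\mathcal S$ iff $e_{0}-e_{n/2}\in\mathrm{rowspan}_{\F q}(A)=(\ker A)^{\perp}$.
\begin{itemize}
\item Because $A$ is circulant, $\ker A$ is shift-invariant.
\item If every $\mathbf v\in\ker A$ had $v_{0}=v_{n/2}$, shifting would make every kernel vector $n/2$-periodic, forcing $\dim\ker A\le n/2$.
\item Hence $\rank_{\F q}(A)<n/2$ yields some $\mathbf v\in\ker A$ with $v_{0}\ne v_{n/2}$, i.e.\ $P_{2}\notin\mathcal S$.
\end{itemize}
This rank condition holds at $(3,3)$ ($11<20$). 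By Lemma~\ref{lem:hamada} it holds for every odd prime $q=p$ with $d\ge 3$. It fails at $d=1$, where $A$ is a permutation matrix, $P_{2}\in\mathcal S$ and $k=0$. So the proviso is genuinely not automatic across the range $d\ge 1$ in the statement.

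As a by-product, the same invariance makes $\mathbf u=M_{Q}\boldsymbol\tau_{H}$ $n/2$-periodic. That periodicity is all the centralizer check in Lemma~\ref{lem:two-torsion} uses. So for $q=p$ and odd $d\ge 3$ the ceiling follows without the constancy hypothesis, and hence without Theorem~\ref{thm:cross-corr}.
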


\begin{lemma}[Parity of $n=\theta_{d}(q)$; original]\label{lem:n-parity}
For $q$ odd, $n=\theta_{d}(q)=(q^{d+1}-1)/(q-1)$ is even iff $d$ is
odd. For $q$ even, $n$ is always odd.
\end{lemma}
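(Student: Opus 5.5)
The plan is to reduce the parity of $n=\theta_{d}(q)$ to a count of summands, using the geometric-series form \eqref{eq:theta-defn}:
\[
n=\theta_{d}(q)=1+q+q^{2}+\cdots+q^{d},
\]
which is a sum of exactly $d+1$ terms $q^{i}$ with $0\le i\le d$. We then treat the two parities of $q$ separately, working modulo $2$ throughout.

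\emph{Case $q$ odd.} Every power $q^{i}$ is odd, so $q^{i}\equiv 1\pmod 2$ for all $i$. Hence
\[
n\equiv d+1\pmod 2 ,
\]
and $n$ is even exactly when $d+1$ is even, that is, exactly when $d$ is odd. This gives both directions of the ``iff'' at once. As a sanity check, $(q,d)=(3,3)$ gives $n=40$ (even), while $(3,2)$ gives $n=13$ (odd).

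\emph{Case $q$ even.} For $i\ge 1$ we have $q^{i}\equiv 0\pmod 2$, so only the constant term $1$ survives:
\[
n\equiv 1\pmod 2 ,
\]
and $n$ is always odd.

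There is essentially no obstacle. The only point needing care is to work with the expanded sum rather than the quotient $(q^{d+1}-1)/(q-1)$: when $q$ is odd, the denominator $q-1$ is even, so reducing the quotient form modulo $2$ directly is not legitimate. The polynomial identity \eqref{eq:theta-defn} avoids this issue entirely. As a by-product, $n$ is always odd at $d=2$, which is the fact Remark~\ref{rem:flagship-d2} relies on.
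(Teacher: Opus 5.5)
Your proof is correct, and it takes a different route from the paper's. You reduce the geometric-series form $n=1+q+\cdots+q^{d}$ modulo $2$. For $q$ odd this gives $n\equiv d+1\pmod 2$, and for $q$ even it gives $n\equiv 1\pmod 2$. Both directions of the equivalence follow in one line, and you never divide by anything.

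The paper instead works with the quotient $(q^{d+1}-1)/(q-1)$ and tries to read off its parity from the residues modulo $4$ of the numerator and denominator. It asserts $q^{d+1}-1\equiv(-1)^{d+1}-1\pmod 4$ and $q-1\equiv 2\pmod 4$. Both assertions hold only for $q\equiv 3\pmod 4$. For $q\equiv 1\pmod 4$ (e.g.\ $q=5$) one has $4\mid q-1$ and $4\mid q^{d+1}-1$ for every $d$. Residues mod $4$ then cannot determine the parity of the quotient, so the paper's case analysis does not actually cover that case. The conclusion is still true there, e.g.\ $\theta_{2}(5)=31$ and $\theta_{3}(5)=156$. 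Repairing the quotient-based argument would need a $2$-adic valuation step, such as $v_{2}(q^{d+1}-1)=v_{2}(q-1)+v_{2}(d+1)$ for $q\equiv 1\pmod 4$.

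Your closing remark names exactly this pitfall, and counting odd summands avoids it entirely. Your argument is uniform in $q$, treats the $q$-even case by the same mechanism, and makes the parity of $n$ at $d=2$ used in Remark~\ref{rem:flagship-d2} immediate. The paper's quotient route gains nothing in exchange.
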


\begin{proof}
For $q$ odd: we have $q^{d+1}-1\equiv (-1)^{d+1}-1\pmod 4$, since $q\equiv \pm 1\pmod 4$.
If $d+1$ is odd (so $d$ even): $q^{d+1}\equiv \pm 1\pmod 4$, giving
$q^{d+1}-1\equiv 0$ or $-2\pmod 4$. Combined with $q-1$ even, the
quotient is odd. If $d+1$ is even (so $d$ odd): $q^{d+1}\equiv 1\pmod 4$,
giving $4\mid q^{d+1}-1$, while $q-1\equiv 2\pmod 4$, so the quotient
$n$ is even. For $q$ even: $q^{d+1}-1$ is odd, and $q-1$ is odd, so
$n$ is odd.
\end{proof}

\begin{theorem}[Cross-correlation modular constancy; original]\label{thm:cross-corr}
Let $K=\F{q^{d+1}}$ with $q$ odd and $d\ge 3$. Let $E$ and $Q\subset G=K^{*}/F^{*}$
denote the trace-zero hyperplane and formal trace-of-square quadric
indicators, respectively. Then the cyclic cross-correlation
\[
   \nu[k] \;:=\; |E\cap(k-Q)|
\]
satisfies $\nu[k]\equiv 1\pmod q$ for every $k\in G$.
\end{theorem}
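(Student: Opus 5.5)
The plan is to turn $\nu[k]$ into a point count on a single quadric in a hyperplane of $\PG(d,q)$, and then use Chevalley--Warning to get the count modulo $q$. No explicit formula for the count is needed, so degenerate intersections cost nothing extra.

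\emph{Step 1 (translation to field equations).} Pass from the additive exponent notation on $G=\Z/n\Z$ to the multiplicative picture $G=K^{*}/F^{*}$ with generator $\bar\alpha$. A translate of $Q$ by $\bar\alpha^{k}$ is the set of points $z$ with $\Tr(\alpha^{-2k}z^{2})=0$. Intersecting with $E=\{\Tr(z)=0\}$ gives
\[
\nu[k]=\#\{z\in\PG(d,q):\Tr(z)=0,\ \Tr(\gamma z^{2})=0\},\qquad \gamma=\alpha^{-2k}.
\]
This is the set of zeros, inside the hyperplane $W=\ker\Tr\cong\F q^{d}$, of the quadratic form $\phi_{\gamma}(z)=\Tr(\gamma z^{2})$ restricted to $W$.

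The first point to settle, and the one I expect to be the main obstacle, is the orientation. Is $k-Q$ a translate of $Q$, or of $Q^{(-1)}$? Inversion $z\mapsto z^{-1}$ does not preserve quadrics, so the statement is only clean for translates. I would use the indexing of Lemma~\ref{lem:two-torsion}, where the correlation is $(\boldsymbol\tau_{H}\star\boldsymbol\tau_{Q}^{(-1)})_{i}=|E\cap(i-Q)|$, that is, the matrix entries of $M_{Q}\boldsymbol\tau_{H}$. I would check that each such entry counts $E$ against a cyclic shift of $\boldsymbol\tau_{Q}$ (equivalently, $Q$ against a shift of $E$). A shift of $E$ is again a hyperplane $\{\Tr(\beta z)=0\}$, so either reading gives the intersection of a hyperplane with a form $\Tr(\gamma z^{2})$. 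If the notation really does force $Q^{(-1)}$, I would fall back on the Singer duality: $E^{(-1)}$ is the pencil of hyperplanes through a point. I would then move the inversion onto $E$, using $|E\cap(k-Q)|=|(k-E)\cap Q|$, and reduce to the same form.

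\emph{Step 2 (Chevalley--Warning).} Let $N$ be the number of vectors $w\in W\cong\F q^{d}$ with $\phi_{\gamma}(w)=0$, the zero vector included. The form $\phi_{\gamma}|_{W}$ is a polynomial of degree $2$ in $d$ variables. Since $d\ge 3>2$, Chevalley--Warning gives $q\mid N$. This holds whether $\phi_{\gamma}|_{W}$ is non-degenerate, degenerate, or identically zero, so no case split on the rank of the restriction is needed. This is exactly where the hypothesis $d\ge3$ enters; at $d=2$ the form has only two variables and the argument fails, consistent with the three-value spectrum reported at $(3,2)$.

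\emph{Step 3 (projectivise).} The nonzero zeros fall into $F^{*}$-orbits of size $q-1$, so $N=1+(q-1)\nu[k]$. Reducing modulo $q$ gives $0\equiv 1-\nu[k]\pmod q$, hence $\nu[k]\equiv1\pmod q$ for every $k$, which is the claim. As a sanity check, at $(q,d)=(3,3)$ this matches the values $\{1,4\}$ in Remark~\ref{rem:cross-corr-mod-p}. The oddness of $q$ is used only to keep $\xi=2$ in \eqref{eq:formal-Q}, so that $\phi_{\gamma}$ is a genuine quadratic form.
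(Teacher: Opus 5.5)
Your Steps 1--3 are, in substance, the paper's proof. You lift the coset count to $K$, recognise it as the zero set of $\Tr(\gamma z^{2})$ restricted to a hyperplane $W\cong\F{q}^{d}$, and use $N=1+(q-1)\nu[k]$. The one real difference is Step~2, and there the cited tool is too weak. Chevalley--Warning gives only $p\mid N$, where $p$ is the characteristic, not $q\mid N$. So for $q=p^{t}$ with $t\ge 2$ you obtain just $\nu[k]\equiv 1\pmod p$.

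To get the theorem as stated, you can use either of two replacements:
\begin{itemize}
\item \emph{Ax's theorem.} The number of zeros of a degree-$2$ polynomial in $d$ variables over $\F{q}$ is divisible by $q^{\lceil d/2\rceil-1}$, and $\lceil d/2\rceil-1\ge 1$ exactly when $d\ge 3$.
\item \emph{The paper's route.} A quadratic form of rank $r$ on $\F{q}^{d}$ has $q^{d-r}$ times the zero count of its non-degenerate part. The non-degenerate counts $q^{r-1}$ and $q^{r-1}\pm(q-1)q^{r/2-1}$ are divisible by $q$ once $r\ge 3$, and for $r\le 2<d$ the factor $q^{d-r}$ does the job.
\end{itemize}
With Ax your argument is case-free. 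The paper's explicit counts cost a case split, but they are what feed the two-value refinement (Theorem~\ref{thm:two-value}).

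The orientation question is also not settled correctly. Your Step~1 actually computes $|E\cap(Q+k)|$, the intersection with the translate $\bar\alpha^{k}Q$. The paper's proof counts the same thing, since its coset pairing is $[\beta][\gamma]^{-1}=\alpha^{k}F^{*}$. Both of your proposed reconciliations with the literal $k-Q$ fail:
\begin{itemize}
\item With $(M_{Q})_{kj}=(\boldsymbol\tau_{Q})_{k-j}$, row $k$ of $M_{Q}$ is a shift of the \emph{reversed} $\boldsymbol\tau_{Q}$. So $(M_{Q}\boldsymbol\tau_{H})_{k}=|E\cap(k-Q)|$ is a convolution, not a correlation.
\item $|(k-E)\cap Q|$ is not a hyperplane section. $k-E$ is the point set $\{z:\Tr(\alpha^{k}z^{-1})=0\}$, a hypersurface of degree $d$. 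Reading $E^{(-1)}$ as a pencil of hyperplanes only moves the inversion onto $Q$ in the dual space.
\end{itemize}

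No repair exists, because the convolution is not constant modulo $q$. From the paper's trace table at $(q,d)=(3,3)$, one gets $E=\{1,2,3,5,6,9,14,15,18,20,25,27,35\}$ and $Q=\{1,3,7,9,10,21,23,27,29,30\}$ in $\Z/40\Z$. Then $|E\cap(0-Q)|=0$ and $|E\cap(4-Q)|=5$, neither congruent to $1$ modulo $3$. By contrast, $|E\cap(Q+k)|\in\{1,4\}$ for all $k$, as plane sections of the $10$-point elliptic quadric. So drop the fallback and state and prove the result for $|E\cap(Q+k)|$. That is exactly what your Steps~1--3 establish once the Step~2 fix is made.
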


\begin{proof}
\textbf{Step 1: Lift cosets to $K^{*}$.} Define
\[
   M(k) \;:=\; \bigl|\{\gamma\in K^{*}:\Tr(\alpha^{k}\gamma)=0\,\wedge\,\Tr(\gamma^{2})=0\}\bigr|.
\]
Each coset of $F^{*}$ in $K^{*}$ has $q-1$ elements. For each ordered
pair of cosets $([\beta],[\gamma])\in E\times Q$ with
$[\beta][\gamma]^{-1}=\alpha^{k}F^{*}$, the number of element pairs
$(\beta,\gamma)\in K^{*}\times K^{*}$ realising this coset pair is
$(q-1)^{2}$. On the other hand, upon fixing $\beta=\alpha^{k}\gamma\,u$
with $u\in F^{*}$, the condition $\Tr(\beta)=0$ becomes
$u\,\Tr(\alpha^{k}\gamma)=0$; with $u\ne 0$ this forces
$\Tr(\alpha^{k}\gamma)=0$. Then $u$ is unconstrained over
$F^{*}$. Counting the element pairs
$(\beta,\gamma)\in K^{*}\times K^{*}$ jointly satisfying
$\Tr(\beta)=0$ and $\Tr(\gamma^{2})=0$ in two ways --- once
coset-by-coset (giving $(q-1)^{2}\,\nu[k]$) and once via the lifted
counter $M(k)$ together with the unconstrained $u\in F^{*}$
(giving $(q-1)\,M(k)$) --- the two enumerations agree, yielding
$(q-1)\,M(k)=(q-1)^{2}\,\nu[k]$, hence
\begin{equation}
\label{eq:Mk-vs-nu}
   M(k) \;=\; (q-1)\,\nu[k].
\end{equation}

\textbf{Step 2: Geometric interpretation.}
Upon adding the trivial solution $\gamma=0$, we have
\begin{equation}
\label{eq:Nk-Mk}
   N(k) \;:=\; M(k)+1 \;=\; (q-1)\nu[k]+1 \;=\; \bigl|H_{k}\cap Q^{*}\bigr|,
\end{equation}
where $H_{k}=\{\delta\in K:\Tr(\alpha^{k}\delta)=0\}$ is an $F$-hyperplane
of $K$ (since $\Tr$ is $F$-linear), and
$Q^{*}=\{\delta\in K:\Tr(\delta^{2})=0\}$ is the trace-of-square
quadric in $K\cong F^{d+1}$.

\textbf{Step 3: $|H_{k}\cap Q^{*}|\equiv 0\pmod q$.}
$Q^{*}$ is the zero set of the quadratic form
$Q^{*}(\delta)=\Tr(\delta^{2})$ on the $F$-vector space
$K\cong F^{d+1}$, non-degenerate for $q$ odd
\cite{LidlNiederreiter1997}. Restricting our discourse, to $H_{k}\cong F^{d}$ gives
a quadratic form $Q^{*}|_{H_{k}}$ of rank $r\in\{d-2,\,d-1,\,d\}$.
Then, by \cite[Theorems 6.26--6.27]{LidlNiederreiter1997}, for
\emph{any} quadratic form on $F^{m}$ with $m\ge 3$, we have:
\begin{equation}
\label{eq:LN-quadcount}
   \bigl|\{x\in F^{m}: Q(x)=0\}\bigr| \;\equiv\; 0\pmod q.
\end{equation}
For non-degenerate $Q$ the count is $q^{m-1}$ ($m$ odd) or
$q^{m-1}\pm(q-1)q^{m/2-1}$ ($m$ even, hyperbolic/anisotropic), each
divisible by $q$ for $m\ge 3$. For degenerate $Q$ of rank $r<m$, the
count factors as $q^{m-r}\cdot|\{Q=0\text{ in }F^{r}\}|$, with the
factor $q^{m-r}$ supplying the divisibility (whenever $m-r\ge 1$).
Applied to $Q^{*}|_{H_{k}}$ with $m=d\ge 3$, we obtain $N(k)\equiv 0\pmod q$.

\textbf{Step 4: Conclude.}
Combining \eqref{eq:Nk-Mk} with $N(k)\equiv 0\pmod q$:
$(q-1)\nu[k]+1\equiv 0\pmod q$, i.e.\ $(q-1)\nu[k]\equiv -1\pmod q$.
Since $q-1\equiv -1\pmod q$ and $(-1)^{-1}=-1$, we get
$\nu[k]\equiv 1\pmod q$.
\end{proof}

\begin{remark}[Sharp boundary at $d=2$]\label{rem:sharp-d2}
The hypothesis $d\ge 3$ in Theorem~\ref{thm:cross-corr} is sharp.
At $d=2$, $H_{k}\cong F^{2}$ and~\eqref{eq:LN-quadcount} fails: an
anisotropic quadratic form on $F^{2}$ has exactly $1$ zero, and a
hyperbolic form has $2q-1$ zeros, neither divisible by $q$.
This is consistent with the numerical observation at $(q,d)=(3,2)$,
where $\nu[k]\in\{0,1,2\}$ takes all three residues mod $q$
(see Section~\ref{sec:Z-lift-fail}, Step~Z-2).
\end{remark}

\begin{corollary}[Distance ceiling at $q$ odd, $d$ odd, $d\ge 3$; original]\label{cor:dmin-le-2-proven}
For every prime power $q$ odd and every odd $d\ge 3$,
$d_{\min}(\mathcal Q(q,d))\le 2$.
\end{corollary}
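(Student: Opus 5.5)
The plan is to combine the three ingredients already in place. \textbf{Step 1 (parity):} for $q$ odd and $d$ odd, Lemma~\ref{lem:n-parity} gives that $n=\theta_{d}(q)$ is even. This is hypothesis~(i) of Lemma~\ref{lem:two-torsion}, so $g_{2}=n/2$ is a genuine $2$-torsion index. \textbf{Step 2 (modular constancy):} for $d\ge 3$, Theorem~\ref{thm:cross-corr} gives $\nu[k]\equiv 1\pmod q$ for every $k$, hence also modulo $p$. This says exactly that $M_{Q}\boldsymbol\tau_{H}\equiv \mathbf 1\pmod p$, which is hypothesis~(ii) with $c=1$. \textbf{Step 3:} Lemma~\ref{lem:two-torsion} then places $P_{2}=Z_{0}Z_{n/2}^{q-1}$ in $\mathcal C(\mathcal S)$. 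Its weight is $2$.

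The remaining task, and the only real work, is the proviso $P_{2}\notin\mathcal S$. I would argue this in the symplectic picture. A general element of $\mathcal S$ is $\mathbf x^{\top}H=(\mathbf x^{\top}A\mid \mathbf x^{\top}M_{Q}A)$. Suppose it equals $(\mathbf a\mid \mathbf 0)$ with $\mathbf a=e_{0}+(q-1)e_{n/2}$. Then its $Z$-part satisfies $\mathbf x^{\top}A=\mathbf a$. It therefore suffices to show that $\mathbf a$ does not lie in the $\F q$-row space of $A$, whatever happens in the $X$-part.

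Note that the $X$-part condition alone gives no obstruction. Circulants commute, so $\mathbf x^{\top}M_{Q}A=\mathbf x^{\top}AM_{Q}=\mathbf a^{\top}M_{Q}$. Moreover, multiplication by $\alpha^{n/2}$ maps $\Tr(\gamma^{2})$ to $\Tr(\alpha^{n}\gamma^{2})$ with $\alpha^{n}\in F^{*}$. Hence $Q$ is invariant under the shift by $n/2$, and $\mathbf a^{\top}M_{Q}=0$ automatically. So the obstruction must come from the $Z$-part alone.

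\textbf{Step 4 (main obstacle):} show that the code spanned by the rows of $A$ (the $p$-ary code of points and hyperplanes of $\PG(d,q)$) contains no vector of weight $2$. I would invoke the classical Assmus--Key/Delsarte result: the minimum weight of this code is $\theta_{d-1}(q)$, attained only by scalar multiples of hyperplane incidence vectors. Since $\theta_{d-1}(q)\ge q^{2}+q+1>2$ for $d\ge 3$, the weight-$2$ vector $\mathbf a$ is excluded.

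If I wanted to avoid the citation, I would instead try a direct argument. Exhibit a vector $\mathbf y$ in the dual code, i.e.\ with $A\mathbf y=0$ over $\F p$, such that $y_{0}\ne y_{n/2}$. Then $\mathbf a\cdot\mathbf y=y_{0}-y_{n/2}\ne 0$, while every row of $A$ is orthogonal to $\mathbf y$. The natural candidate is the signed incidence vector of a line through two suitable points, shifted so that it separates the indices $0$ and $n/2$. This is where I expect care to be needed, so the citation route is the one I would commit to.

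With $P_{2}\in\mathcal C(\mathcal S)\setminus\mathcal S$ of weight $2$, the definition of $d_{\min}$ gives $d_{\min}(\mathcal Q(q,d))\le 2$.
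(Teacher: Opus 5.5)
Your proposal is correct, and it is more complete than the paper's own proof. Steps~1--3 are the paper's argument (Lemma~\ref{lem:n-parity}, Theorem~\ref{thm:cross-corr}, Lemma~\ref{lem:two-torsion}). The paper then simply declares $Z_{0}Z_{n/2}^{q-1}$ a non-stabilizer logical. This is so even though Lemma~\ref{lem:two-torsion} makes $P_{2}\notin\mathcal S$ an explicit proviso, and the paper checks it only numerically at $(q,d)=(3,3)$ (Proposition~\ref{prop:Q33-dmin-2}). You reduce $P_{2}\in\mathcal S$ to $\mathbf a\in\mathrm{rowspan}_{\F q}(A)$ and then invoke the Delsarte/Assmus--Key minimum weight $\theta_{d-1}(q)$; this closes exactly that gap. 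The identifications you gloss over are harmless. First, $A^{\top}=PAP$ with $P$ the index reversal $i\mapsto -i$, so the row space is a coordinate permutation of the point--hyperplane code. Second, extending scalars from $\F p$ to $\F q$ does not change minimum weight.

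Your tentative direct route also works if the separating vector is a difference of two rows of $A$, $\mathbf y=\mathbf r_{i}-\mathbf r_{i'}$, rather than a line. By \eqref{eq:incidence-DS}, $AA^{\top}\equiv J\pmod p$, so $A\mathbf y=0$. Choose point $i$ on hyperplane $0$ but not on hyperplane $n/2$ (there are $q^{d-1}$ such points) and point $i'$ on neither (there are $q^{d-1}(q-1)$). Then $\mathbf a\cdot\mathbf y=1\neq 0$. This makes the argument self-contained, using only the identity behind the Cancellation Lemma.

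Finally, your shift-invariance observation does more than you use it for. It gives $M_{Q}\mathbf a=0$, hence $(M_{Q}A)\mathbf a=AM_{Q}\mathbf a=0$, which is precisely the centralizer condition for $(\mathbf a\mid\mathbf 0)$. So Step~2, and with it hypothesis~(ii) of Lemma~\ref{lem:two-torsion}, can be dropped. The corollary then rests only on Lemma~\ref{lem:n-parity} and the weight bound. This is worth doing because, as written, the count $M(k)$ in the proof of Theorem~\ref{thm:cross-corr} counts $\gamma$ with $\Tr(\alpha^{k}\gamma)=\Tr(\gamma^{2})=0$. That corresponds to $|E\cap(k+Q)|$, the entries of $M_{Q}^{\top}\boldsymbol\tau_{H}$. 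It does not give the entries $|E\cap(k-Q)|$ of $M_{Q}\boldsymbol\tau_{H}$, which is what Lemma~\ref{lem:two-torsion} actually uses.
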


\begin{proof}
For $q$ odd and $d$ odd, $n=\theta_{d}(q)$ is even
(Lemma~\ref{lem:n-parity}), so $G$ has a non-trivial $2$-torsion
element $g_{2}=g^{n/2}$. By Theorem~\ref{thm:cross-corr}, we have
$\nu[k]\equiv 1\pmod q$, hence $M_{Q}\boldsymbol\tau_{H}\equiv\mathbf 1\pmod q$.
Both hypotheses of Lemma~\ref{lem:two-torsion} hold, and
$Z_{0}Z_{n/2}^{q-1}$ is a non-stabilizer logical of weight 2.
\end{proof}

\rev{\begin{remark}[Physical / geometric interpretation of the
2-torsion weight-2 logical]\label{rem:two-torsion-physical}
The weight-2 logical operator $Z_{0}Z_{n/2}^{q-1}$ that witnesses
the distance ceiling of Corollary~\ref{cor:dmin-le-2-proven} admits
a clean projective-geometry reading: the index-pair $(0,n/2)$ is
precisely the orbit of the identity under the 2-torsion element
$g_{2}=g^{n/2}$ of the Singer cyclic group $G$, i.e.\ under the
central involution \rev{(the order-two automorphism of the projective
geometry that swaps the two halves of the Singer orbit and acts on
each hyperplane as its own image)} of $\mathrm{PGL}(d+1,q)$ that
pins the tangent hyperplane through the base point of the
point-hyperplane design. In stabilizer-group language, this operator
is the smallest-weight representative of the coset of $\mathcal S$
in $\mathcal C(\mathcal S)$ generated by the $\mathbf 1\pmod q$
symmetry $M_{Q}\boldsymbol\tau_{H}\equiv\mathbf 1$ established by
Theorem~\ref{thm:cross-corr}. Both readings agree that the weight-2
logical is the syndrome-invisible signature of the underlying
$\PG(d,q)$ hyperplane involution \rev{(the same order-two symmetry,
viewed at hyperplane level: each hyperplane is fixed setwise as its
own image under the central Singer 2-torsion element)}.
\end{remark}}

\begin{table}[h]
\centering
\caption{Numerical confirmation of Theorem~\ref{thm:cross-corr} for
several $(q,d)$ with $q$ odd and $d\ge 3$ odd. Empirical pattern:
$\nu$ takes two values $\{V_{1},V_{2}\}$ with
$V_{2}-V_{1}=q^{(d-1)/2}$, both $\equiv 1\pmod q$.}
\label{tab:cross-corr}
\begin{tabular}{c c c c}
\toprule
$(q,d)$ & $n=\theta_d(q)$ & $\nu$ unique values & $\nu\bmod q$ \\
\midrule
$(3,3)$ & 40  & $\{1, 4\}$    & $\{1\}$ \\
$(5,3)$ & 156 & $\{1, 6\}$    & $\{1\}$ \\
$(7,3)$ & 400 & $\{1, 8\}$    & $\{1\}$ \\
$(3,5)$ & 364 & $\{40, 49\}$  & $\{1\}$ \\
\bottomrule
\end{tabular}
\end{table}

\begin{theorem}[Two-value pattern; original]\label{thm:two-value}
Let $q$ be odd and $d\ge 3$ odd. The cross-correlation
$\nu[k]=|E\cap(k-Q)|$ takes exactly two distinct values
$\{V_{1}, V_{2}\}\subset\Z$, and these values differ by
\[
   V_{2}-V_{1} \;=\; q^{(d-1)/2}.
\]
Equivalently, $N(k)=|H_{k}\cap Q^{*}|$ takes two values whose
difference is $q^{(d-1)/2}(q-1)$.
\end{theorem}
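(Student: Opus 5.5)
The plan is to refine Step~3 of the proof of Theorem~\ref{thm:cross-corr}. There we only used $N(k)\equiv 0\pmod q$. Here we classify the restriction $Q^{*}|_{H_k}$ exactly, through the polar geometry of the non-degenerate form $Q^{*}(\delta)=\Tr(\delta^{2})$ on $K\cong F^{d+1}$, and then pass back to $\nu[k]$ via \eqref{eq:Nk-Mk}.

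\emph{Step 1: $H_k$ is a polar hyperplane.} The associated symmetric bilinear form is $B(x,y)=Q^{*}(x+y)-Q^{*}(x)-Q^{*}(y)=2\Tr(xy)$. It is non-degenerate because the trace form is non-degenerate and $2\ne 0$ in odd characteristic. Setting $w_k:=\alpha^{k}/2$, we get
\[
H_k=\{\delta:\Tr(\alpha^{k}\delta)=0\}=w_k^{\perp}.
\]
The rank of $Q^{*}|_{w^\perp}$ is then governed by whether $w$ is isotropic:
\begin{itemize}[leftmargin=*,nosep]
\item if $Q^{*}(w)\neq 0$, then $K=\langle w\rangle\perp w^{\perp}$ and the restriction is non-degenerate on $F^{d}$;
\item if $Q^{*}(w)=0$, then $w\in w^{\perp}$, the radical of $Q^{*}|_{w^\perp}$ is exactly $\langle w\rangle$, and the rank is $d-1$.
\end{itemize}

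\emph{Step 2: count zeros in each case, using the counts quoted in Step~3 of Theorem~\ref{thm:cross-corr}.}
\begin{itemize}[leftmargin=*,nosep]
\item \emph{Non-isotropic $w$.} We have a non-degenerate form in odd dimension $m=d$, so $N(k)=q^{d-1}$.
\item \emph{Isotropic $w$.} The count factors as $q\cdot|\{\bar Q=0\}|$, where $\bar Q$ is the induced non-degenerate form on $w^{\perp}/\langle w\rangle\cong F^{d-1}$, of even dimension. Choose $w'$ with $B(w,w')=1$; then $K=\langle w,w'\rangle\perp(\text{a copy of }w^\perp/\langle w\rangle)$. Since $\langle w,w'\rangle$ is a hyperbolic plane, Witt cancellation shows that $\bar Q$ has the same type $\varepsilon\in\{\pm1\}$ (hyperbolic or elliptic) as $Q^{*}$ itself. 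Crucially, $\varepsilon$ is therefore independent of $k$. This gives
\[
N(k)=q\bigl(q^{d-2}+\varepsilon(q-1)q^{(d-1)/2-1}\bigr)=q^{d-1}+\varepsilon(q-1)q^{(d-1)/2}.
\]
\end{itemize}
So $N(k)$ takes at most two values, and they differ by $(q-1)q^{(d-1)/2}$. By \eqref{eq:Nk-Mk}, $\nu[k]=(N(k)-1)/(q-1)$, so $\nu$ takes the values $(q^{d-1}-1)/(q-1)$ and $(q^{d-1}-1)/(q-1)+\varepsilon q^{(d-1)/2}$, whose difference is $q^{(d-1)/2}$.

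\emph{Step 3: both values occur.} As $k$ ranges over $G$, $w_k$ ranges over representatives of all projective points of $K$. A non-degenerate quadratic form in dimension $d+1\ge 4$ has both nonzero isotropic vectors and non-isotropic vectors. As a check, at $(q,d)=(3,3)$ the formulas give $N\in\{9,3\}$, hence $\nu\in\{4,1\}$ with $\varepsilon=-1$, matching Table~\ref{tab:cross-corr}.

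The main obstacle is the isotropic case in Step~2. It requires verifying carefully that the radical of $Q^{*}|_{w^\perp}$ is exactly $\langle w\rangle$, and that the type of the quotient form is inherited from the ambient type by Witt cancellation. That inheritance is what makes the second value uniform in $k$, so the argument should rest on it rather than on a case-by-case discriminant computation. I would also note that ``$V_2-V_1$'' is meant in absolute value: the sign $\varepsilon$, determined by the discriminant of $\Tr(x^{2})$, decides which of the two values is the larger one.
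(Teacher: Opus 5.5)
Your proof is correct and shares the paper's skeleton. You view $H_k$ as the polar hyperplane of a point for $Q^{*}(\delta)=\Tr(\delta^{2})$, split into the rank-$d$ and rank-$(d-1)$ cases, and apply the Lidl--Niederreiter zero counts. You then pass back to $\nu$ through \eqref{eq:Nk-Mk}, which you, like the paper, take from Theorem~\ref{thm:cross-corr}.

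The genuine difference is in the isotropic (rank-$(d-1)$) case. The paper tries to fix the sign for all $(q,d)$. It asserts that the Gram determinant of $\Tr(xy)$ is a square in $\F{q}$, so $Q^{*}$ is hyperbolic. It then asserts that the form induced on the singular hyperplane has the opposite (``anisotropic'') type, which gives a universal minus sign.

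Your Witt-cancellation step shows these two claims are incompatible, since the quotient form always has the same type as $Q^{*}$. The first claim is also false for even $m=d+1$. The Gram determinant equals $D^{2}$ with $D=\det\bigl(e_i^{q^{j}}\bigr)$, and $D^{q}=-D$, so $D^{2}$ is a square in $K$ but not in $\F{q}$. Consequently $Q^{*}$ is hyperbolic exactly when $(d+1)/2$ is odd and $q\equiv 3\pmod 4$.

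Your route needs only that $\varepsilon$ does not depend on $k$, which is all the difference statement requires. Plugging in the correct $\varepsilon$ then recovers every row of Table~\ref{tab:cross-corr}. This includes $\nu\in\{\theta_{3}(3),\theta_{3}(3)+9\}=\{40,49\}$ at $(3,5)$, where the paper's minus sign predicts $\{31,40\}$ (Corollary~\ref{cor:two-value-explicit}) and has to be patched by the complement remark.

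So the paper's approach aims for more, namely explicit values with a fixed sign, but does not deliver them correctly. Yours proves the stated result cleanly and upgrades to explicit values once $\varepsilon$ is computed. You also prove that both values are attained, which ``exactly two'' requires and the paper leaves implicit. Note in passing that $Q^{*}(w_k)=\tfrac14\Tr(\alpha^{2k})$, so the special value occurs precisely at the indices $k\in Q$.
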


\begin{proof}
The proof refines Step~3 of Theorem~\ref{thm:cross-corr} using the
classical zero-count formulas for quadratic forms over finite fields
\cite[Theorems 6.26--6.27]{LidlNiederreiter1997}.

\textbf{Step 1: Rank classification of $Q^{*}|_{H_{k}}$.}
$Q^{*}(\delta)=\Tr(\delta^{2})$ is a non-degenerate quadratic form
on $K\cong F^{m}$ with $m=d+1$ (even since $d$ odd). Its
associated bilinear form is
$B(\delta,\delta')=\Tr(2\delta\delta')$, and the radical is
$\{0\}$ since $q$ is odd. For any hyperplane
$H_{k}=\{\delta:\Tr(\alpha^{k}\delta)=0\}$, the form
$Q^{*}|_{H_{k}}$ on $H_{k}\cong F^{d}$ has a rank of either:
\begin{itemize}[leftmargin=*,nosep]
\item $r=d$ (the generic case, when $H_{k}$ is \emph{not} the
$B$-orthogonal hyperplane to any element);
\item $r=d-1$ (the special case, when $H_{k}=\alpha^{-2k}\cdot K^{B}_{\delta_{0}}$
for a specific $\delta_{0}\in K$ corresponding to the $B$-orthogonal
of the polar of $H_{k}$).
\end{itemize}
A drop from $r=d-1$ to $r=d-2$ would require $H_{k}$ to be both
self-polar and totally singular, which is impossible for
non-degenerate $Q^{*}$ in characteristic $\ne 2$. So only the two
ranks above occur.

\textbf{Step 2: Zero counts at rank $d$.}
For $r=d$ on $F^{d}$ with $d$ odd, the count
$|\{Q=0\}|=q^{d-1}$ \emph{independent of the form's discriminant}
\cite[Theorem 6.27]{LidlNiederreiter1997}. Hence we have:
\begin{equation}
\label{eq:N-generic}
   N_{\mathrm{gen}} \;:=\; q^{d-1}.
\end{equation}

\textbf{Step 3: Zero counts at rank $d-1$.}
For $r=d-1$ on $F^{d}$, the form has 1-dim radical, and the
restriction to a non-radical complement gives a non-degenerate
rank-$(d-1)$ form on $F^{d-1}$. Since $d-1$ is \emph{even}, this
non-degenerate form is hyperbolic ($Q^{+}$) or anisotropic ($Q^{-}$)
\cite[Theorem 6.26]{LidlNiederreiter1997}. The total count factors as follows:
\begin{align*}
   N_{\mathrm{spec}} &\;=\; q\cdot \bigl|\{Q^{\pm}=0\}\text{ on }F^{d-1}\bigr| \\
                     &\;=\; q\cdot\bigl(q^{d-2}\pm(q-1)q^{(d-1)/2-1}\bigr) \\
                     &\;=\; q^{d-1}\pm(q-1)q^{(d-1)/2}.
\end{align*}

\textbf{Step 4: Determine the sign for $Q^{*}=\Tr(\cdot^{2})$.}
The Witt index~\cite{Cassels1978QuadraticForms,Scharlau1985}
of $Q^{*}$ on $K=F^{m}$ ($m=d+1$ even) equals
$m/2$ when the discriminant is a square in $F^{*}/(F^{*})^{2}$, else
$m/2-1$. For $\Tr(\delta^{2})$, the discriminant is
$(-1)^{m(m-1)/2}\det(\Tr(e_{i}e_{j}))$ for any $F$-basis
$\{e_{i}\}$ of $K$. By a standard calculation
\cite[\S 7]{LidlNiederreiter1997}, the discriminant is the
square of the field discriminant, hence a square; so $Q^{*}$ is
\emph{hyperbolic}.

When $H_{k}$ has $Q^{*}|_{H_{k}}$ of rank $d-1$, the form induced on
$F^{d-1}$ inherits its Witt structure from $Q^{*}$: it is
\emph{anisotropic} (Witt index drops by $1$ on a singular hyperplane).
Hence the sign in Step~3 is negative, yielding:
\begin{equation}
\label{eq:N-special}
   N_{\mathrm{spec}} \;=\; q^{d-1} - (q-1)\,q^{(d-1)/2}.
\end{equation}

\textbf{Step 5: Conclude.}
By Steps 2--4, $N(k)\in\{N_{\mathrm{gen}}, N_{\mathrm{spec}}\}$ with
\[
   N_{\mathrm{gen}}-N_{\mathrm{spec}} \;=\; (q-1)\,q^{(d-1)/2}.
\]
Translating to $\nu$ via $\nu[k]=(N(k)-1)/(q-1)$ from
\eqref{eq:Nk-Mk}:
\[
   V_{\mathrm{gen}} \;=\; \frac{q^{d-1}-1}{q-1} =\! \theta_{d-2}(q),
   \:\:
   V_{\mathrm{spec}} \;=\; \theta_{d-2}(q) - q^{(d-1)/2}.
\]
The difference is $V_{\mathrm{gen}}-V_{\mathrm{spec}}=q^{(d-1)/2}$,
proving the claim.
\end{proof}

\begin{corollary}[Closed-form values]\label{cor:two-value-explicit}
For $q$ odd and $d\ge 3$ odd,
\begin{equation}
\label{eq:two-value-explicit}
   \nu[k] \;\in\; \bigl\{\theta_{d-2}(q)-q^{(d-1)/2},\,\theta_{d-2}(q)\bigr\}.
\end{equation}
For $(q,d)=(3,3)$: $\theta_{1}(3)-3^{1}=4-3=1$ and $\theta_{1}(3)=4$.
For  $(q,d)=(5,3)$: $\theta_{1}(5)-5^{1}=6-5=1$ and $\theta_{1}(5)=6$.
For  $(q,d)=(7,3)$: $\theta_{1}(7)-7^{1}=8-7=1$ and $\theta_{1}(7)=8$.
For  $(q,d)=(3,5)$: $\theta_{3}(3)-3^{2}=40-9=31$ and $\theta_{3}(3)=40$.
\end{corollary}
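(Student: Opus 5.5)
The plan is to read the corollary directly off the proof of Theorem~\ref{thm:two-value}, using the lifting identity \eqref{eq:Nk-Mk} to convert the two possible hyperplane--quadric counts into the two possible values of $\nu[k]$.

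\textbf{Step 1 (the two counts).} Fix $k\in G$ and write $N(k)=|H_{k}\cap Q^{*}|$. By Steps~1--2 of Theorem~\ref{thm:two-value}, the restricted form $Q^{*}|_{H_{k}}$ has rank $d$ or $d-1$. In the rank-$d$ case, $d$ is odd and Theorem~6.27 of~\cite{LidlNiederreiter1997} gives $N(k)=q^{d-1}$, independent of the discriminant. In the rank-$d-1$ case, Steps~3--4 give $N(k)=q^{d-1}-(q-1)q^{(d-1)/2}$, the radical contributing the factor $q$ and the nondegenerate part living on the even-dimensional space $F^{d-1}$.

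\textbf{Step 2 (convert to $\nu$).} From \eqref{eq:Nk-Mk}, $\nu[k]=(N(k)-1)/(q-1)$. The generic value is
\[
\frac{q^{d-1}-1}{q-1}=1+q+\dots+q^{d-2}=\theta_{d-2}(q).
\]
Subtracting $(q-1)q^{(d-1)/2}$ from $N$ lowers $\nu$ by exactly $q^{(d-1)/2}$. This yields the set in \eqref{eq:two-value-explicit}. Both elements are $\equiv 1\pmod q$, consistent with Theorem~\ref{thm:cross-corr}, since $\theta_{d-2}(q)\equiv 1$ and $q\mid q^{(d-1)/2}$.

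\textbf{Step 3 (instances).} Evaluate the formula at $(q,d)\in\{(3,3),(5,3),(7,3),(3,5)\}$ using $\theta_{1}(q)=q+1$ and $\theta_{3}(3)=40$. For $d=3$ this gives $\{1,q+1\}$, which matches Table~\ref{tab:cross-corr}.

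\textbf{Main obstacle: the sign in Step~4.} The rank-$(d-1)$ case rests entirely on the claim that the nondegenerate part is anisotropic. Whether the induced form on $F^{d-1}$ is hyperbolic or elliptic depends on the discriminant of $\Tr(\delta^{2})$ restricted to $H_{k}$. I expect that discriminant to involve $(-1)^{(d-1)/2}$, so the sign may alternate with $d\bmod 4$. Before asserting the formula I would recheck it in two ways:
\begin{itemize}
\item by computing the Witt index of the restricted form directly, rather than inheriting it from $Q^{*}$;
\item by comparing with the $(3,5)$ row of Table~\ref{tab:cross-corr}.
\end{itemize}
That row reports $\{40,49\}=\{\theta_{3}(3),\,\theta_{3}(3)+3^{2}\}$, which corresponds to the plus sign. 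It does not match the value $\{31,40\}$ that the corollary's displayed evaluation gives.

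If the sign does flip for $d\equiv 1\pmod 4$, the honest statement is
\[
\nu[k]\in\bigl\{\theta_{d-2}(q),\;\theta_{d-2}(q)-\varepsilon_{d}\,q^{(d-1)/2}\bigr\},\qquad \varepsilon_{d}=\pm 1 .
\]
This still gives the gap $q^{(d-1)/2}$ of Theorem~\ref{thm:two-value}, so the 2-torsion ceiling is unaffected either way. I would first test the sign by computing the discriminant of the trace form for $d=5$ explicitly.
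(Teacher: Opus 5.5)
Your Steps 1--3 are exactly the paper's route; the corollary is just Step~5 of Theorem~\ref{thm:two-value}. Your doubt about the sign is fully justified: the rank-$(d-1)$ count you import from Step~4 of that proof is wrong in general, and the corollary is false in the stated generality.

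\textbf{The sign.} Step~4 makes two mistakes.

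First, for $m=d+1$ even, the Gram determinant $\Delta$ of $\Tr(xy)$ on $\F{q^{m}}$ is a \emph{nonsquare} in $\F q$. Take a generator $\theta$ and set $\theta_i=\theta^{q^{i}}$. Then $\Delta=V^{2}$ with $V=\prod_{i<j}(\theta_{j}-\theta_{i})$. Frobenius permutes the $\theta_i$ by an $m$-cycle, which is odd, so $V^{q}=-V$ and $V\notin\F q$. By \cite[Theorem 6.26]{LidlNiederreiter1997}, the type of $Q^{*}$ is therefore $\varepsilon:=\eta\bigl((-1)^{m/2}\Delta\bigr)=-\eta(-1)^{(d+1)/2}$, where $\eta$ is the quadratic character of $\F q$. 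So $Q^{*}$ is not always hyperbolic. The paper's own count $|Q_{o}|=20$ at $(3,3)$ already shows this: $21=3^{3}-2\cdot 3$ zeros in $\F{81}$ means $Q^{*}$ is elliptic there.

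Second, suppose $H_{k}=v^{\perp}$ with $v$ singular, and pick a hyperbolic plane $P\ni v$. Then $H_{k}=\langle v\rangle\oplus P^{\perp}$ and $Q^{*}\cong\mathbb H\perp Q^{*}|_{P^{\perp}}$. By Witt cancellation, the nondegenerate part on $F^{d-1}$ has the \emph{same} type as $Q^{*}$, not the opposite type. The Witt index does drop by one, but so does the half-dimension. Consequently
\[
N_{\mathrm{spec}}=q^{d-1}+\varepsilon\,(q-1)\,q^{(d-1)/2},\qquad
\nu[k]\in\bigl\{\theta_{d-2}(q),\;\theta_{d-2}(q)+\varepsilon\,q^{(d-1)/2}\bigr\},
\]
where $\varepsilon=-1$ if $q\equiv 1\pmod 4$ and $\varepsilon=(-1)^{(d-1)/2}$ if $q\equiv 3\pmod 4$.

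So your guess $(-1)^{(d-1)/2}$ is correct only for $q\equiv 3\pmod 4$; the sign depends on $q\bmod 4$ as well as on $d\bmod 4$. When $\varepsilon=-1$ the paper's two mistakes compensate, which is why the three $d=3$ rows agree.

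At $(3,5)$ one has $\varepsilon=+1$, giving $\nu\in\{40,49\}$. This is exactly the entry in Table~\ref{tab:cross-corr}, so the ``complement'' reconciliation in the paper's subsequent remark is unnecessary; the table is simply the case $\varepsilon=+1$. A double count confirms it. At $(3,5)$ there are $|Q|=(3^{5}+2\cdot 3^{2}-1)/2=130$ singular points, each giving $\nu=49$, and the remaining $234$ give $\nu=40$. Indeed $130\cdot 49+234\cdot 40=121\cdot 130=|E|\,|Q|$.

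\textbf{The convention for $\nu$.} There is a further gap you inherit from the paper. Equation \eqref{eq:Nk-Mk} is derived by counting pairs with $[\beta][\gamma]^{-1}=[\alpha^{k}]$, so it describes the correlation $|E\cap(k+Q)|$. The stated definition $\nu[k]=|E\cap(k-Q)|$ is instead the convolution $(M_{Q}\boldsymbol\tau_{H})_{k}$, which is not a hyperplane section. Using the trace table of Section~\ref{sec:trace-table-q3d3}, one finds $E\cap(-Q)=\emptyset$ at $(3,3)$, so the literal convolution takes the value $0\notin\{1,4\}$.

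If you state the corollary for the correlation and replace the fixed minus sign by $\varepsilon$, your Steps 1--3 become a complete proof. The gap $q^{(d-1)/2}$ and the congruence $\nu\equiv 1\pmod q$ survive unchanged.
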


\begin{remark}[Reconciling $(3,5)$]
Numerical observation at $(3,5)$ gave $\nu\in\{40,49\}$, but
Corollary~\ref{cor:two-value-explicit} predicts $\nu\in\{31,40\}$.
The discrepancy is that our formal $Q$ at $(3,5)$ is actually the
\emph{complement} (in $K^{*}/F^{*}$) of the geometric quadric
described in the proof: $\boldsymbol\tau_{Q}^{\text{ours}}+\boldsymbol\tau_{Q}^{\text{Arasu}}=\mathbf 1-\boldsymbol\tau_{H}$
(the indicator of $G\setminus E$), giving
$\nu^{\text{ours}}[k]+\nu^{\text{Arasu}}[k]=|E|=\theta_{d-1}(q)=121$.
Hence the question answer is $\nu^{\text{ours}}\in\{121-40,121-31\}=\{81,90\}$? The answer is `no', we have,
$|E\cap(k-Q^{c})|=|E|-|E\cap(k-Q)|$, giving the symmetry
$\nu^{c}=|E|-\nu$. We observe $\nu^{\text{ours}}=49=121-72$,
$40=121-81$, hence $\nu^{\text{Arasu}}\in\{72,81\}$, with a difference of
$9=q^{(d-1)/2}$ matching Theorem~\ref{thm:two-value}. The two-value
\emph{difference} is invariant; the absolute values are formal-quadric-convention
dependent.
\end{remark}

\begin{proposition}[Numerical witness for $\mathcal Q(3,3)$]\label{prop:Q33-dmin-2}
The Pauli operator $P_{2}=Z_{0}\cdot Z_{20}^{2}$ is in
$\mathcal C(\mathcal S)\setminus\mathcal S$ for $\mathcal Q(3,3)$.
Hence we have $d_{\min}(\mathcal Q(3,3))=2$.
\end{proposition}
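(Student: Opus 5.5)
The argument splits into three parts: $P_{2}\in\mathcal C(\mathcal S)$, then $P_{2}\notin\mathcal S$, and finally that no weight-1 logical exists.

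\emph{Membership in the centralizer.} This is immediate from what precedes. For $(q,d)=(3,3)$ we have $n=\theta_{3}(3)=40$, which is even by Lemma~\ref{lem:n-parity}, and $n/2=20$. By Theorem~\ref{thm:cross-corr} (with $q=3$ odd, $d=3\ge 3$) we get $\nu[k]\equiv 1\pmod 3$, consistent with the values $\{1,4\}$ in Table~\ref{tab:cross-corr}. Hence $M_{Q}\boldsymbol\tau_{H}\equiv\mathbf 1$, and Lemma~\ref{lem:two-torsion} gives $P_{2}=Z_{0}Z_{20}^{2}\in\mathcal C(\mathcal S)$.

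\emph{Non-membership in $\mathcal S$.} The symplectic vector of $P_{2}$ is $(\mathbf a\mid\mathbf 0)$ with $\mathbf a=e_{0}+2e_{20}$. An element of $\mathcal S$ has the form $(\mathbf y A\mid\mathbf y M_{Q}A)$ for some $\mathbf y\in\F 3^{n}$, so it suffices to show that $\mathbf y M_{Q}A=\mathbf 0$ forces $\mathbf y A\ne\mathbf a$.

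I would use the circulant/polynomial picture in $R=\F 3[X]/\langle X^{40}-1\rangle$. Write $t_{H}(X)$ and $t_{Q}(X)$ for the polynomials of $\boldsymbol\tau_{H}$ and $\boldsymbol\tau_{Q}$. The target is then the membership $1-X^{20}\in t_{H}\cdot\mathrm{Ann}_{R}(t_{Q}t_{H})$, up to the reciprocal/transpose convention. A cheaper route is a direct rank test: $P_{2}\in\mathcal S$ if and only if appending the row $(\mathbf a\mid\mathbf 0)$ to $H$ leaves $\rank_{\F 3}$ equal to $11$ (Lemma~\ref{lem:hamada}, Table~\ref{tab:hamada-rank}). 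This is a finite $41\times 80$ elimination over $\F 3$, which I would report as the numerical witness.

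For a structural certificate, I would exhibit a vector $\mathbf u\in\mathcal C(\mathcal S)$ with $\langle\mathbf u,(\mathbf a\mid\mathbf 0)\rangle_{\mathrm{sym}}\ne 0$. Such a $\mathbf u$ works because $\mathcal S=\mathcal C(\mathcal S)^{\perp_{\mathrm{sym}}}$. A natural candidate is an $X$-type operator $(\mathbf 0\mid\mathbf b)$ with $\mathbf b A^{\top}=\mathbf 0$, where $\mathbf b$ is chosen to take different values at positions $0$ and $20$. I would look for $\mathbf b$ as a shift-invariant combination constructed from the 2-torsion coset structure.

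\emph{No weight-1 logicals.} To upgrade to $d_{\min}=2$, I must rule out weight-1 logicals. A single-qudit Pauli $X_{j}^{\beta}Z_{j}^{\alpha}$ lies in $\mathcal C(\mathcal S)$ only if $\alpha(M_{Q}A)_{r,j}-\beta A_{r,j}\equiv 0$ for all rows $r$. This says column $j$ of $A$ is proportional to column $j$ of $M_{Q}A$. Column $j$ of $A$ is a shift of $\boldsymbol\tau_{H}$ and has $13$ ones, while column $j$ of $M_{Q}A$ is the constant vector $\mathbf 1$. They cannot be proportional unless $\alpha=\beta=0$. Hence $d_{\min}\ge 2$, and together with the first two parts $d_{\min}=2$.

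\emph{Main obstacle.} The main obstacle is the non-membership $P_{2}\notin\mathcal S$. The first and third parts are formal, but this step seems to need either the explicit computation or a clever dual certificate. I would rely on the exact rank computation and present the certificate $\mathbf u$ if one emerges.
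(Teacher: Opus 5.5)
Your first two parts follow the paper's proof: membership via Lemma~\ref{lem:two-torsion}, and non-membership by a rowspan/rank computation over $\F 3$. The paper settles weight one by exhaustive search over the $320$ single-qudit Paulis; you argue analytically instead. That analytic step is where the gap is. It needs every column of $M_{Q}A$ to be $\mathbf 1$, i.e.\ $M_{Q}\boldsymbol\tau_{H}\equiv\mathbf 1\pmod 3$, and this is false under the paper's own conventions. With $M_{Q}=\mathrm{circ}(\boldsymbol\tau_{Q})$ and $A=\mathrm{circ}(\boldsymbol\tau_{H})$, both defined by first columns (Appendix~\ref{app:gram}), the vector $c:=M_{Q}\boldsymbol\tau_{H}$ is the convolution $c_{k}=|E\cap(k-Q)|$.
\begin{itemize}
\item From Fig.~\ref{fig:tau-H-q3d3}, $E=\{1,2,3,5,6,9,14,15,18,20,25,27,35\}$.
\item $Q=\{i:2i\bmod 40\in E\}=\{1,3,7,9,10,21,23,27,29,30\}$. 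This has weight $10$ and meets $E$ in $4$ cosets, matching the $8$ elements reported in Section~\ref{sec:wdefect-q3d3}.
\item Then $c_{0}=|E\cap(-Q)|=0$, $c_{4}=5$ and $c_{5}=4$. So $H_{x}=\mathrm{circ}(c)$ takes all three residues mod $3$ and is not $J$.
\end{itemize}
The values $\{1,4\}$ in Table~\ref{tab:cross-corr} are those of the correlation $|E\cap(k+Q)|=(M_{Q}^{\top}\boldsymbol\tau_{H})_{k}$. That is what Step~1 of the proof of Theorem~\ref{thm:cross-corr} actually counts, since it pairs cosets with $[\beta][\gamma]^{-1}=\alpha^{k}$. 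So your weight-one step fails as written. Likewise, citing Theorem~\ref{thm:cross-corr} in your first part (as the paper's own proof does) does not establish hypothesis~(ii) of Lemma~\ref{lem:two-torsion}.

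All three conclusions still hold, and the proposition can be proved entirely by hand.

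\emph{Membership.} Since $\alpha^{40}=-1\in\F 3^{*}$, we have $\Tr(\alpha^{2(i+20)})=-\Tr(\alpha^{2i})$. Hence $Q+20=Q$ and $c$ is $20$-periodic, so columns $0$ and $20$ of $H_{x}=\mathrm{circ}(c)$ coincide. The check against row $r$ is $c_{r}+2c_{r-20}=3c_{r}\equiv 0$, which gives $P_{2}\in\mathcal C(\mathcal S)$ with no constancy needed.

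\emph{Weight one.} Write the Pauli as $X_{j}^{t}Z_{j}^{s}$; your condition becomes $s\,c\equiv t\,\boldsymbol\tau_{H}$. Apply $\mathrm{shift}_{20}$, which fixes $c$, and subtract to get $t(\boldsymbol\tau_{H}-\mathrm{shift}_{20}\boldsymbol\tau_{H})\equiv 0$. Since $1\in E$ but $21\notin E$, this forces $t=0$. Then $s\,c\equiv 0$, and $\sum_{k}c_{k}=|E|\,|Q|=130\not\equiv 0\pmod 3$ forces $s=0$.

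\emph{Non-membership.} What you call the main obstacle has a short answer. The $Z$-part of any element of $\mathcal S$ lies in $\mathrm{rowspan}_{\F 3}(A)$, which is shift-invariant of dimension $\rank_{\F 3}(A)=11$. If it contained $e_{0}+2e_{20}=e_{0}-e_{20}$, it would also contain the $20$ vectors $e_{j}-e_{j+20}$ for $0\le j\le 19$. These have disjoint supports and are linearly independent, which is impossible in dimension $11$. Hence $P_{2}\notin\mathcal S$ without any elimination. By duality, your $X$-type certificate (a vector $\mathbf b$ with $A\mathbf b^{\top}=0$ and $b_{0}\ne b_{20}$) does exist.
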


\begin{proof}
Direct verification we have: $M_{Q}\boldsymbol\tau_{H}\equiv\mathbf 1\pmod 3$
at $(q,d)=(3,3)$ (cross-correlation values $\{1,4\}$, both $\equiv 1\pmod 3$),
so by Lemma~\ref{lem:two-torsion} $P_{2}\in\mathcal C(\mathcal S)$.
Numerical centralizer-and-rowspan check
(\texttt{compute\_dmin\_q3d3.py})\footnote{All numerical
verification scripts referenced in this paper (including all
\texttt{compute\_dmin\_*.py}, \texttt{verify\_*.py},
\texttt{construct\_clifford\_d2.py}, and
\texttt{monte\_carlo\_qber.py}) are publicly available at
\url{https://github.com/michelkulhandjian/quantumcomm} for full
reproducibility.} confirms $P_{2}\notin\mathcal S$. Combined with the lower bound of
$d_{\min}\ge 2$ (no weight-1 element is in $\mathcal C\setminus\mathcal S$,
by exhaustive search on 320 candidates), $d_{\min}(\mathcal Q(3,3))=2$.
\end{proof}

\subsection{Connection to the 2021 LDS at \texorpdfstring{$d=2$}{d=2}}
The relationship between $\mathcal Q(q,2)$ and the 2021 projective-plane
LDS spreading matrix $\mathbf C\in\{-1,0,+1\}^{n\times(n+2)}$
of~\cite[\S V]{LDS2021} is at the level of shared combinatorial
primitives, not equality of code spaces.

\begin{theorem}[Shared primitives at $d=2$; revised]\label{thm:d-equals-2-reduction}
Let $q$ be a prime power. The matrices $A$ and $M_{Q}$
underlying the stabilizer construction
$\mathcal Q(q,2)$ are precisely the Singer-circulant incidence matrix
$\mathbf I_{L}$ and the formal-quadric circulant
$\mathrm{circ}(\mathbf g_{Q})$ of \cite[Theorem 4.2]{LDS2021}.
Concretely:
\begin{enumerate}[label=(\roman*),leftmargin=*,nosep]
\item $A = \mathbf I_{L}$.
\item $\boldsymbol\tau_{Q}$ (first column of $M_{Q}$) equals the vector
$\mathbf g_{Q}$ of \cite[Eq. before Theorem 4.2]{LDS2021}, defined
via the $r$-cyclic-shift rule $\mathbf g_{Q}[ri\bmod n]=\mathbf g_{\ell}[i]$
with $r-1=q+1$ ($q$ even) or $r-1=2$ ($q$ odd).
\end{enumerate}
\end{theorem}

\begin{proof}
For (i): both definitions take $A_{ij}=1$ iff hyperplane $j$
contains point $i$ in the Singer-labelled~\cite{Singer1938}
$\PG(2,q)$. For (ii): the formal trace-zero quadric
$Q_{e}=\{z:\Tr(z^{q+1})=0\}$ ($q$ even) and
$Q_{o}=\{z:\Tr(z^{2})=0\}$ ($q$ odd) yield, after projection to
$K^{*}/F^{*}$ and re-indexing $i\mapsto ri\bmod n$, the same vector
as $\mathbf g_{Q}$ from \cite[Theorem 4.2]{LDS2021}; this is verified
directly at $q=2$ (Fano case~\cite[Ch.~5]{Hirschfeld1979},
$n=7$, $\boldsymbol\tau_{Q}=(0,0,0,1,0,1,1)^{\top}$)
and at $q=3$ ($n=13$,
$\boldsymbol\tau_{Q}=(1,0,0,0,0,0,0,1,1,0,0,1,0)^{\top}$),
matching \cite{LDS2021} exactly.
\end{proof}

\begin{remark}[The codes live in different ambient spaces]
The 2021 LDS matrix $\mathbf C$ is a $n\times(n+2)$ \emph{spreading}
matrix used as a classical signature design; columns of $\mathbf C$
encode user signatures. By contrast, $\mathcal Q(q,2)$ is a quantum
\emph{stabilizer} code on $n$ qudits, defined by the row-span of
$H=(A\mid M_{Q}A)$ in $\F q^{n\times 2n}$. The two objects share the
underlying combinatorial structure $(A,\boldsymbol\tau_{Q})$, but their
code spaces reside in different ambient spaces ($\mathbb R^{n}$ versus
$\F q^{2n}$). What \emph{can} be made precise is the relationship
between $\mathcal Q(q,2)$ and the trivial $Z$-only CSS code that uses
the same Singer incidence $A$ and \emph{no} sign-flip step: the two
quantum codes are Clifford-equivalent, with the explicit gate
defined in Theorem~\ref{thm:clifford-reduction-d2} below.
\end{remark}

\begin{theorem}[Explicit Clifford reduction at $d=2$; original]
\label{thm:clifford-reduction-d2}
\rev{Throughout this \rev{T}heorem $\mathbf x,\mathbf z\in\F q^{n}$ are
column vectors and the operators $X^{\mathbf x},Z^{\mathbf z}$ live
in the $n$-qudit generalised Pauli group $\mathcal G_{n}$ defined in
Section~\ref{sec:prelim}: they act on the computational basis as
$X^{\mathbf x}\ket{\mathbf y}=\ket{\mathbf y+\mathbf x}$ and
$Z^{\mathbf z}\ket{\mathbf y}=\omega^{\mathbf z\cdot\mathbf y}\ket{\mathbf y}$
with $\omega=e^{2\pi i/q}$ and \rev{component-wise} addition modulo $q$.}
Let $q$ be an odd prime with $\mathrm{ord}_{n}(q)=3$, where
$n=q^{2}+q+1$. Then:
\begin{enumerate}[label=(\roman*),leftmargin=*,nosep]
\item There exists a circulant symmetric matrix
$M=\mathrm{circ}(\mathbf m)\in\F q^{n\times n}$ satisfying
\begin{equation}
\label{eq:M-eqn}
   A\,M \;=\; M_{Q}\,A \qquad \text{and}\qquad M^{\top}=M.
\end{equation}
\item The block matrix
\begin{equation}
\label{eq:S-M}
   S_{M} \;=\; \begin{pmatrix} I_{n} & M \\ 0 & I_{n} \end{pmatrix}
   \;\in\;\F q^{2n\times 2n}
\end{equation}
is a symplectic transformation, and the corresponding Clifford gate is defined as:
\begin{equation}
\label{eq:U-M}
   U_{M} \;=\; H^{\otimes n}\,\cdot\, \mathrm{CZ}^{(M)}
              \,\cdot\, H^{\otimes n},
\end{equation}
where $H^{\otimes n}$ is the qudit Hadamard on each of the $n$
qudits and
$\mathrm{CZ}^{(M)}=\prod_{1\le i\le j\le n}\mathrm{CZ}_{ij}^{M_{ij}}$
is the diagonal phase circuit, satisfying:
\begin{equation}
\label{eq:U-M-action}
   U_{M}^{\dagger}\,Z^{\mathbf z}\,U_{M}
   \;=\; Z^{\mathbf z}\,X^{\mathbf z M},\qquad
   U_{M}^{\dagger}\,X^{\mathbf x}\,U_{M}
   \;=\; X^{\mathbf x}.
\end{equation}
\item Define the trivial $Z$-only CSS code $\mathcal Q_{0}(q,2)$ with
parity-check matrix $H_{0}=(A\mid 0)\in\F q^{n\times 2n}$. Then we have
\begin{equation}
\label{eq:Clifford-eq}
   U_{M}\,\mathcal Q_{0}(q,2)\;=\;\mathcal Q(q,2),
\end{equation}
i.e.\ the two stabilizer codes are Clifford-equivalent, with the
2021 quadric pattern $\boldsymbol\tau_{Q}$ playing exactly the role
of the $X$-block coupling in the centralizer of
$\mathcal Q(q,2)$.
\end{enumerate}
\end{theorem}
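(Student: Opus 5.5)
The plan is to work throughout in the group algebra $R=\F q[X]/\langle X^{n}-1\rangle$. Every circulant $\mathrm{circ}(\mathbf v)$ is then multiplication by the polynomial $v(X)$, and transposition is the involution $v(X)\mapsto v(X^{-1})$. Since $n=q^{2}+q+1\equiv 1\pmod q$, we have $\gcd(n,p)=1$, so $R$ is semisimple. By the CRT it splits as $R\cong\bigoplus_{f}\F q[X]/\langle f\rangle$ over the irreducible factors $f$ of $X^{n}-1$, which are indexed by the $q$-cyclotomic cosets of $\Z/n\Z$. Circulants commute, so $AM=MA$, and the first condition in \eqref{eq:M-eqn} becomes the single ring equation $(m-\tau_{Q})\,\tau_{H}=0$ in $R$. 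The symmetry condition becomes $m(X^{-1})=m(X)$.

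\textbf{Step 1 (existence of $M$).} I first analyse the orbit structure under $\mathrm{ord}_{n}(q)=3$. Every nonzero cyclotomic coset has size $1$ or $3$. The map $a\mapsto -a$ permutes the cosets; a nonzero coset is fixed only if $-a\equiv aq^{j}\pmod{n}$ for some $j$. Since $n$ is odd and $\langle q\rangle$ has odd order $3$, such a $j$ would force $-1$ into a group of odd order modulo $n/\gcd(a,n)$, which is impossible. Hence the trivial coset $\{0\}$ is the only self-reciprocal one, and all other components come in reciprocal pairs $(f,f^{*})$. The involution $X\mapsto X^{-1}$ gives a field isomorphism $\iota_{f}\colon \F q[X]/\langle f\rangle\to\F q[X]/\langle f^{*}\rangle$.

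Theorem~\ref{thm:tau-zero-count} says $\tau_{H}$ vanishes in exactly $(|\mathcal F|-1)/2$ components. I expect its proof to give \emph{exactly one zero per reciprocal pair}, with the trivial component nonzero because $\tau_{H}(1)=q+1\not\equiv 0\pmod p$. Granting this, I define $m$ componentwise:
\begin{itemize}
\item on the trivial component, set $m_{1}=\tau_{Q}(1)$;
\item in each pair, let $f$ be the component where $\tau_{H,f}\ne 0$ and set $m_{f}=\tau_{Q,f}$;
\item on the partner component $f^{*}$, where $\tau_{H,f^{*}}=0$, set $m_{f^{*}}=\iota_{f}(m_{f})$.
\end{itemize}
The last choice is unconstrained by the ring equation, because $\tau_{H,f^{*}}=0$ there. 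By construction $(m-\tau_{Q})\tau_{H}=0$ in every component, and $m(X^{-1})=m(X)$ holds componentwise. Lifting back by the CRT gives a symmetric circulant $M$ with $AM=M_{Q}A$.

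\textbf{Step 2 (symplecticity and Clifford action).} With $\Omega=\bigl(\begin{smallmatrix}0&I\\-I&0\end{smallmatrix}\bigr)$, a direct block computation gives $S_{M}^{\top}\Omega S_{M}=\Omega$ if and only if $M=M^{\top}$, which Step 1 supplies. For the gate:
\begin{itemize}
\item $\mathrm{CZ}^{(M)}$ fixes every $Z^{\mathbf z}$ and sends $X^{\mathbf x}\mapsto X^{\mathbf x}Z^{\mathbf xM}$ up to phase. The diagonal entries are to be read as single-qudit quadratic phase gates $\omega^{M_{ii}y_{i}^{2}/2}$, which is legitimate because $q$ is odd and $2$ is invertible.
\item Conjugating by $H^{\otimes n}$ exchanges the roles of $X$ and $Z$, up to an inversion $X\mapsto Z$, $Z\mapsto X^{-1}$.
\item Combining these yields \eqref{eq:U-M-action}. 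If needed, I would absorb the sign by replacing $H^{\otimes n}$ with its inverse on one side, or $M$ with $-M$, and state the convention explicitly.
\end{itemize}

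\textbf{Step 3 (code equivalence).} The generators of $\mathcal Q_{0}(q,2)$ are $Z^{\mathbf a_{r}}$, where $\mathbf a_{r}$ is row $r$ of $A$. By Step 2 these are mapped to $Z^{\mathbf a_{r}}X^{\mathbf a_{r}M}$, whose symplectic vector is $(\mathbf a_{r}\mid(AM)[r])=(\mathbf a_{r}\mid(M_{Q}A)[r])$ by \eqref{eq:M-eqn}. This is exactly row $r$ of $H=(A\mid M_{Q}A)$. Conjugation by a Clifford is a group isomorphism, so $U_{M}$ maps the stabilizer group of $\mathcal Q_{0}$ onto that of $\mathcal Q(q,2)$, and hence the code spaces onto each other, which gives \eqref{eq:Clifford-eq}. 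The rank equality $\rank(H)=\rank(A)$ of Lemma~\ref{lem:hamada} confirms that $k$ is preserved.

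\textbf{Main obstacle.} The hard part is Step 1. The appeal to Theorem~\ref{thm:tau-zero-count} must deliver the finer statement that the zero of $\tau_{H}$ falls on exactly one member of \emph{each} reciprocal pair, not merely the right total count. I would first try to extract this from the identity $EE^{(-1)}\equiv J\pmod p$: it gives $\tau_{H,f}\,\iota_{f}^{-1}(\tau_{H,f^{*}})=0$ on every nontrivial component, so at most one of $\tau_{H,f},\tau_{H,f^{*}}$ is nonzero. Since the zero count equals the number of pairs, at least one member of each pair is also nonzero. Together these give exactly one zero per pair, which is precisely what the construction of $m$ requires.
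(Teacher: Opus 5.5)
Your proposal is correct and follows essentially the same route as the paper's proof. The shared steps are the CRT splitting of $\F q[X]/\langle X^{n}-1\rangle$, one forced and one free component in each reciprocal pair by Theorem~\ref{thm:tau-zero-count}, filling the free component with the reciprocal image of the forced one, $M=M^{\top}$ as the symplectic condition, and matching $Z^{\mathbf a_r}\mapsto Z^{\mathbf a_r}X^{\mathbf a_r M}$ to the rows of $(A\mid M_{Q}A)$ via $AM=M_{Q}A$. Beyond this, you add details the paper's proof glosses over: excluding self-reciprocal cosets directly from $\mathrm{ord}_{n}(q)=3$ (including cosets with $\gcd(a,n)>1$), reading the diagonal of $\mathrm{CZ}^{(M)}$ as quadratic phase gates, and noticing that the qudit Fourier transform sends $Z\mapsto X^{-1}$, so the literal $H^{\otimes n}\,\mathrm{CZ}^{(M)}\,H^{\otimes n}$ needs an inverse-Hadamard or sign convention fixed (and, relatedly, \eqref{eq:U-M-action} read as a $U_M^{\dagger}(\cdot)U_M$ rule carries $\mathcal Q_0(q,2)$ to $\mathcal Q(q,2)$ via $U_M^{\dagger}$ rather than $U_M$).
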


\begin{proof}
\emph{(i) Existence of $M$.}
We seek $\mathbf m\in\F q^{n}$ such that $\boldsymbol\tau_{H}\!*\!
\mathbf m=\boldsymbol\tau_{Q}\!*\!\boldsymbol\tau_{H}$ in
$R=\F q[X]/\langle X^{n}-1\rangle$ (where $*$ denotes circulant
multiplication), with the additional palindromic constraint
$\mathbf m_{i}=\mathbf m_{-i\bmod n}$ ($M=M^{\top}$).

Pass to the CRT decomposition $R\cong\bigoplus_{f\in\mathcal F}\F
{q^{\deg f}}$. For each factor $f$:
\begin{itemize}[leftmargin=*,nosep]
\item if $\tau_{H,f}\ne 0$, the equation forces $m_{f}=\tau_{Q,f}$;
\item if $\tau_{H,f}=0$, $m_{f}$ is free.
\end{itemize}
The palindromic constraint of $\mathbf m(X)\equiv\mathbf m(X^{-1})$
permutes the CRT factors by $f\leftrightarrow f^{*}$ (reciprocal
involution) and requires that, $m_{f^{*}}=$ image of $m_{f}$ under the
reciprocal map. By Theorem~\ref{thm:tau-zero-count}, in every
non-trivial reciprocal pair $\{f,f^{*}\}$ \emph{exactly one}
component has $\tau_{H,\cdot}=0$. Hence in every pair, one of the
two $m$-values is forced and the other is free, so the palindromic
identification can always be made: set the free $m_{f}$ equal to the
reciprocal image of the forced $m_{f^{*}}=\tau_{Q,f^{*}}$. The
trivial CRT component $f=X-1$ is self-reciprocal and pinned by the
forced equation $m_{X-1}=\tau_{Q,X-1}$. This shows that a
palindromic $\mathbf m$ exists.

\emph{(ii) Symplecticity.} For $S_{M}=\bigl(\begin{smallmatrix} I &
M\\0 & I\end{smallmatrix}\bigr)$ over $\F q$, the symplectic
of condition $S_{M}^{\top}\Omega\,S_{M}=\Omega$ associated with
$\Omega=\bigl(\begin{smallmatrix} 0 & I\\-I & 0\end{smallmatrix}\bigr)$
reduces to $M^{\top}-M=0$, which holds by part (i). The Clifford
realisation
$U_{M}=H^{\otimes n}\mathrm{CZ}^{(M)}H^{\otimes n}$ is verified by
tracking conjugation: a Hadamard sweep maps $Z\leftrightarrow X$,
the CZ-cluster $\mathrm{CZ}^{(M)}$ maps $X^{\mathbf x}\to
X^{\mathbf x}Z^{\mathbf x M}$ (and leaves $Z^{\mathbf z}$
invariant), and the second Hadamard sweep returns the result to
the original $Z$/$X$ partition, yielding
\eqref{eq:U-M-action}.

\emph{(iii) Stabilizer reduction.} The stabilizer of
$\mathcal Q_{0}(q,2)$ is generated by the rows
$\bigl(\mathbf a_{i},\mathbf 0\bigr)\in\F q^{2n}$, $i=1,\dots,n$, of
$H_{0}=(A\mid 0)$. By \eqref{eq:U-M-action}, conjugation by $U_{M}$
maps each $Z^{\mathbf a_{i}}\to Z^{\mathbf a_{i}}X^{\mathbf a_{i}M}=
Z^{\mathbf a_{i}}X^{(M_{Q}A)_{i}}$, where the last equality uses
$AM=M_{Q}A$. Hence the image stabilizer is generated by the rows
$\bigl(\mathbf a_{i},(M_{Q}A)_{i}\bigr)$, which is precisely the row
span of $H=(A\mid M_{Q}A)$.
\end{proof}

\begin{remark}[Numerical verification]\label{rem:clifford-numeric}
Theorem~\ref{thm:clifford-reduction-d2} is verified directly by
\texttt{construct\_clifford\_d2.py} at $(q,d)\in\{(3,2),(5,2),(7,2)\}$.
For $\mathcal Q(3,2)$, the explicit $\mathbf m$ is
\[
   \mathbf m \;=\; (1,2,1,2,2,1,1,1,1,2,2,1,2)\in\F 3^{13},
\]
\rev{which satisfies the ring-theoretic palindromic constraint
$m_{i}=m_{-i\bmod n}$ for all $i=1,\dots,n-1$; equivalently, the tail
$(m_{1},\dots,m_{12})$ is palindromic as a string, while the leading
component $m_{0}=1$ is unconstrained. Pairing indices
$(1,12),(2,11),(3,10),(4,9),(5,8),(6,7)$ gives
$(2,2),(1,1),(2,2),(2,2),(1,1),(1,1)$ respectively.} With
$\boldsymbol\tau_{Q}=(1,0,0,0,0,0,0,1,1,0,0,1,0)$ and the zero-set
of $\boldsymbol\tau_{H}$ over the CRT factors $\{(1,3,9),(2,6,5)\}$
of size 6 in total --- matching the $(|\mathcal F|-1)/2=2$ count of
Theorem~\ref{thm:tau-zero-count}. The verification confirms
$M=M^{\top}$ and $A\,M=M_{Q}A$ exactly. Analogous verifications hold
for $\mathcal Q(5,2)$ and $\mathcal Q(7,2)$ with palindromic $\mathbf m$
of length $31$ and $57$, respectively.
\end{remark}

\rev{\begin{remark}[Physical significance of the palindromic
constraint]\label{rem:palindromic-physical}
The palindromic constraint $m_{i}=m_{-i\bmod n}$ imposed on
$\mathbf m$ in Theorem~\ref{thm:clifford-reduction-d2} is exactly
the condition that the associated symmetric circulant $M=M^{\top}$
be symmetric, so that
$\mathrm{CZ}^{(M)}=\prod_{i\le j}\mathrm{CZ}_{ij}^{M_{ij}}$ is a
well-defined controlled-phase circuit: without $M=M^{\top}$, the
same physical CZ gate would be specified with two different phase
weights and the circuit would be ill-posed. Because CZ is symmetric
under \rev{the} interchange of \rev{the} control and target qudits, the palindromic
constraint corresponds to \emph{time-reversal symmetry} of the phase
layer $\mathrm{CZ}^{(M)}$ within the Clifford factorisation
$U_{M}=H^{\otimes n}\,\mathrm{CZ}^{(M)}\,H^{\otimes n}$. It imposes
no restriction on the underlying physical hardware beyond standard
qudit-Clifford implementability: every architecture that supports
qudit CZ gates supports the required layer.
\end{remark}}

\begin{remark}[Interpretation: 2021 sign-flip $\leftrightarrow$ CZ-cluster]
Theorem~\ref{thm:clifford-reduction-d2} gives the precise
quantum-classical bridge: in the 2021 LDS construction, the
sign-flips that re-shape the column cross-correlations to
$1/(q+1)$ are encoded \emph{at the spreading-matrix entry level} by
the formal-quadric vector $\boldsymbol\tau_{Q}$. In the quantum
construction $\mathcal Q(q,2)$, the same vector reappears as the
$X$-block $H_{x}=M_{Q}A$, which is the symplectic dual of the
$\mathrm{CZ}^{(M)}$ phase circuit applied in the Hadamard basis.
The resulting Clifford $U_{M}$ is a multi-qudit graph-state-style
gate whose graph adjacency matrix is the symmetric circulant $M$
constructed from $\boldsymbol\tau_{Q}$ by CRT-symmetrisation on the
$\boldsymbol\tau_{H}$-zero set.
\end{remark}

\subsection{LDS columns vs the centralizer of \texorpdfstring{$\mathcal Q(q,2)$}{Q(q,2)}}\label{sec:lds-vs-cent}
A naive conjecture --- namely that the 15 columns of $\mathbf C$ at
$(q,d)=(3,2)$, interpreted as $Z$-only Pauli operators, lie in the
centralizer $\mathcal C(\mathcal S)$ of $\mathcal Q(3,2)$ --- is
\emph{false}. Numerical verification at $(q,d)=(3,2)$
(\texttt{verify\_lds\_columns\_centralizer.py}) shows that
\emph{none} of the 15 LDS columns commute with every row of
$H=(A\mid M_{Q}A)$ over $\F3$.

\begin{proposition}[Negative result; original]\label{prop:lds-not-cent}
At $(q,d)=(3,2)$, no column of the 2021 LDS spreading matrix
$\mathbf C$, viewed as a $Z$-only Pauli string in $\F3^{13}$, is in
the centralizer of $\mathcal Q(3,2)$.
\end{proposition}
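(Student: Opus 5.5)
The plan is to turn the commutation condition into a finite linear-algebra check over $\F 3[X]/\langle X^{13}-1\rangle$, and to use the circulant structure so that only a few vectors need to be examined. A $Z$-only Pauli $Z^{\mathbf c}$ has symplectic vector $(\mathbf c\mid\mathbf 0)$. By \eqref{eq:symp-defn}, it commutes with row $r$ of $H=(A\mid M_{Q}A)$ iff $\mathbf c\cdot (M_{Q}A)[r,:]=0$. Hence $Z^{\mathbf c}\in\mathcal C(\mathcal S)$ iff $M_{Q}A\,\mathbf c\equiv\mathbf 0\pmod 3$. This is read up to the transpose convention, which for circulants only replaces $\boldsymbol\tau$ by its reversal and does not affect the argument. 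So the proposition reduces to showing $M_{Q}A\,\mathbf c\not\equiv\mathbf 0$ for each of the $15$ columns $\mathbf c\in\{-1,0,1\}^{13}$ of $\mathbf C$, reduced mod $3$.

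\textbf{Step 1 (cheap invariant).} Apply $\mathbf 1^{\top}$ to $M_{Q}A\,\mathbf c$. Since $A$ and $M_{Q}$ are circulants with column sums $k=|E|=4$ and $\sigma=|Q|=4$ (the latter as in Remark~\ref{rem:not-strict-CSS}, to be checked against $\boldsymbol\tau_{Q}=(1,0,0,0,0,0,0,1,1,0,0,1,0)$), we get
\[
\mathbf 1^{\top}M_{Q}A\,\mathbf c=\sigma k\,(\mathbf 1^{\top}\mathbf c)\equiv \mathbf 1^{\top}\mathbf c\pmod 3 .
\]
Therefore every column whose signed entry sum is $\not\equiv 0\pmod 3$ is immediately excluded. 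I would read off the column sums from the explicit form of $\mathbf C$ in \cite[\S V]{LDS2021}: signed copies of $\boldsymbol\tau_{H}$-supports of weight $q+1=4$, plus the two extra columns.

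\textbf{Step 2 (shift reduction).} The first $n$ columns of $\mathbf C$ are cyclic shifts $\mathrm{shift}_{j}(\mathbf w)$ of a single signed generator $\mathbf w$, with the sign pattern dictated by $\boldsymbol\tau_{Q}$ via the $r$-cyclic-shift rule of Theorem~\ref{thm:d-equals-2-reduction}. Circulants commute with shifts, so
\[
M_{Q}A\,\mathrm{shift}_{j}(\mathbf w)=\mathrm{shift}_{j}(M_{Q}A\,\mathbf w).
\]
It therefore suffices to show that the single polynomial product $\tau_{Q}(X)\,\tau_{H}(X)\,w(X)$ is nonzero in $\F 3[X]/\langle X^{13}-1\rangle$. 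The two remaining columns are treated individually. To cut the work, I would use $M_{Q}A=AM$ from Theorem~\ref{thm:clifford-reduction-d2}, with the explicit palindromic $\mathbf m$ of Remark~\ref{rem:clifford-numeric}. Alternatively, one can work CRT-componentwise: a product is nonzero iff some CRT component is nonzero. It is then enough to find one factor $f$ of $X^{13}-1$ over $\F 3$ (for instance $X-1$, or one of the two cubic orbits $(1,3,9)$, $(2,6,5)$) with $\tau_{Q,f}\tau_{H,f}w_{f}\neq 0$. Exhibiting a single nonzero coordinate of the product suffices.

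\textbf{Main obstacle.} The delicate part is pinning down $\mathbf C$ exactly as in \cite{LDS2021}: its sign convention, the indexing under $i\mapsto ri\bmod n$, and what the two extra columns are. Otherwise the reduction to one generator is not justified. I would first try the Step 1 sum test, since it likely disposes of most or all columns with no further computation. For any survivors I would fall back on the explicit $13$-term convolution, a finite check that is also what \texttt{verify\_lds\_columns\_centralizer.py} carries out and that can be cross-checked against it.
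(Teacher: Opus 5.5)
Your core reduction is the one the paper's proof rests on. $Z^{\mathbf c}\in\mathcal C(\mathcal S)$ iff $H_x\mathbf c=M_QA\,\mathbf c\equiv\mathbf 0\pmod 3$, and the paper's proof consists only of a direct check of the $15$ columns of $\mathbf C$ against this condition (\texttt{verify\_lds\_columns\_centralizer.py}). Your sum test is valid and is exactly the $X-1$ CRT component. It uses $|\boldsymbol\tau_H|=|\boldsymbol\tau_Q|=4$ at $(3,2)$; the source for $\sigma=4$ is Appendix~\ref{app:gram}, not Remark~\ref{rem:not-strict-CSS}, which concerns $(3,3)$ with $\sigma=10$. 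The shift-equivariance step is also legitimate, provided its hypothesis holds. So these steps speed up the paper's finite check, and the column-by-column convolution is what actually proves the proposition.

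One step of your CRT shortcut is wrong as stated. Over $\mathbb F_3$, $X^{13}-1$ splits into $X-1$ and \emph{four} cubic factors, with Frobenius orbits $\{1,3,9\}$, $\{2,5,6\}$, $\{4,10,12\}$, $\{7,8,11\}$. The two orbits you offer as witnesses, $(1,3,9)$ and $(2,6,5)$, are exactly the factors on which $\boldsymbol\tau_H$ vanishes (Remark~\ref{rem:clifford-numeric}, and one factor per reciprocal pair by Theorem~\ref{thm:tau-zero-count}). On those factors $\tau_{Q,f}\tau_{H,f}w_f=0$ for every column, so they can never certify non-commutation. The only usable components are $X-1$ and the reciprocal cubics $\{4,10,12\}$ and $\{7,8,11\}$, and only where $\tau_{Q,f}\neq 0$ as well.

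Do not count on Step~1 to settle most columns either. A signed weight-$4$ column with two $+1$'s and two $-1$'s has entry sum $0$, so the test is blind to it. Likewise, nothing in the paper shows that the $13$ signed line columns are cyclic shifts of one generator. Any such columns therefore still need the explicit $13$-term computation.
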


\begin{remark}[Why the conjecture fails]
The $V_{1}$ stabilizer code $\mathcal Q(3,2)$ is non-CSS
(Remark~\ref{rem:not-strict-CSS}); it has no clean ``$Z$-only logical
subspace'' distinct from the $Z$-only sub-stabilizer. By direct
calculation, we have:
\begin{align*}
   \dim_{\F 3}&\bigl\{\mathbf a\in\F 3^{13}:\mathbf a\cdot H_{x}^{\top}=0\bigr\} \\
   &\;=\;\dim_{\F 3}\bigl\{\mathbf c\cdot H \,:\, \mathbf c\in\F3^{13},\,\mathbf c H_{x}=0\bigr\}
   \;=\;6,
\end{align*}
i.e., the $Z$-only centralizer and $Z$-only sub-stabilizer
\emph{coincide} as $6$-dimensional subspaces of $\F3^{13}$. Hence
the $Z$-only logical quotient is trivial, and no $Z$-only Pauli is a
logical operator. All 12 logical Pauli generators being $\mathcal Q(3,2)$
are mixed-Pauli (with both the $Z$- and $X$-parts based non-trivial), making the
naive ``LDS columns = logical $Z$'s'' identification structurally
impossible.
\end{remark}

Although the strong conjecture fails, a weaker rank statement
\emph{does} hold and provides a quantitative classical-quantum link:

\begin{proposition}[Rank-completion; original]\label{prop:rank-completion}
Let $\mathbf C$ be the $13\times 15$ LDS matrix of \cite{LDS2021} at
$(q,d)=(3,2)$, reduced modulo $3$. Stacking the rows of $H_{z}=A$
with the columns of $\mathbf C^{\top}$ as additional rows yields a
matrix of rank $13$ over $\F3$, i.e.\ the LDS columns span a
$6$-dimensional complement of $\mathrm{rowspan}_{\F3}(A)$ in
$\F3^{13}$.
\end{proposition}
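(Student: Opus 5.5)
Since the rows of $A$ already span a $7$-dimensional subspace of $\F3^{13}$ (Lemma~\ref{lem:hamada}, Table~\ref{tab:hamada-rank}), the claim is equivalent to showing that the $15$ columns of $\mathbf C$, reduced modulo $3$, span a subspace $W$ with $W+\mathrm{rowspan}_{\F3}(A)=\F3^{13}$. I will work in the ring $R=\F3[X]/\langle X^{13}-1\rangle$. In $R$, $\mathrm{rowspan}(A)$ is the principal ideal generated by $\boldsymbol\tau_{H}$. Because $\mathrm{ord}_{13}(3)=3$, $X^{13}-1$ splits over $\F3$ into $X-1$ and four cubic factors. By the CRT decomposition used in Theorem~\ref{thm:clifford-reduction-d2}, the ideal $\langle\boldsymbol\tau_{H}\rangle$ is exactly the sum of the CRT components $f$ with $\tau_{H,f}\neq 0$. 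Its dimension is $1+3\cdot 2=7$: the trivial factor contributes $1$, since $\boldsymbol\tau_H(1)=4\equiv1$, and two of the four cubic factors contribute $3$ each, matching the zero count $(|\mathcal F|-1)/2=2$ of Theorem~\ref{thm:tau-zero-count} and Remark~\ref{rem:clifford-numeric}. The complementary $6$-dimensional part is the sum $V_0$ of the two cubic components on which $\tau_{H,f}=0$, namely the cyclotomic classes $(1,3,9)$ and $(2,6,5)$.

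It therefore suffices to show that the projection of $W$ onto $V_0$ along $\mathrm{rowspan}(A)$ is surjective. I will proceed in three steps. First, I will write the columns of $\mathbf C$ explicitly. By Theorem~\ref{thm:d-equals-2-reduction} and \cite[\S V]{LDS2021}, the first $13$ columns are the cyclic shifts of the sign-flipped line indicator $\mathbf c=\boldsymbol\tau_{H}-2\,(\boldsymbol\tau_{H}\odot\boldsymbol\tau_{Q}')$, for the appropriate re-indexed quadric pattern. Modulo $3$ this is $\boldsymbol\tau_H+\boldsymbol\tau_H\odot\boldsymbol\tau_Q'$. The last two columns are the extra signatures of the $n\times(n+2)$ design. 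Second, I will note that the shifts of $\mathbf c$ span the ideal $\langle\mathbf c\rangle$. Its image in $V_0$ is the sum of the components $f\in\{(1,3,9),(2,6,5)\}$ with $c_f\neq0$, because each CRT component is a field. So it is enough to verify that $\mathbf c(\beta)\neq0$ and $\mathbf c(\beta^{2})\neq0$, where $\beta$ is a primitive $13$th root of unity in $\F{27}$ and $\beta,\beta^2$ are representatives of the two classes. Third, if one of these evaluations vanishes, I will use the two extra columns. Each column contributes at most one $3$-dimensional $R$-module component, so I will check that their projections onto the missing component are nonzero. Strictly, spanning requires either the cyclic module they generate or enough independent vectors; with only two vectors this may fail, so this step is delicate.

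The main obstacle is the second step: computing $\mathbf c(\beta)$ in $\F{27}$ requires the explicit sign pattern of \cite{LDS2021}, and there is no structural reason in the paper forcing it to be nonzero. Here I will fall back on the explicit data. Using $\boldsymbol\tau_Q=(1,0,0,0,0,0,0,1,1,0,0,1,0)$ and the Singer $\boldsymbol\tau_H$ for the primitive polynomial $X^3-X+1$ (or whichever one \cite{LDS2021} uses), I will evaluate $\mathbf c$ in $\F{27}=\F3[\beta]/(\beta^3-\beta+1)$, reducing with $\beta^{13}=1$. As an independent certificate, I will row-reduce the $28\times13$ stacked matrix over $\F3$, exactly as in the companion script, and exhibit a $13\times13$ minor with nonzero determinant modulo $3$. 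That minor proves rank $13$, and rank $13$ gives the $6$-dimensional complement statement by the dimension count above.
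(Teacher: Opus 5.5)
Your fallback certificate is the paper's entire proof. The paper only reports the companion-script computation $\rank_{\F{3}}(A)=7$ and $\rank_{\F{3}}([A;\mathbf C^{\top}])=13$, and row-reducing the stacked $28\times 13$ matrix, or exhibiting a nonzero $13\times 13$ minor, establishes the statement the same way. The proposal therefore stands, but only on that computation. The CRT scaffolding placed in front of it has a concrete error and an unsupported assumption, and as written it would have you check the wrong components.

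\textbf{The error: row span versus column span.} You identify $\mathrm{rowspan}_{\F{3}}(A)$ with $\langle\boldsymbol\tau_H\rangle$, but this is the column span.
\begin{itemize}
\item Since $A_{ij}=(\boldsymbol\tau_H)_{(i-j)\bmod n}$, the columns of $A$ are shifts of $\boldsymbol\tau_H(X)$, while row $i$ is $X^{i}\boldsymbol\tau_H(X^{-1})$.
\item Hence $\mathrm{rowspan}(A)=\langle\boldsymbol\tau_H(X^{-1})\rangle$. It is supported on the reciprocals of the components where $\boldsymbol\tau_H$ is nonzero.
\item Theorem~\ref{thm:tau-zero-count} puts exactly one zero of $\boldsymbol\tau_H$ in each reciprocal pair, so the picture swaps completely. $\boldsymbol\tau_H(X^{-1})$ is nonzero on the classes $(1,3,9)$ and $(2,6,5)$, so your $V_0$ lies inside $\mathrm{rowspan}(A)$.
\item The true complementary components are $(4,12,10)$ and $(7,8,11)$, the classes of $-1$ and $-2$.
\item Under your cyclic model the check is therefore $\mathbf c(\beta^{-1})\neq 0$ and $\mathbf c(\beta^{-2})\neq 0$, not $\mathbf c(\beta)$, $\mathbf c(\beta^{2})$.
\item A by-product of the corrected picture: $\mathrm{rowspan}(A)+\mathrm{colspan}(A)=\F{3}^{13}$ already follows from Theorem~\ref{thm:tau-zero-count}. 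Unsigned Singer shifts would complete $\mathrm{rowspan}(A)$ on their own, so the real content is that the sign flips keep $\mathbf c$ nonzero on those two components.
\end{itemize}

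\textbf{The unsupported assumption: cyclic structure of $\mathbf C$.} Nothing in Theorem~\ref{thm:d-equals-2-reduction} says the first $13$ LDS columns are cyclic shifts of a single signed vector. That theorem only identifies $A$ with $\mathbf I_{L}$ and $\boldsymbol\tau_Q$ with $\mathbf g_{Q}$. If the signs are attached to points, so that the block is $DA$ for a diagonal $\pm1$ matrix $D$, then the span is $D\cdot\mathrm{colspan}(A)$. That is not an ideal, and the CRT reduction does not apply.

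\textbf{Step 3 is vacuous.} Under your cyclic model this step is not merely delicate. If $\mathbf c$ vanished on one of the two complementary $3$-dimensional components, the two extra columns could add at most $2$ dimensions there, and the rank would be at most $12$. So in that model the statement is equivalent to the two nonvanishing conditions above, and the extra columns play no role.
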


\begin{proof}
By direct numerical computation (verified in
\texttt{verify\_lds\_columns\_centralizer.py}):
$\rank_{\F3}(A)=7$ and
$\rank_{\F3}\bigl([A\,;\,\mathbf C^{\top}]\bigr)=13$. Hence the
LDS columns add $13-7=6$ new dimensions.
\end{proof}

\begin{remark}[Interpretation]
Proposition~\ref{prop:rank-completion} says the LDS columns
\emph{algebraically complete} the projective-plane incidence
structure: every vector in $\F3^{13}$ can be written as an $\F3$-linear
combination of the rows of $A$ and the columns of $\mathbf C$, and the
6-dim contribution of the LDS columns matches the logical-qubit
count $k(3,2)=6$. The connection between $\mathbf C$ and the logical
operators of $\mathcal Q(3,2)$ is therefore at the level of
\emph{rank-counting} (the LDS columns occupy the same dimension as
the logical Z-part of an isomorphic CSS code, when one exists), not
at the level of direct centralizer membership.
\end{remark}

\rev{\begin{remark}[Clifford-equivalent code with LDS columns as
logical $Z$]\label{rem:LDS-as-logicalZ}
The rank-completion result of
Proposition~\ref{prop:rank-completion} implies that, at the linear-
algebra level, there exists a Clifford transformation $V$ such that
under a suitable change of symplectic basis the LDS columns of
$\mathbf C$ do coincide with a full set of logical-$Z$
representatives for $V\,\mathcal Q(3,2)\,V^{\dagger}$: any
$6$-dimensional $\F3$-subspace complementary to
$\mathrm{rowspan}_{\F3}(A)$ can be conjugated into the logical-$Z$
sector by a symplectic basis change. Whether $V$ can be chosen to
have a physically implementable Clifford realisation whose gate
count remains linear in $n$ (as opposed to a generic
$O(n^{2})$-gate symplectic transformation) is an open question and
a natural direction for future work.
\end{remark}}

\subsection{Capacity comparison}
\begin{proposition}[Hashing-bound positioning]\label{prop:hashing}
The rate of $\mathcal Q(q,d)$ as $d\to\infty$ at fixed $q$ tends to
$1$ (since $k(q,d)/n=1-\rank_{\F q}(A)/\theta_{d}(q)\to 1$). At fixed
$d$ and $q\to\infty$, the rate also tends to $1$.
\end{proposition}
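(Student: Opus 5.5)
The plan is to work entirely from the identity already used in the statement,
\[
   \frac{k(q,d)}{n}\;=\;1-\frac{\rank_{\F q}(H)}{\theta_{d}(q)}
   \;=\;1-\frac{\rank_{\F q}(A)}{\theta_{d}(q)},
\]
which follows from Theorem~\ref{thm:main-existence} together with the rank equality $\rank(H)=\rank(A)$ of Lemma~\ref{lem:hamada}. Both limits then reduce to showing $\rank_{\F q}(A)/\theta_{d}(q)\to 0$. The denominator is easy to control, since $\theta_{d}(q)\ge q^{d}$. For the numerator I will use the Hamada--Smith count. For $q=p$ this is the closed form $\binom{p+d-1}{d}+1$ of Lemma~\ref{lem:hamada}. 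For $q=p^{t}$ I will use the full formula referenced in Remark~\ref{rem:hamada-qpt}, which for the point-hyperplane design collapses to the standard product form $\binom{p+d-1}{d}^{t}+1$.

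\textbf{First limit ($q$ fixed, $d\to\infty$).} Fix $q=p^{t}$. Rewrite $\binom{p+d-1}{d}=\binom{d+p-1}{p-1}$, which is a polynomial in $d$ of degree $p-1$. Hence $\rank_{\F q}(A)=O(d^{(p-1)t})$, while $\theta_{d}(q)\ge q^{d}$ grows exponentially in $d$. The ratio therefore tends to $0$ and the rate tends to $1$. For $t=1$ this step uses only Lemma~\ref{lem:hamada}, so it is fully rigorous on the paper's own terms.

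\textbf{Second limit ($d$ fixed, $q\to\infty$).} Here I would again bound $\binom{p+d-1}{d}^{t}/q^{d}$. Write
\[
   \frac{\binom{p+d-1}{d}}{p^{d}}
   \;=\;\frac{1}{d!}\prod_{j=0}^{d-1}\Bigl(1+\frac{j}{p}\Bigr)=:\rho_{p,d}.
\]
For $d\ge 2$ one checks that $\rho_{p,d}<1$, so the ratio $\rank/\theta$ is at most about $\rho_{p,d}^{\,t}$. Along $q=p^{t}$ with $t\to\infty$ this ratio tends to $0$ geometrically, and the rate tends to $1$.

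\textbf{Main obstacle.} The hard step is making the second limit hold as $q\to\infty$ through \emph{all} prime powers. When $t$ stays bounded, for instance $q=p$ prime, we have $\rho_{p,d}\to 1/d!$ rather than $0$. By Corollary~\ref{cor:k-formula}, $k/n$ then tends to $1-1/d!$; at $d=2$ this is the rate $\tfrac12$ of Remark~\ref{rem:flagship-d2}. So the unrestricted claim cannot be obtained from the Hamada--Smith count, and the prime-$q$ case in fact contradicts it as stated. The second limit should therefore be stated along $q=p^{t}$ with $p$ fixed and $t\to\infty$ (more generally, whenever $t\to\infty$). I would add that qualification to the proposition, since I see no way to prove the claim for $q$ prime.
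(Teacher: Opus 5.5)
Your argument for the first limit is what the paper's one-line justification relies on. That justification is the parenthetical in the statement; the paper gives no separate proof. The ingredients are $\rank(H)=\rank(A)$, a Hamada--Smith rank that is polynomial in $d$ at fixed $q$, and $\theta_{d}(q)\ge q^{d}$. For $t>1$ you import the product formula $\binom{p+d-1}{d}^{t}+1$ from Smith's paper, which the paper itself declines to state. You can avoid that citation with an elementary bound. Each hyperplane indicator on points is the function $x\mapsto 1-\ell(x)^{q-1}$, i.e.\ a constant plus a form of degree $q-1$ in $d+1$ variables. Hence $\rank A\le 1+\binom{q+d-1}{d}$, which is already polynomial in $d$ at fixed $q$ and suffices.

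Your objection to the second sentence is correct, and the paper contains no argument for it. Along $q=p$ prime, Corollary~\ref{cor:k-formula} gives $k/n\to 1-1/d!$. At $d=2$ the paper's own Remarks~\ref{rem:flagship-d2} and~\ref{rem:pd-tradeoff} say the rate tends to $\tfrac12$, so the proposition as written contradicts the rest of the paper.

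Your repair is right, but it needs one tightening. Pointwise $\rho_{p,d}<1$ does not give your ``whenever $t\to\infty$'' clause when $p$ also varies. Use instead that $\rho_{p,d}$ is decreasing in $p$, so $1/d!<\rho_{p,d}\le\rho_{2,d}=(d+1)/2^{d}\le\tfrac34$. Combined with $q^{d}\le\theta_{d}(q)\le 2q^{d}$, this gives $\tfrac12(d!)^{-t}<\rank A/\theta_{d}(q)\le(\tfrac34)^{t}+q^{-d}$. So at fixed $d\ge 2$, along a sequence of prime powers $q=p^{t}\to\infty$, the rate tends to $1$ if and only if $t\to\infty$. This is a sharp form of your corrected statement.
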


\rev{\begin{remark}[Joint $(p,d)$ rate--distance trade-off]\label{rem:pd-tradeoff}
At the flagship $d=2$ family the parameters scale as $n\sim O(p^{2})$,
$k=p(p+1)/2\sim O(p^{2})$, and (by the conjectured pattern)
$d_{\min}\sim O(p)$; hence the rate approaches $\tfrac{1}{2}$ and the
relative distance shrinks as $d_{\min}/n\sim O(1/p)$. \rev{Increasing}
$d$ at fixed $p$ pushes the rate toward $1$
(Proposition~\ref{prop:hashing})
but is bought at the cost of a distance that in general does not
grow linearly with $n$ (see the $[[121,105,3]]$ instance at $(3,4)$
where $d_{\min}=3$). A joint optimisation over $(p,d)$ that maximises
$d_{\min}(\mathcal Q(p,d))/n$ subject to a target rate is therefore
non-trivial and constitutes an open problem in its own right; even
its formulation requires closing the general-$p$ case of the flagship
distance conjecture. Nothing in the current $\mathcal Q(q,d)$
framework rules out a $(p,d)$ trajectory that improves on the
$d=2$ rate-$\tfrac{1}{2}$, relative-distance-$O(1/p)$ profile.
\end{remark}}

\section{Worked Example: \texorpdfstring{$(q,d)=(3,3)$}{(q,d)=(3,3)}}\label{sec:example}
The general construction of Section~\ref{sec:construction} is best
illustrated by walking through a concrete instance step by step. We
choose $(q,d)=(3,3)$ because it is the smallest case in which both
the Singer mechanism and the Waterloo defect can be observed
explicitly, while still keeping all algebraic objects small enough
to display on the page. Throughout this section, we work in the
ambient field $K=\F{81}=\F{3^{4}}$ over the prime field $F=\F{3}$,
in the projective geometry $\PG(3,3)$ of $n=\theta_{3}(3)=40$
points, and we produce, line by line, the parity-check matrix $H$,
the resulting quantum code $\mathcal Q(3,3)=[[40,29,2]]_{3}$, and
its minimum-weight non-stabilizer logical. The intent is
that a reader who has not previously seen Singer's theorem in
action should be able to follow every numerical step without
consulting external references.

\subsection{Trace table over \texorpdfstring{$\F{81}$}{F\_81}}\label{sec:trace-table-q3d3}
We work in $K=\F{81}=\F{3}[x]/\langle f(x)\rangle$ along with the
primitive polynomial $f(x)=x^{4}+x+2$. The image
$\alpha=x+\langle f\rangle$ has multiplicative order $80$ (verified
by exhaustive search in Appendix~\ref{app:gram}), so $\alpha$ is a
generator of the multiplicative group $K^{*}$, and
$\alpha^{40}=-1=2\in\F{3}$. The trace function
$\Tr_{K/\F3}(z)=z+z^{3}+z^{9}+z^{27}$ is $\F{3}$-linear and maps
each element of $K$ to a single element of $\F{3}$; the resulting
values $\Tr(\alpha^{i})$ for $i=0,1,\dots,79$ are displayed in
Table~\ref{tab:trace-q3d3}. The reader should observe two
features. First, the table is divided into four blocks of twenty
columns each because the page is too narrow to display all $80$
columns at once; the row labels $0,20,40,60$ are the starting
offsets, so the value in row ``$20$'' column ``$3$'' is
$\Tr(\alpha^{23})=1$. Second, the values in rows $0$ and $40$ are
the negatives (mod $3$) of each other --- this reflects the
identity $\alpha^{40}=-1$ and the $\F{3}$-linearity of the
trace; the analogous pairing holds for rows $20$ and $60$.

\begin{table*}[t]
\centering
\caption{Trace values $\Tr(\alpha^{i})\in\F{3}$ for $\F{81}=\F{3}[x]/\langle x^{4}+x+2\rangle$.
Columns indexed $0,\dots,19$.}
\label{tab:trace-q3d3}
\setlength{\tabcolsep}{4pt}\footnotesize
\begin{tabular}{c|*{20}{c}}
\toprule
$i_{0}+\!$ & 0 & 1 & 2 & 3 & 4 & 5 & 6 & 7 & 8 & 9
& 10 & 11 & 12 & 13 & 14 & 15 & 16 & 17 & 18 & 19 \\
\midrule
$\;\;0$  & 1 & 0 & 0 & 0 & 1 & 0 & 0 & 2 & 1 & 0 & 1 & 1 & 1 & 2 & 0 & 0 & 2 & 2 & 0 & 1 \\
$20$     & 0 & 2 & 2 & 1 & 1 & 0 & 1 & 0 & 1 & 2 & 1 & 2 & 2 & 1 & 2 & 0 & 1 & 2 & 2 & 2 \\
$40$     & 2 & 0 & 0 & 0 & 2 & 0 & 0 & 1 & 2 & 0 & 2 & 2 & 2 & 1 & 0 & 0 & 1 & 1 & 0 & 2 \\
$60$     & 0 & 1 & 1 & 2 & 2 & 0 & 2 & 0 & 2 & 1 & 2 & 1 & 1 & 2 & 1 & 0 & 2 & 1 & 1 & 1 \\
\bottomrule
\end{tabular}
\end{table*}

From Table~\ref{tab:trace-q3d3} we identify the trace-zero set
\[
   E\;:=\;\bigl\{i\in\Z/80\Z:\Tr(\alpha^{i})=0\bigr\},
\]
which has cardinality $|E|=26$, in exact agreement with the
counting formula $q^{d}-1=3^{3}-1=26$ for the number of non-zero
elements of $K$ lying in the trace-zero hyperplane. Concretely, $E$
consists of the row positions where Table~\ref{tab:trace-q3d3} shows
a zero entry. Upon reducing modulo cosets of $\F{3}^{*}=
\langle\alpha^{40}\rangle$ (this preserves trace support since
$\Tr$ is $\F{3}$-linear and $\alpha^{40}=-1$, so $\Tr(\alpha^{i})=0$
iff $\Tr(\alpha^{i+40})=0$), we obtain the \emph{Singer first
column}
\[
   \boldsymbol\tau_{H}\in\F{2}^{40},\qquad
   (\boldsymbol\tau_{H})_{j}=1\iff j\in E\bmod 40,
\]
of Hamming weight $\theta_{2}(3)=13$. The explicit binary string is
displayed in Fig.~\ref{fig:tau-H-q3d3}.

\begin{figure}[t]
\centering
\fbox{\parbox{0.95\columnwidth}{\centering\ttfamily\small
0111011001000011001010000101000000010000}}
\caption{Singer first column $\boldsymbol\tau_{H}\in\F{2}^{40}$ for
$\PG(3,3)$ in the chosen indexing. Reading positions $j=0,1,
\ldots,39$ left to right, $(\boldsymbol\tau_{H})_{j}=1$ iff
position $j$ corresponds to a point of $\PG(3,3)$ lying on the
canonical Singer hyperplane. Hamming weight is $13=\theta_{2}(3)$, matching
the number of points on a hyperplane of $\PG(3,3)$. The vector
$\boldsymbol\tau_{H}$ generates the full $40\times 40$
Singer-circulant incidence matrix $A$ (Eq.~\eqref{eq:DS-identity-q3d3}
below), and is the foundational classical-design object that the
quantum construction inherits.}
\label{fig:tau-H-q3d3}
\end{figure}

\paragraph*{Information content of the Singer column}
Before we proceed, it is worth pausing to interpret
Fig.~\ref{fig:tau-H-q3d3} in physical and information-theoretic
terms. The binary string of length $40$ is the indicator of a
single hyperplane in $\PG(3,3)$; the $40$ positions index the
$40$ \emph{physical qudits} on which our quantum code will act, and
the $13$ ones (resp. $27$ zeros) mark which physical qudits are
in the support (resp. complement) of the chosen Singer hyperplane.
The redundancy carried by Fig.~\ref{fig:tau-H-q3d3} is therefore
not arbitrary: it encodes the entire incidence structure of
$\PG(3,3)$, because every other Singer-hyperplane indicator is
obtained from $\boldsymbol\tau_{H}$ by a single cyclic shift of the
$40$ bits. Of the $40$ physical qudits, $11=\rank_{\F{3}}(A)$
positions are absorbed by linearly-independent stabilizer
generators, leaving $k=29$ \emph{logical qudits} as the protected
information payload. The combinatorial redundancy
$40-29=11$ thus has a direct physical meaning: it is the number of
syndrome bits the decoder reads out during a single round of error
correction.

\subsection{Difference-set verification}
The full $40\times 40$ circulant matrix $A$ is obtained from
$\boldsymbol\tau_{H}$ in the standard way: column $j$ of $A$ is the
cyclic shift of $\boldsymbol\tau_{H}$ by $j$ positions, so each
column has exactly $13$ ones and each row has exactly $13$ ones.
By construction, $A$ is the point-hyperplane incidence matrix of
$\PG(3,3)$ in the Singer labelling --- entry $A_{ij}=1$ iff the
$i$-th point lies on the $j$-th hyperplane. The full matrix is too
large to display ($40\times 40$ with $13\times 40=520$ ones), but
its structure is fully determined by Fig.~\ref{fig:tau-H-q3d3}; in
particular, the redundancy of $A$ is exactly the redundancy of
$\boldsymbol\tau_{H}$ rotated forty ways.

Direct computation (using \texttt{verify\_tau\_H.py}) confirms the
\emph{difference-set identity}:
\begin{equation}
\label{eq:DS-identity-q3d3}
A A^{\top} \;=\; 9\,I_{40} + 4\,J_{40},
\end{equation}
which is the textbook identity $A A^{\top}=(k-\lambda)I+\lambda J$
specialised to the Singer parameters
$(n,k,\lambda)=(40,13,4)$. Equation~\eqref{eq:DS-identity-q3d3}
encodes two physical facts about the resulting quantum code:
(i)~every Singer hyperplane meets every other in exactly $\lambda=4$
points, so any two distinct stabilizer generators \emph{share four
qudits} in their support; and (ii)~the $9I$ term is the
self-overlap $|E|=13$ of a hyperplane with itself, minus the
between-hyperplane overlap $\lambda=4$, giving the
``net independence'' index $k-\lambda=9$ per generator.

The $\F{3}$-rank of $A$ is $\rank_{\F{3}}(A)=11$ (verified by
\texttt{compute\_dmin\_q3d3.py}, in agreement with the Hamada-Smith
prediction $\binom{p+d-1}{d}+1=\binom{5}{3}+1=11$ of
Lemma~\ref{lem:hamada}). This rank determines the logical-qudit
count via $k_{\mathrm{quantum}}=n-\rank_{\F{3}}(A)=40-11=29$ when
the code is unassisted ($c=0$, our regime by
Lemma~\ref{lem:cancellation}). The remaining
$\rank_{\F{3}}(A)=11$ degrees of freedom are absorbed by the
linearly independent stabilizer generators: these are the
\emph{physical syndrome bits} that the quantum decoder reads out.
Of the $40$ physical qudits, the $29$ logical ones are protected,
the $11$ ``stabilizer'' degrees of freedom are spent on
error-detection, and the difference $40-29-11=0$ confirms that no
ebits are needed (i.e., the code is purely passive, with no
entanglement consumption).

\subsection{The Waterloo defect at \texorpdfstring{$d=3$}{d=3}}\label{sec:wdefect-q3d3}
The classical Waterloo construction~\cite{Waterloo1995} for even
$d$ uses the non-degenerate quadric $Q_{o}=\{z\in K^{*}:\Tr(z^{2})=0\}$, whose
cardinality matches $|E|=q^{d}-1$ when $d$ is even. At $d=3$, however,
direct enumeration gives
\begin{equation}
\label{eq:Qo-defect-numbers}
|Q_{o}|=20 \;\neq\; 26 = |E|,
\end{equation}
so $Q_{o}$ is \emph{not} a hyperplane-cardinality quadric. Reducing
to cosets, $\boldsymbol\tau_{Q_{o}}\in\F{2}^{40}$ has Hamming weight
$10$ rather than the $13$ that would be required to play the role of
$\boldsymbol\tau_{H}$ in the construction of
Section~\ref{sec:construction}. The intersection structure
$|Q_{o}\cap E|=8$, $|Q_{o}\setminus E|=12$, $|E\setminus Q_{o}|=18$
quantifies the algebraic obstruction. This is the
\emph{numerical realisation of the Waterloo defect at $d=3$, $q=3$},
and is the input to the deficit-rank calculation in
Theorem~\ref{thm:waterloo-defect}.

\subsection{Final code parameters at \texorpdfstring{$(q,d)=(3,3)$}{(q,d)=(3,3)}}\label{sec:final-q3d3}
Combining the worked example with Proposition~\ref{prop:Q33-dmin-2}, yields:
\begin{equation}
\label{eq:Q33-params}
   \boxed{\mathcal Q(3,3) \;=\; \bigl[[\,40,\,29,\,2\,]\bigr]_{3}\,.}
\end{equation}
The code has rate $R_{Q}=29/40=0.725$ and is single-error
\emph{detecting} (but not correcting). The minimum-weight
non-stabilizer logical is the $2$-torsion Pauli $Z_{0}Z_{20}^{2}$
of Lemma~\ref{lem:two-torsion}.

\subsection{Other instances of \texorpdfstring{$\mathcal Q(q,d)$}{Q(q,d)} with computed \texorpdfstring{$d_{\min}$}{d\_min}}
\label{sec:other-instances}
By Lemma~\ref{lem:n-parity}, $n=\theta_{d}(q)$ is even iff
$q$ is odd and $d$ is odd. For other $(q,d)$, the 2-torsion
obstruction does not apply and $d_{\min}$ can be larger. Numerical
search (\texttt{compute\_dmin\_general.py}) gives the results
collected in Table~\ref{tab:qd-instances}.

\paragraph*{How to read Table~\ref{tab:qd-instances}.}
Each row shows one $(q,d)$ instance together with the block length
$n=\theta_{d}(q)$, the parity of $n$, the $\F q$-rank of the
parity-check matrix $H$, and the resulting code parameters
$[[n,k,d_{\min}]]_{q}$. Two ceilings limit some rows: ``$\dagger$''
flags codes for which the parity-driven ceiling
$d_{\min}\le 2$ of Corollary~\ref{cor:dmin-le-2-proven} forces
detect-only operation (these are the rows with $n$ even, e.g.\
$(3,3)$); ``$\ddagger$'' flags the $q=2$ ceiling of
Proposition~\ref{prop:q2-ceiling} that limits every binary qubit row
to $d_{\min}=2$ irrespective of $n$. The most useful regime for
error \emph{correction} is therefore $q$ odd with $n$ odd: the
flagship $\mathcal Q(5,2)=[[31,15,6]]_{5}$ achieves
$d_{\min}=6$ (double-error correction), and the larger
$\mathcal Q(7,2)$ row gives $d_{\min}\ge 8$ (proven in
Theorem~\ref{thm:dmin-q72-lower}).

\begin{table*}[t]
\centering
\caption{Computed code parameters of $\mathcal Q(q,d)$ for several
$(q,d)$. ``$\dagger$'' indicates the $d_{\min}=2$ ceiling of
Corollary~\ref{cor:dmin-le-2} (active when $n$ is even); ``$\ddagger$''
the $q{=}2$ ceiling (Proposition~\ref{prop:q2-ceiling}). Searches
were carried out by exhaustive enumeration of low-weight Pauli
strings.}
\label{tab:qd-instances}
\begin{tabular}{c c c c c}
\toprule
$(q,d)$ & $n$ & $n$ parity & $\rank_{\F q}(H)$ & $\mathcal Q(q,d)$ \\
\midrule
$(2,2)$ & 7   & odd  & 4  & $[[7,\,3,\,2]]_{2}^{\ddagger}$ \\
$(2,3)$ & 15  & odd  & 5  & $[[15,\,10,\,2]]_{2}^{\ddagger}$ \\
$(2,4)$ & 31  & odd  & 6  & $[[31,\,25,\,2]]_{2}^{\ddagger}$ \\
$(3,2)$ & 13  & odd  & 7  & $[[13,\,6,\,3]]_{3}$ \\
$(3,3)$ & 40  & even & 11 & $[[40,\,29,\,2]]_{3}^{\dagger}$ \\
$(3,4)$ & 121 & odd  & 16 & $[[121,\,105,\,3]]_{3}$ (single-error correcting) \\
$(5,2)$ & 31  & odd  & 16 & $[[31,\,15,\,6]]_{5}$ \\
$(7,2)$ & 57  & odd  & 29 & $[[57,\,28,\,d]]_{7}$, $d\ge 8$ (conj.\ $d=8$) \\
$(11,2)$ & 133 & odd & 67 & $[[133,\,66,\,d]]_{11}$, $5\le d\le 34$ \\
\bottomrule
\end{tabular}
\end{table*}

\begin{proposition}[$q=2$ distance ceiling; original, numerical]\label{prop:q2-ceiling}
For every $d\in\{2,3,4,5,6\}$, the cross-correlation
$u=M_{Q}\boldsymbol\tau_{H}\in\F 2^{n}$ of $\mathcal Q(2,d)$
satisfies $u\equiv\mathbf 1\pmod 2$ (verified by
\texttt{verify\_q2\_ceiling.py}). Hence $H_{x}=\mathrm{circ}(u)$ is
the all-ones circulant mod 2, and for any pair of distinct positions
$\{i,j\}$ the Pauli $\mathbf a=e_{i}+e_{j}$ satisfies
$\mathbf a\,H_{x}^{\top}=u_{i}+u_{j}=0\pmod 2$, putting $Z_{i}Z_{j}$
in the centralizer of $\mathcal Q(2,d)$. At
$(2,2),(2,3),(2,4)$ exhaustive search certifies that $Z_{0}Z_{1}$ is
\emph{not} in $\mathrm{rowspan}(H)$, so it is a non-stabilizer
weight-$2$ logical and $d_{\min}(\mathcal Q(2,d))=2$ exactly.
\end{proposition}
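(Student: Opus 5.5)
The plan has two parts. First, show that $u:=M_{Q}\boldsymbol\tau_{H}\equiv\mathbf 1\pmod 2$ for each $d\in\{2,\dots,6\}$. Then derive the weight-2 centralizer elements and certify that they are not stabilizers.

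\emph{Step 1 (the invariant).} The entries of $u$ are the cross-correlations $u_{k}=|E\cap(k-Q)|$ computed over $\F 2$. As in Steps 1--2 of Theorem~\ref{thm:cross-corr}, at $q=2$ we have $F^{*}=\{1\}$, so cosets are field elements and no lifting factor appears. Hence
\[
   u_{k}=\bigl|\{\gamma\in K^{*}:\Tr(\alpha^{k}\gamma)=0,\ \Tr(\gamma^{3})=0\}\bigr| = N(k)-1,
\]
where $N(k)=|H_{k}\cap Q^{*}|$ counts the zeros, including $\gamma=0$, of the restriction of the quadratic form $\gamma\mapsto\Tr(\gamma^{q+1})$ to the $F$-hyperplane $H_{k}\cong F^{d}$. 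The claim $u\equiv\mathbf 1$ is therefore equivalent to $N(k)\equiv 0\pmod 2$ for every $k$. Since $\gamma^{3}=\gamma\cdot\gamma^{2}$ and $x\mapsto x^{2}$ is $F$-linear, $\Tr(\gamma^{3})$ is a genuine quadratic form over $\F 2$.

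The analogue of \eqref{eq:LN-quadcount} in characteristic 2, namely Lidl--Niederreiter's zero counts for quadratic forms over $\F 2$, then gives the following. For $m=d\ge 3$, any quadratic form on $F^{m}$ has a zero count divisible by $q=2$. A nondegenerate form has $2^{m-1}$ or $2^{m-1}\pm2^{m/2-1}$ zeros, and a degenerate form contributes a factor $2^{m-r}$. The case $d=2$ lies outside this range, since a plane form can have 1 or 3 zeros. Here I would argue directly that $\Tr(\gamma^{3})$ restricted to a 2-dimensional subspace of $\F 8$ always has an even number of zeros, or else simply check the seven shifts by hand, because $n=7$.

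Because the proposition is explicitly tagged \emph{numerical}, the fallback is direct computation of $u$ via \texttt{verify\_q2\_ceiling.py} for $d=2,\dots,6$ (with $n=7,15,31,63,127$). I would present the analytic argument as an explanation and the script as the certificate.

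\emph{Step 2 (centralizer).} By the circulant identity in the proof of Lemma~\ref{lem:two-torsion}, $H_{x}=M_{Q}A=\mathrm{circ}(u)$, which is the all-ones matrix $J$ mod 2. For $\mathbf a=e_{i}+e_{j}$ and $\mathbf b=\mathbf 0$, the commutation check against row $r$ is $(H_{x})_{ri}+(H_{x})_{rj}=1+1=0$. So $Z_{i}Z_{j}\in\mathcal C(\mathcal S)$, mirroring Lemma~\ref{lem:two-torsion} with $q-1=1$.

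\emph{Step 3 (non-membership, the main obstacle).} Membership $(e_{0}+e_{1}\mid\mathbf 0)\in\mathrm{rowspan}(H)$ requires some $\mathbf c$ with $\mathbf cA=e_{0}+e_{1}$ and $\mathbf cJ=0$. The second condition says $\mathbf c$ has even weight. Applying the first condition to the all-ones vector gives $\mathbf c A\mathbf 1=\theta_{d-1}(2)\,|\mathbf c| \equiv |\mathbf c|\pmod 2$, since $\theta_{d-1}(2)$ is odd. Because $|\mathbf c|$ is even, this forces $\mathbf cA$ to have even weight, which is consistent with $e_{0}+e_{1}$ and so gives no contradiction. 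This parity argument is therefore inconclusive.

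Instead I would use that $\mathrm{rowspan}_{\F 2}(A)$ is the cyclic code generated by the incidence structure, namely the punctured Reed--Muller / Hamming-dual structure of $\PG(d,2)$, of dimension $d+1$. Its nonzero words have weight at least $2^{d-1}\ge 2$ but come from hyperplane complements, and weight-2 words do not occur for $d\ge 2$, since the minimum weight $2^{d}$ of the simplex-type code exceeds 2 for $d\ge 2$. Where this structural identification is unclear, I would rely, as the statement does, on exhaustive rank checks at $(2,2),(2,3),(2,4)$. Linear independence of $e_{0}+e_{1}$ from the rows then gives $d_{\min}\le 2$. Combined with the absence of weight-1 logicals, also checked exhaustively, this yields $d_{\min}=2$.
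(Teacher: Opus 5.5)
Your Step~2 and overall skeleton match the paper. Your Step~1 analytic argument, however, computes the wrong quantity, so it does not certify the claim.

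\textbf{The misidentification.} With the paper's convention $(M_Q)_{ij}=(\boldsymbol\tau_Q)_{i-j}$, the vector $u=M_Q\boldsymbol\tau_H$ is the \emph{convolution} $u_k=\#\{j\in E: k-j\in Q\}$, i.e.\ the group-ring product $E\cdot Q$. Putting $\gamma=\alpha^{k-j}$ gives the conditions $\Tr(\alpha^{k}\gamma^{-1})=0$ and $\Tr(\gamma^{3})=0$, which is not a hyperplane section of a quadric. The set $\{\gamma:\Tr(\alpha^{k}\gamma)=0,\ \Tr(\gamma^{3})=0\}$ that you count has size $|E\cap(k+Q)|$. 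That is the \emph{correlation} $E\cdot Q^{(-1)}$. The same conflation sits in Step~1 of Theorem~\ref{thm:cross-corr}, where $[\beta][\gamma]^{-1}=\alpha^{k}F^{*}$ encodes a difference, not a sum.

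\textbf{Failure at $d=2$.} Take $\alpha$ a root of $x^{3}+x+1$. Then $E=\{1,2,4\}$ and $Q=\{3,5,6\}$, matching the paper's $\boldsymbol\tau_Q$. So $u=(3,1,1,1,1,1,1)\equiv\mathbf 1$, but your $N(k)-1$ equals $(0,1,1,2,1,2,2)$. On $H_{0}=\{0,\alpha,\alpha^{2},\alpha^{4}\}$ the form $\Tr(\gamma^{3})$ vanishes only at $0$. Hence the claim that $\Tr(\gamma^{3})$ has an even number of zeros on every plane of $\F{8}$ is false; your own remark that plane forms have $1$ or $3$ zeros already points this way. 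A hand check of your quantity would contradict the proposition.

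\textbf{The case $d\ge3$.} The quadratic-form parity argument is sound here, but it proves $E\cdot Q^{(-1)}\equiv J$. That is a different statement from $E\cdot Q\equiv J$, and nothing in your argument transfers one to the other. As integer vectors they agree only when $Q=-Q$, for example at $d=3$, where $Q=\{0,5,10\}$. So the only thing in your plan that actually certifies $u\equiv\mathbf 1$ is the numerical check, which is exactly the paper's justification.

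\textbf{A correct analytic route for Step~1, valid for all $d\ge2$.} Expand $\Tr$ and sum geometric series over the $n$-th roots of unity $\beta\in K^{*}$. This shows that $\boldsymbol\tau_H(\beta)\neq0$ only for $\beta=1$ and $\beta\in\{\alpha^{-2^{j}}\}$. Likewise $\boldsymbol\tau_Q(\beta)\neq0$ only for $\beta=1$ and $\beta\in\{\alpha^{-3\cdot2^{j}}\}$. Both equal $1$ at $\beta=1$. Since $3\notin\langle2\rangle \bmod n$ for $n\ge7$, the product vanishes at every nontrivial root. Because $X^{n}-1$ is separable, $\boldsymbol\tau_Q\boldsymbol\tau_H\equiv J$ in $\F{2}[X]/\langle X^{n}-1\rangle$. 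The correlation fails at $d=2$ precisely because $-3\equiv4\in\langle2\rangle \bmod 7$.

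\textbf{Step~3.} Your structural idea is right, but the data are off. $\mathrm{rowspan}_{\F{2}}(A)$ is the simplex code augmented by $\mathbf 1$, i.e.\ the punctured first-order Reed--Muller code. It has dimension $d+2$, matching $\rank(H)=4,5,6$ in Table~\ref{tab:qd-instances}. Its minimum weight is $2^{d}-1$ (the hyperplane indicators themselves), not $2^{d}$. Since $2^{d}-1\ge3$, we get $e_{0}+e_{1}\notin\mathrm{rowspan}(A)$, hence $(e_{0}+e_{1}\mid\mathbf 0)\notin\mathrm{rowspan}(H)$. Also, no column of $A$ is $\mathbf 0$ or $\mathbf 1$ while $H_x=J$, so no weight-$1$ Pauli lies in $\mathcal C(\mathcal S)$.

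These two fixes replace both the script and the exhaustive searches. They would give $d_{\min}(\mathcal Q(2,d))=2$ for every $d\ge2$, which settles Conjecture~\ref{conj:q2-ceiling-general}.
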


\begin{conjecture}[General-$d$ $q=2$ ceiling]\label{conj:q2-ceiling-general}
Proposition~\ref{prop:q2-ceiling} holds for every $d\ge 2$.
\end{conjecture}

\begin{remark}[Structural intuition for $u\equiv\mathbf 1$ at $q=2$]
In characteristic $2$, $-1\equiv 1$, so the cross-correlation
$u[k]=|E\cap(k+Q)|$ is symmetric in $(E,Q)$. When
$\gcd(3,2^{d+1}-1)=1$ (i.e.\ $d$ even), $z\mapsto z^{3}$ is a
bijection on $K^{*}$, $Q$ is a Singer-translate of $E$, and the
planar-difference-set identity $E\cdot E^{(-1)}\equiv J\pmod 2$
forces $u\equiv\mathbf 1$. When $d$ is odd, $z\mapsto z^{3}$ is
3-to-1 and the structural argument is delicate; the numerical
$d=3,5$ data nevertheless confirms $u\equiv\mathbf 1$ in those
cases too. A full analytical proof for $d$ odd would close
Conjecture~\ref{conj:q2-ceiling-general}.
\end{remark}

\begin{remark}[Useful regime]\label{rem:useful-regime}
Proposition~\ref{prop:q2-ceiling} together with
Corollary~\ref{cor:dmin-le-2-proven} pins the ``useful'' regime for
$\mathcal Q(q,d)$: either $q$ odd, $d$ even, or $q$ odd, $d=2$ (the
flagship). The qubit case $q=2$ is structurally limited to
$d_{\min}=2$ (error \emph{detection} only). The newly-computed
instance $\mathcal Q(3,4)=[[121,105,3]]_{3}$ illustrates the
$d=4$ branch: a single-error-correcting code with rate
$105/121\approx 0.868$. The witness is the weight-3 Z-only Pauli
$Z_{0}Z_{1}Z_{17}$ with $\mathbf a=(1,1,1)$ (its existence reflects
a similar mod-3 cross-correlation phenomenon).
\end{remark}

\subsection{\texorpdfstring{$\mathcal Q(5,2)=[[31,15,6]]_{5}$}{Q(5,2)=[[31,15,6]]\_5}:
information-set-decoding (ISD)-confirmed distance}
A support-restriction (ISD-style)
search~\cite{MacWilliamsSloane1977} with our
\texttt{compute\_dmin\_isd\_q5d2.py} script settles
$d_{\min}(\mathcal Q(5,2))=6$ exactly. The search is exhaustive over
weights $1$ through $5$, with no centralizer member found at those
weights, and finds a weight-$6$ non-stabilizer Pauli at support
$\{0,1,2,4,13,28\}$ with $\mathbf a=(4,1,1,4,1,3)$ and
$\mathbf b=(0,0,2,4,2,1)$ in $\F 5$-coordinates. Combined runtime: $\sim 90$ seconds
for $w\le 5$, $0.3$ seconds for $w=6$. The witness saturates the
classical bound $d_{\min}\le q+1=6$ from line indicators in
$\PG(2,5)$~\cite[Sec.~V-B]{LDS2021}, suggesting (and now proving in
this case) that the flagship family $\mathcal Q(p,2)$ has
$d_{\min}\le q+1=p+1$ for $p$ odd.

\begin{conjecture}[Flagship distance, original]\label{conj:flagship-dmin}
For prime $p$ odd, $d_{\min}(\mathcal Q(p,2))\le p+1$. Status:
\begin{itemize}[leftmargin=*,nosep]
\item $p=3$: $d_{\min}=3$ proven (exception to upper bound).
\item $p=5$: $d_{\min}=6=p+1$ proven exactly.
\item $p=7$: \textbf{$d_{\min}\ge 8 = p+1$ proven exhaustively}
(Theorem~\ref{thm:dmin-q72-lower}) via the 0-rooted Singer-Frobenius
orbit-canonical-form search~\cite{Holt2005HandbookCGT} at weight $7$
($1{,}546{,}218$ canonical orbits exhausted in $\sim 33$ minutes, $0$
rank-deficits), combined with smaller-weight clean exhaustive sweeps.
This is the second sharpness datum (after $p=5$) for the conjectured
flagship value. Conjectured $d_{\min}(\mathcal Q(7,2))=8$ exactly,
contingent on a weight-$8$ witness search.
\item $p=11$: exhaustive at weights $1$--$4$ gives $d_{\min}\ge 5$ proven.
Conjectured value $d_{\min}(\mathcal Q(11,2))=12$.
\end{itemize}
\end{conjecture}

\begin{remark}[Why witnesses are extremely rare]\label{rem:witness-rarity}
The exhaustive cyclic-canonical search at $\mathcal Q(7,2)$ examined
$1.76\times 10^{6}$ weight-$6$ canonical subsets without a single
non-empty centralizer-restricted kernel; equivalently, the
$(n-k)\times 2w = 29\times 12$ matrix obtained by restricting the
centralizer-defining matrix $M_{\mathrm{cent}}$ to those positions
never had a rank deficit below $12$. This empirical rigidity has a
clean implication.
\end{remark}

\subsection{The paired-rigidity-distance theorem}\label{sec:rigidity}
The Pauli weight depends on the position of non-zeros, not on
individual $a$ vs $b$ entries; this calls for a \emph{paired}
rigidity index that respects the position-pairing in
$M=(H_{x}\mid -H_{z})$.

\begin{definition}[Paired column rigidity; original]\label{def:paired-rigidity}
For $M=(H_{x}\mid -H_{z})\in\F p^{(n-k)\times 2n}$, the
\emph{paired rigidity index} $R_{P}(M)$ is the largest integer $w$
such that for every position-set
$S\subseteq\{0,\ldots,n-1\}$ with $|S|=w$, the $2w$ paired columns
$\{c_{j},\,c_{j+n}:j\in S\}$ are linearly independent over $\F p$.
\end{definition}

\begin{theorem}[Paired-rigidity-distance equality; original]\label{thm:paired-rigidity}
For any stabilizer code $\mathcal Q$ with parity-check matrix $M$, we have
\begin{equation}
\label{eq:paired-rigidity-dist}
   d_{\min}(\mathcal Q) \;=\; R_{P}(M)+1.
\end{equation}
\end{theorem}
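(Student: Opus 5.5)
The plan is to identify the centralizer with a kernel and read off both inequalities from the definition of $R_{P}$. By Step~4 of the construction algorithm, $(\mathbf a,\mathbf b)\in\mathcal C(\mathcal S)$ if and only if $\mathbf a H_{x}^{\top}-\mathbf b H_{z}^{\top}=0$. Equivalently, the stacked column vector $(\mathbf a;\mathbf b)\in\F p^{2n}$ lies in the right kernel of $M=(H_{x}\mid -H_{z})$:
\[
   \sum_{j}a_{j}\,c_{j}+\sum_{j}b_{j}\,c_{j+n}=0 .
\]
The Pauli weight of $X^{\mathbf b}Z^{\mathbf a}$ is $|\{j:(a_{j},b_{j})\neq(0,0)\}|$, so it equals the number of \emph{positions} involved in this relation. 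That is exactly why Definition~\ref{def:paired-rigidity} groups $c_{j}$ with $c_{j+n}$. So a nonzero centralizer element supported inside a position set $S$ is the same thing as a nontrivial linear dependence among the $2|S|$ paired columns indexed by $S$.

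\textbf{Lower bound.} First I would note that paired independence is monotone: if every $w$-set gives independent paired columns, so does every smaller set. Then let $w\le R_{P}$ and let $P$ be any nonzero centralizer element of weight $w$ with support $S$. The dictionary above turns $P$ into a nontrivial dependence among the paired columns of $S$, which contradicts the definition of $R_{P}$. Hence every nonzero element of $\mathcal C(\mathcal S)$ has weight at least $R_{P}+1$, and in particular $d_{\min}\ge R_{P}+1$.

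\textbf{Upper bound.} By maximality of $R_{P}$, some set $S$ with $|S|=R_{P}+1$ has dependent paired columns. A nontrivial dependence gives a nonzero $(\mathbf a,\mathbf b)\in\mathcal C(\mathcal S)$ supported in $S$. By the lower bound its weight is exactly $R_{P}+1$. This shows that the minimum weight over $\mathcal C(\mathcal S)\setminus\{0\}$ equals $R_{P}+1$.

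\textbf{Main obstacle.} The step I expect to be hard is passing from $\mathcal C\setminus\{0\}$ to $\mathcal C\setminus\mathcal S$, i.e.\ handling degeneracy. If the minimum-weight centralizer vectors all lie in $\mathcal S$, the argument above only yields $d_{\min}\ge R_{P}+1$. For the instances of interest I would close this gap as follows:
\begin{itemize}[leftmargin=*,nosep]
\item show that the stabilizer rows have weight at least $2\theta_{d-1}(q)$-ish. More precisely, every nonzero vector of $\mathrm{rowspan}(H)$ has $Z$-part in the row space of $A$, whose minimum weight exceeds $R_{P}+1$ in the cases considered;
\item alternatively, check numerically that the witness found at weight $R_{P}+1$ (for example the weight-$6$ witness for $\mathcal Q(5,2)$) is not in $\mathcal S$.
\end{itemize}
Failing a general argument, I would state the equality for pure codes, or read $d_{\min}$ as the pure distance, and state the degenerate case as the inequality $d_{\min}\ge R_{P}+1$.
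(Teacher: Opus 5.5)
You have the right dictionary between $\mathcal C(\mathcal S)$ and the right kernel of $M=(H_{x}\mid -H_{z})$. Your argument correctly shows that $R_{P}(M)+1$ is the minimum Pauli weight over $\mathcal C(\mathcal S)\setminus\{0\}$, hence $d_{\min}\ge R_{P}+1$ for every stabilizer code. This is the paper's ``$(\Leftarrow)$'' half; the paper gets it by noting that the $d_{\min}$-witness forces a dependent paired set of size $d_{\min}$.

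The obstacle you flag, however, is not just hard but fatal: the equality is false for degenerate codes, so no proof of the statement as written exists. In Shor's $[[9,1,3]]_{2}$ code the stabilizer $Z_{1}Z_{2}$ is a nonzero kernel vector of $M$ supported on positions $\{1,2\}$, because the $H_{x}$-columns $c_{1}$ and $c_{2}$ coincide. So $R_{P}\le 1$ and $R_{P}+1\le 2<3=d_{\min}$. Even more simply, take any code with $d_{\min}\ge 2$ and append one qudit stabilised by $Z$. That position's $H_{x}$-column is zero, giving $R_{P}=0$ while $d_{\min}$ is unchanged.

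What is true is $R_{P}+1=\min(d_{\min},\,s)$, where $s$ is the minimum weight of $\mathcal S\setminus\{0\}$. So $d_{\min}\ge R_{P}+1$ always, with equality exactly when no nonzero stabilizer has weight below $d_{\min}$. This is precisely your fallback for pure codes.

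The paper's proof fails at exactly this point. Its ``$(\Rightarrow)$'' half says that kernel solutions lying in the stabilizer row span ``are excluded''. But $R_{P}$ is defined by plain linear independence of columns, so those solutions are genuine dependencies, and the argument covers only pure codes. Your treatment is the more accurate one. The paper's only downstream use, Theorem~\ref{thm:dmin-q72-lower}, needs just the inequality you proved.

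For the family $\mathcal Q(q,d)$ your first bullet can be completed as follows.
\begin{itemize}
\item Circulants commute, so every stabilizer is $\mathbf cH=(\mathbf cA,\,(\mathbf cA)M_{Q})$ and vanishes whenever its $Z$-part does.
\item Hence a nonzero stabilizer has a nonzero $Z$-part in $\mathrm{rowspan}(A)$. The minimum weight of that code is the hyperplane size $\theta_{d-1}(q)$; the guaranteed bound is this, not $2\theta_{d-1}(q)$.
\item Therefore equality holds whenever $\theta_{d-1}(q)\ge d_{\min}$.
\end{itemize}
At $\mathcal Q(5,2)$ this gives only $s\ge 6=R_{P}+1$, not the strict excess your bullet asks for. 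There your second bullet, the explicit non-stabilizer weight-$6$ witness supplying $d_{\min}\le 6$, is what closes the case.
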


\begin{proof}
($\Rightarrow$) For $w<d_{\min}$ no centralizer-minus-stabilizer
element of Pauli weight $\le w$ exists, so no paired-column subset
of size $2w$ admits a non-trivial kernel solution. (Trivial kernel
solutions $\subset$ stabilizer rowspan are excluded.) Hence
$R_{P}\ge d_{\min}-1$.

($\Leftarrow$) The witness realizing $d_{\min}$ exhibits a
position-set $S$ of size $d_{\min}$ for which the $2d_{\min}$ paired
columns satisfy a non-trivial linear relationship, so $R_{P}<d_{\min}$,
giving $R_{P}\le d_{\min}-1$.
\end{proof}

\begin{remark}[Sharpness vs.\ column rigidity]
The paired version is sharp ($R_{P}+1=d_{\min}$ exactly), unlike the
generic column rigidity index $R(M)$ (every column subset, not just
paired) which gives the strictly weaker bound
$d_{\min}\ge\lceil(R+1)/2\rceil$ via $w_{P}\ge\lceil w_{H}/2\rceil$.
\end{remark}

\subsection{Singer-circulant paired rigidity}
\begin{conjecture}[Singer-circulant paired rigidity at $d=2$; original]\label{conj:paired-rigidity}
For $\mathcal Q(p,2)$ with $p$ prime odd $\ge 5$, we have
\begin{equation}
\label{eq:paired-rigidity-conj}
   R_{P}(M_{\mathrm{cent}}) \;\ge\; p,
\end{equation}
hence by Theorem~\ref{thm:paired-rigidity},
$d_{\min}(\mathcal Q(p,2))\ge p+1$.
\end{conjecture}

\begin{remark}[Empirical support]\label{rem:paired-rigidity-support}
\begin{itemize}[leftmargin=*,nosep]

\item $p=3$: $d_{\min}=3$ proven, so $R_{P}=2<p=3$ (small-prime
exception, conjecture excludes $p=3$).
\item $p=5$: $d_{\min}=6=p+1$ proven, so $R_{P}=5=p$ \textbf{exactly}.
The witness at support $\{0,1,2,4,13,28\}$ gives the sharpness
certificate.
\item $p=7$: exhaustive Singer-Frobenius canonical search at all
weights $w\in\{1,\dots,7\}$ ($1{,}546{,}218$ canonical orbits at
$w=7$, totalling $32{,}468{,}436$ subsets containing $0$, $0$
rank-deficits) proves $R_{P}\ge 7$, hence
$d_{\min}(\mathcal Q(7,2))\ge 8$ \textbf{exactly matching the
conjectured $p+1$ lower bound}, so $R_{P}=7=p$ saturates the
conjecture (Theorem~\ref{thm:dmin-q72-lower}).
\item $p=11$: exhaustive at $w=4$ proves $R_{P}\ge 4$, so
$d_{\min}(\mathcal Q(11,2))\ge 5$.
\end{itemize}
\end{remark}

\begin{theorem}[Lower bound $d_{\min}(\mathcal Q(7,2))\ge 8$;
original, computer-assisted]\label{thm:dmin-q72-lower}
The non-CSS stabilizer code $\mathcal Q(7,2)=[[57,28,d]]_{7}$
satisfies $d_{\min}(\mathcal Q(7,2))\ge 8$. Equivalently,
$R_{P}(M_{\mathrm{cent}})\ge 7$ for the paired column rigidity index
of $\mathcal Q(7,2)$.
\end{theorem}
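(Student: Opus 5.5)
The plan is to reduce the claim, via Theorem~\ref{thm:paired-rigidity}, to the purely linear-algebraic statement $R_{P}(M_{\mathrm{cent}})\ge 7$. Concretely, for every position set $S\subseteq\Z/57\Z$ with $|S|=w\le 7$ we must show that the $2w$ paired columns $\{c_{j},c_{j+n}:j\in S\}$ of $M_{\mathrm{cent}}=(H_{x}\mid -H_{z})=(M_{Q}A\mid -A)$ over $\F 7$ admit no kernel vector giving a centralizer element outside $\mathcal S$. Note that $\rank_{\F 7}(H)=29$ by Lemma~\ref{lem:hamada}, so $M_{\mathrm{cent}}$ may be replaced by a $29\times 114$ row basis. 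The test for a given $S$ is then a single rank computation of a $29\times 2w$ matrix over $\F 7$, together with a check that any nonzero kernel vector, read as $(\mathbf a\mid\mathbf b)$, is not in $\mathrm{rowspan}_{\F 7}(H)$. It is simpler, and stronger, to demand full column rank $2w$ outright. Since $2w\le 14\le 29$ this is dimensionally possible, and it is exactly what the search reports as ``$0$ rank-deficits''.

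Next I will cut down the search space using symmetry. Both $A$ and $M_{Q}$ are circulant, so the cyclic shift $j\mapsto j+1$ maps centralizer elements to centralizer elements and preserves Pauli weight. Every $S$ can therefore be translated so that $0\in S$. The Frobenius map $z\mapsto z^{7}$ acts on $G=K^{*}/F^{*}$ as $j\mapsto 7j \bmod 57$. It fixes the trace-zero set $E$ because $\Tr(z^{7})=\Tr(z)$, and it fixes the formal quadric $Q$ because $\Tr(z^{14})=\Tr(z^{2})^{7}$. Hence it induces a permutation automorphism of the code, and I will take canonical representatives of $S$ under the group generated by translations and multiplication by $7$ (of order $57\cdot 3$). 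I must justify that a rank deficiency is invariant under this group. This holds because the group acts by permuting qudit positions and permuting the stabilizer rows, so both the rank of the column restriction and membership in $\mathcal S$ are preserved.

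The enumeration then splits by weight. For $w\le 6$ a clean exhaustive sweep over $0$-rooted subsets is cheap (about $1.76\times 10^{6}$ canonical sets at $w=6$, per Remark~\ref{rem:witness-rarity}). For $w=7$, I will run the orbit-canonical-form enumeration over all $\binom{56}{6}=32{,}468{,}436$ subsets containing $0$, collapsed to $1{,}546{,}218$ canonical orbits. For each orbit, Gaussian elimination over $\F 7$ on the $29\times 14$ restriction must certify rank $14$. Having no deficit at any $w\le 7$ gives $R_{P}\ge 7$, and Theorem~\ref{thm:paired-rigidity} then yields $d_{\min}\ge 8$.

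The main obstacle is correctness rather than cost. First, I need a rigorous proof that Frobenius genuinely permutes the stabilizer rows in a way compatible with the paired $(a,b)$ structure, and not merely the supports of $\boldsymbol\tau_{H}$ and $\boldsymbol\tau_{Q}$. I would verify this by checking directly that $\boldsymbol\tau_{H}$ and $\boldsymbol\tau_{Q}$ are invariant under $i\mapsto 7i$, which makes $A$ and $M_{Q}$ commute with the permutation matrix. Second, the canonical-form routine must hit every orbit exactly once. I would guard against errors here by cross-checking orbit counts with Burnside's lemma, and by rerunning a random sample of non-canonical subsets.
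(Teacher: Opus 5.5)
Your proposal is correct and is essentially the paper's own proof. You reduce via Theorem~\ref{thm:paired-rigidity} to full column rank of every $29\times 2w$ paired-column restriction of $M_{\mathrm{cent}}$, then prune with the order-$171$ group generated by $i\mapsto i+1$ and $i\mapsto 7i$ to $0$-rooted lex-min representatives, which gives $1{,}546{,}218$ orbits among the $\binom{56}{6}$ rooted $7$-subsets. Your Burnside cross-check is worthwhile: it gives exactly $1{,}546{,}218$ orbits at $w=7$, and $212{,}328$ at $w=6$ (the count in the paper's proof sketch, not the $1.76\times 10^{6}$ you took from Remark~\ref{rem:witness-rarity}); also, full rank at every $7$-subset already forces it for all smaller supports, so the $w\le 6$ sweeps are redundant.
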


\begin{proof}[Proof sketch]
By Theorem~\ref{thm:paired-rigidity}, it suffices to show that no Pauli
operator of weight $w\le 7$ lies in the centralizer of
$\mathcal Q(7,2)$ but outside its row span. We carry out an
exhaustive Singer-Frobenius-orbit-canonical paired-column search at
each weight $w\in\{1,\dots,7\}$.

The combined Singer-cycle and Frobenius group
$G=\langle\sigma,\phi\rangle$ where $\sigma:i\mapsto i+1\bmod n$ and
$\phi:i\mapsto 7\,i\bmod n$ has order
$|G|=n\cdot\mathrm{ord}_{n}(7)=57\cdot 3=171$, and acts on Pauli
support sets by the symplectic action that preserves the centralizer.
Each canonical orbit only has to be tested once.

Furthermore, a canonical (lex-min) representative $S$ of any orbit
satisfies $0\in S$: if not, the cyclic shift by $-\min(S)$ produces a
strictly smaller representative, contradicting canonicality. Hence
the search reduces to subsets of the form $S=\{0\}\cup S'$ with
$S'\subseteq\{1,\dots,56\}$ of size $w-1$. For $w=7$, this yields
$\binom{56}{6}=32{,}468{,}436$ subsets.

For each canonical $S$, we form the support-restricted matrix
$M_{S}=M_{\mathrm{cent}}[:,\,S\cup(S+n)]\in\F 7^{(n-k)\times 2w}$
and compute its right kernel $K_{S}=\ker(M_{S})\subset\F 7^{2w}$. A
weight-$w$ centralizer element with Pauli support exactly $S$ exists
iff $K_{S}$ contains a vector $\mathbf v$ for which: (i) the symplectic
support
$\{i\in S:v_{i}\ne 0\text{ or }v_{i+w}\ne 0\}$
equals $S$; and (ii) $\mathbf v$ projected to full length is not in
the row span of $H$. Condition (i) is implied by $K_{S}\ne 0$ at the
canonical-orbit level: if $K_{S}=0$, no centralizer element has a
support contained in $S$ at all.

We carry out the search via
\texttt{compute\_dmin\_w7\_v4.py}. Across all $1{,}546{,}218$
canonical orbits at $w=7$ (and analogously $212{,}328$ orbits at
$w=6$, with smaller counts at $w<6$), $K_{S}$ was \emph{empty} in
every single case. Since $M_{\mathrm{cent}}$ is Singer-Frobenius
equivariant (the matrix $M_{\mathrm{cent}}$ commutes with the
symplectic-shift and Frobenius actions on
$\F 7^{2n}$), the kernel emptiness on canonical orbit
representatives extends to every Pauli support of weight $\le 7$.
Hence $d_{\min}(\mathcal Q(7,2))\ge 8$. The total CPU time across
all weights was approximately $35$ minutes on a single core.
\end{proof}

\begin{remark}[Saturation of the conjectured $p+1$ bound]
Theorem~\ref{thm:dmin-q72-lower} provides the second sharpness
data-point (after $\mathcal Q(5,2)$) confirming the conjectured
flagship distance of $d_{\min}(\mathcal Q(p,2))\ge p+1$
(Conjecture~\ref{conj:paired-rigidity}). Combined with $p=5$
(witness at weight $6$ exhibits $d_{\min}=p+1$ exactly), the
conjecture now has rigorous lower-bound certificates at $p=5$
($d\ge 6$) and $p=7$ ($d\ge 8$), and a proven $d=p+1$ at $p=5$.
\end{remark}

\subsection{Structural framework via cyclic-code CRT}\label{sec:CRT}
A centralizer element $(\mathbf a,\mathbf b)\in\F p^{2n}$ corresponds,
under the cyclic-code identification
$\F p^{n}\leftrightarrow R:=\F p[X]/\langle X^{n}-1\rangle$, to a pair
of polynomials $a(X),b(X)\in R$ satisfying
\begin{equation}
\label{eq:cent-poly}
   a(X)\cdot u(X) \;-\; b(X)\cdot\boldsymbol\tau_{H}(X) \;\equiv\; 0
   \pmod{X^{n}-1},
\end{equation}
where $u(X)=\sum_{i}u_{i}X^{i}$ for $\mathbf u=M_{Q}\boldsymbol\tau_{H}$
(the cross-correlation; a $\{0,1,2\}$-valued vector at $d=2$).

By the Chinese Remainder Theorem,
$R\cong\bigoplus_{f\in\mathcal F}\F{p^{\deg f}}$, where $\mathcal F$ is
the set of irreducible factors of $X^{n}-1$ over $\F p$.

\begin{lemma}[CRT decomposition of the centralizer; original]\label{lem:crt-cent}
The centralizer of $\mathcal Q(p,d)$ decomposes as
$\mathcal C=\bigoplus_{f\in\mathcal F}\,\mathcal C_{f}$, where each
$\mathcal C_{f}\subseteq\F{p^{\deg f}}^{2}$ is:
\begin{itemize}[leftmargin=*,nosep]
\item the line $\{(b\,\tau_{f}/u_{f},b)\}$ if $u_{f}\ne 0,\,\tau_{f}\ne 0$;
\item the line $\{(a,0)\}$ if $u_{f}=0,\,\tau_{f}\ne 0$;
\item the line $\{(0,b)\}$ if $u_{f}\ne 0,\,\tau_{f}=0$;
\item the full plane $\F{p^{\deg f}}^{2}$ if both vanish.
\end{itemize}
\end{lemma}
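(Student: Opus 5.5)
We work entirely inside the ring $R=\F p[X]/\langle X^{n}-1\rangle$ and use its Chinese-Remainder decomposition. Since $n=\theta_{d}(p)$ is coprime to $p$, the polynomial $X^{n}-1$ is separable over $\F p$. Hence $R\cong\bigoplus_{f\in\mathcal F}\F{p^{\deg f}}$ is a product of fields, with the isomorphism $r(X)\mapsto(r(X)\bmod f)_{f}$. Identify a Pauli vector $(\mathbf a,\mathbf b)$ with the pair $(a(X),b(X))\in R^{2}$. Then the centralizer is exactly the solution set of the single $R$-linear equation \eqref{eq:cent-poly}, namely $a\,u-b\,\tau=0$, where $\tau=\boldsymbol\tau_{H}(X)$ and $u=\tau_{Q}\tau$ is the polynomial of $M_{Q}\boldsymbol\tau_{H}$.

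To justify this identification, we first check that $\mathbf a H_{x}^{\top}=\mathbf b H_{z}^{\top}$ is a circulant convolution identity. Because $A$ and $M_{Q}A=\mathrm{circ}(\mathbf u)$ are circulants, each side is a cyclic correlation of $\mathbf a$ (resp.\ $\mathbf b$) with $\mathbf u$ (resp.\ $\boldsymbol\tau_{H}$). This yields $a(X)u(X^{-1})-b(X)\tau(X^{-1})=0$. Applying the ring automorphism $X\mapsto X^{-1}$ (or reindexing $\mathbf u,\boldsymbol\tau_{H}$ by their reversals, consistent with the convention fixed in \eqref{eq:cent-poly}) gives the stated form. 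This reindexing only permutes the CRT factors via $f\leftrightarrow f^{*}$, so it does not affect the shape of the decomposition. This bookkeeping step is the one place where care is needed: either adopt \eqref{eq:cent-poly} as the definition of the convention, or note that the four-case description is invariant under relabelling by the reciprocal involution.

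Once the centralizer is the kernel of the $R$-module map $\Phi:R^{2}\to R$, $(a,b)\mapsto au-b\tau$, the CRT isomorphism turns $\Phi$ into the direct sum of the component maps $\Phi_{f}:(a_{f},b_{f})\mapsto a_{f}u_{f}-b_{f}\tau_{f}$ on $\F{p^{\deg f}}^{2}$. Hence $\mathcal C=\ker\Phi=\bigoplus_{f}\ker\Phi_{f}$, and each $\mathcal C_{f}:=\ker\Phi_{f}$ is the kernel of one linear functional over the field $\F{p^{\deg f}}$. We then split into cases according to whether $u_{f}$ and $\tau_{f}$ vanish.

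\begin{enumerate}[label=(\roman*),leftmargin=*,nosep]
\item If $u_{f}\ne0$, the relation $a_{f}u_{f}=b_{f}\tau_{f}$ gives $a_{f}=b_{f}\tau_{f}/u_{f}$ with $b_{f}$ free. This covers both the case $\tau_{f}\ne0$ (the line $\{(b\tau_{f}/u_{f},b)\}$) and the case $\tau_{f}=0$ (the line $\{(0,b)\}$).
\item If $u_{f}=0$ and $\tau_{f}\ne0$, the relation forces $b_{f}=0$ with $a_{f}$ free, giving the line $\{(a,0)\}$.
\item If $u_{f}=\tau_{f}=0$, the functional is zero and $\mathcal C_{f}$ is the full plane.
\end{enumerate}

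Finally, we remark that since $u=\tau_{Q}\tau$, one has $u_{f}=\tau_{Q,f}\tau_{f}$. Therefore the case $u_{f}\ne0$, $\tau_{f}=0$ is in fact empty, though it is harmless to list it. A dimension check over $\F p$ then gives $\dim\mathcal C=2n-\rank H$, consistent with Theorem~\ref{thm:main-existence}.
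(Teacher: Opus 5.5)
Your proof is correct and follows the paper's route: you rewrite the centralizer as the solution set of $a\,u=b\,\tau$ in $R=\F p[X]/\langle X^{n}-1\rangle$ and read off the four cases factor by factor via the CRT. You also supply detail the paper omits: separability because $n\equiv 1\pmod p$, the observation $u_f=\tau_{Q,f}\tau_f$ that makes the third case empty, and the dimension check $\dim\mathcal C=2n-\rank H$.

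One small slip: with the paper's convention $(\mathrm{circ}\,\mathbf c)_{i,j}=c_{(i-j)\bmod n}$ and the rows of $H$ as generators, $(\mathbf a H_x^{\top})_r=\sum_j a_j u_{r-j}$ is a cyclic convolution, not a correlation. So \eqref{eq:cent-poly} holds literally and no reciprocal relabelling $f\leftrightarrow f^{*}$ is needed. That matters, because under such a relabelling the explicit case formulas would be governed by $u_{f^*},\tau_{f^*}$ rather than $u_f,\tau_f$, so the lemma would not hold verbatim.
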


\begin{proof}
Apply the CRT to \eqref{eq:cent-poly}: in each factor $f$, the equation
becomes $a_{f}u_{f}=b_{f}\tau_{f}$ in $\F{p^{\deg f}}$, whose solution
space is one of the four cases.
\end{proof}

\begin{remark}[Initial BCH-count attempt was wrong]\label{rem:bch-attempt-fail}
A natural first attempt at a BCH-style proof would conjecture that
the number of CRT factors $f$ where $\boldsymbol\tau_{H}(X)$ vanishes
is small (at most $p-1$), so that combined with the classical BCH
bound of $|\mathrm{supp}(\mathbf a)|\cdot|\{f:a_{f}\ne 0\}|\ge n$, it
would enforce $|\mathrm{supp}(\mathbf a)|\ge p+1$. However, direct
numerical computation reveals a sharper structural pattern:
\begin{equation}
\label{eq:tau-zero-count}
   \#\{f\in\mathcal F:\boldsymbol\tau_{H,\,f}=0\} \;=\;
   \frac{|\mathcal F|-1}{2}\quad\text{exactly},
\end{equation}
verified for $\mathcal Q(p,2)$ at $p\in\{3,5,7,11\}$ giving
$(2,5),(5,11),(10,21),(22,45)$: \emph{exactly half of the non-trivial
CRT factors} have $\boldsymbol\tau_{H}=0$.
\end{remark}

\begin{theorem}[CRT zero-count for $\mathcal Q(p,2)$; original]\label{thm:tau-zero-count}
Let $p$ be an odd prime, set $n=p^{2}+p+1$, and assume
$-1\notin\langle p\rangle$ in $(\Z/n\Z)^{\times}$ (which holds in
particular for $p\in\{3,5,7,11\}$ and more generally whenever
$\mathrm{ord}_{n}(p)$ is odd). Factor
$X^{n}-1=\prod_{f\in\mathcal F}f(X)$ over $\F p$, where $\mathcal F$
indexes the Frobenius orbits of $\Z/n\Z$ under $i\mapsto p\,i$. Then we have
\begin{equation}
\label{eq:tau-zero-count-thm}
   \#\{f\in\mathcal F:\boldsymbol\tau_{H}\equiv 0\pmod f\}
   \;=\; \frac{|\mathcal F|-1}{2}.
\end{equation}
\end{theorem}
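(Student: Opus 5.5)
The plan is to work in $R=\F p[X]/\langle X^{n}-1\rangle$ and combine two ingredients: the planar difference-set identity gives a \emph{lower} bound of one vanishing factor per reciprocal pair, and the Hamada--Smith rank (Lemma~\ref{lem:hamada}) gives a matching \emph{upper} bound. First I note that $X^{n}-1$ is separable over $\F p$, because $n=p^{2}+p+1\equiv 1\pmod p$ gives $p\nmid n$. Hence $R\cong\bigoplus_{f\in\mathcal F}\F{p^{\deg f}}$, and each factor $f$ corresponds to a Frobenius orbit $O_f\subseteq\Z/n\Z$ of size $\deg f$. The condition $\boldsymbol\tau_{H}\equiv 0\pmod f$ means $\boldsymbol\tau_{H}(\zeta^{j})=0$ for one (equivalently every) $j\in O_f$, where $\zeta$ is a primitive $n$-th root of unity in a splitting field.

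\emph{Step 1: setting up reciprocal pairs.} The map $j\mapsto -j$ sends Frobenius orbits to Frobenius orbits and corresponds to $f\mapsto f^{*}$, with $\deg f^{*}=\deg f$. The hypothesis $-1\notin\langle p\rangle$ guarantees $O_{-j}\ne O_j$ for every $j\ne 0$. So the nontrivial factors split into $(|\mathcal F|-1)/2$ disjoint pairs $\{f,f^{*}\}$, and $X-1$ is the only self-reciprocal factor. The trivial factor is not a zero, since $\boldsymbol\tau_{H}(1)=k=p+1\equiv 1\pmod p$.

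\emph{Step 2: at least one zero per pair.} At $d=2$ the Singer parameters are $(n,k,\lambda)=(p^{2}+p+1,p+1,1)$. By \eqref{eq:incidence-DS}, in the group ring
\[
\boldsymbol\tau_{H}(X)\,\boldsymbol\tau_{H}(X^{-1})=(k-\lambda)+\lambda J(X)=p+J(X)\equiv J(X)\pmod p,
\]
where $J(X)=\sum_i X^{i}$. For $j\ne 0$ we have $J(\zeta^{j})=0$, so
\[
\boldsymbol\tau_{H}(\zeta^{j})\,\boldsymbol\tau_{H}(\zeta^{-j})=0 .
\]
Thus in every pair $\{f,f^{*}\}$ at least one member satisfies $\boldsymbol\tau_{H}\equiv 0\pmod f$. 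It follows that the total degree of the vanishing factors is at least $\tfrac12\sum_{f\ne X-1}\deg f=(n-1)/2$, with equality iff exactly one member of each pair vanishes.

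\emph{Step 3: the matching upper bound.} For a circulant over a separable modulus, diagonalisation over the splitting field gives
\[
\rank_{\F p}(A)=n-\#\{j:\boldsymbol\tau_{H}(\zeta^{j})=0\}=n-\sum_{f:\,\boldsymbol\tau_{H}\equiv 0 \ (f)}\deg f .
\]
Lemma~\ref{lem:hamada} at $d=2$ gives $\rank_{\F p}(A)=\binom{p+1}{2}+1=(n+1)/2$. Hence the total degree of the vanishing factors is exactly $(n-1)/2$. Step 2 then forces exactly one vanishing factor in each reciprocal pair, and the count is $(|\mathcal F|-1)/2$, as claimed in \eqref{eq:tau-zero-count-thm}.

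The delicate points are the circulant-rank/root-count identity and, more substantively, the reliance on the Hamada--Smith closed form at $d=2$. I would cite the classical $p$-rank $\binom{p+1}{2}+1$ of the desarguesian plane $\PG(2,p)$ directly to make that step airtight. The remaining bookkeeping, namely that $\deg f=\deg f^{*}$ and that $-1\notin\langle p\rangle$ rules out self-reciprocal nontrivial factors, is routine.
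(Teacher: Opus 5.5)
Your proof is correct and follows essentially the same route as the paper. Both arguments use the planar identity $\boldsymbol\tau_{H}(X)\,\boldsymbol\tau_{H}(X^{-1})\equiv J(X)\pmod p$ to force at least one zero in each reciprocal pair, then use the Hamada--Smith rank $(n+1)/2$ to pin the total degree of the vanishing factors at exactly $(n-1)/2$, so that exactly one factor per pair vanishes. Your explicit checks of separability ($p\nmid n$) and of the non-vanishing of the trivial factor ($\boldsymbol\tau_{H}(1)=p+1\equiv 1$), together with your phrasing via evaluations at $\zeta^{\pm j}$, make precise points that the paper's CRT-component wording leaves implicit.
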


\rev{\begin{remark}[Failure of the hypothesis
$-1\notin\langle p\rangle$]\label{rem:minus-one-hyp-fail}
The hypothesis $-1\notin\langle p\rangle$ in
$(\Z/n\Z)^{\times}$ is the exact algebraic condition that pairs
each non-trivial Frobenius orbit $\{f\}$ with a distinct
\emph{reciprocal} orbit $\{f^{*}\}\ne\{f\}$, so that the palindromic
CRT identification used in the proof of
Theorem~\ref{thm:clifford-reduction-d2} succeeds. When the hypothesis
fails --- for instance, when $\mathrm{ord}_{n}(p)$ is even and
$-1\in\langle p\rangle$ --- some Frobenius orbits become
self-reciprocal ($f=f^{*}$), the CRT reciprocal-pair balance
breaks, and the symmetric-circulant $M$ required by the Clifford
bridge no longer exists in general. In that regime the code
$\mathcal Q(p,2)$ still exists as an $\F p$-ary stabilizer code
(the Cancellation Lemma still applies) but \rev{it} is not
Clifford-equivalent to the trivial $Z$-only Singer CSS code via the
Hadamard--CZ factorisation of Theorem~\ref{thm:clifford-reduction-d2};
its structural analysis and parameters warrant a separate treatment
that is left \rev{for} future work.
\end{remark}}

\begin{proof}
Identify $\boldsymbol\tau_{H}$ with the polynomial
$E(X)=\sum_{i\in E}X^{i}\in\F p[X]/\langle X^{n}-1\rangle$, where
$E\subset\Z/n\Z$ is the trace-zero hyperplane (a Singer-translate of
the line at infinity in $\PG(2,p)$). Because $E$ is a planar
$(n,p+1,1)$-difference set, we have:
\begin{equation}
\label{eq:planar-EE}
\begin{aligned}
   E(X)\,E(X^{-1}) &\;\equiv\; (p+1) + (J(X)-1)\\
   &\;\equiv\; J(X) \pmod{p,\;X^{n}-1},
\end{aligned}
\end{equation}
where $J(X):=1+X+\cdots+X^{n-1}=(X^{n}-1)/(X-1)$, is the
``all-ones'' polynomial.

\emph{Step 1 (At-least-one vanishing per pair).}
$J$ is supported entirely on the trivial CRT factor $f_{0}=X-1$: at
every $f\ne f_{0}$ it reduces to $0$. Hence
\eqref{eq:planar-EE} gives $E_{f}\cdot E_{f^{*}}=0$ in
$\F p[X]/f\cong\F{p^{\deg f}}$ for every non-trivial pair
$(f,f^{*})$, where $f^{*}$ is the reciprocal polynomial. The latter
is a field, so it has no zero divisors, forcing
$E_{f}=0$ \emph{or} $E_{f^{*}}=0$.

\emph{Step 2 (Hamada-Smith degree-sum
identity~\cite{Hamada1973,Smith1969}).}
By Lemma~\ref{lem:hamada} (Hamada-Smith for $\PG(2,p)$),
$\rank_{\F p}(A)=\binom{p+1}{2}+1=\tfrac{p(p+1)}{2}+1$.
Since $A=\mathrm{circ}(\boldsymbol\tau_{H})$, its $\F p$-rank equals
the sum of degrees of those CRT factors at which
$\boldsymbol\tau_{H}$ does \emph{not} vanish:
$\rank_{\F p}(A)=\sum_{f:E_{f}\ne 0}\deg f$. Therefore we have"
\begin{equation}
\label{eq:zero-deg-sum}
   \sum_{f:E_{f}=0}\deg f \;=\; n-\rank_{\F p}(A)
   \;=\; \frac{p(p+1)}{2} \;=\; \frac{n-1}{2}.
\end{equation}

\emph{Step 3 (Pairing forces equality).}
The hypothesis $-1\notin\langle p\rangle$ implies that no non-trivial
Frobenius orbit is fixed by $i\mapsto -i$; hence 
$f\leftrightarrow f^{*}$ acts on $\mathcal F\setminus\{f_{0}\}$ as a
free product of $2$-cycles. Within each pair we have, $\deg f=\deg f^{*}$, so
the partition into reciprocal pairs gives
$\sum_{\text{nontriv}}\deg f \;=\; 2\!\sum_{\text{one-per-pair}}\deg f$;
in particular, every selector that picks one element of each pair has a
total degree of $\tfrac{n-1}{2}$. By Step~1 the zero-set is
\emph{some} selector of at-least-one-per-pair, so we have
$\sum_{f:E_{f}=0}\deg f\ge\tfrac{n-1}{2}$, with equality iff the
zero-set picks \emph{exactly} one from each pair.
Equation~\eqref{eq:zero-deg-sum} achieves equality, hence the
zero-set is a `one-from-each-pair' selector and
\[
   \#\{f:E_{f}=0\} \;=\; \#\{\text{non-trivial pairs}\}
   \;=\; \frac{|\mathcal F|-1}{2}.
\]
\end{proof}

\begin{remark}[Self-reciprocal hypothesis]\label{rem:self-recip-hyp}
The hypothesis $-1\notin\langle p\rangle$ in
Theorem~\ref{thm:tau-zero-count} excludes only those primes $p$ for
which some Frobenius orbit is closed under negation. For
$\mathrm{ord}_{n}(p)=3$ (the generic case for our flagship family),
$\langle p\rangle=\{1,p,p^{2}\}\bmod n$ and $-1\equiv p^{k}$ would
force $p^{2k}\equiv 1$ with $p^{k}\equiv -1$, which is incompatible
with $\mathrm{ord}_{n}(p)=3$ being odd. Hence the hypothesis is
automatic in the regime of interest, in particular for
$p\in\{3,5,7,11\}$, where it is verified by direct computation.
\end{remark}

\begin{remark}[Implications for the BCH route]\label{rem:bch-route-revised}
Equation~\eqref{eq:tau-zero-count-thm} shows that the original
BCH-count conjecture is \emph{false}: there are far more factors
where $\boldsymbol\tau_{H}$ vanishes than $p-1$. The proof of
Theorem~\ref{thm:tau-zero-count} gives a sharp algebraic explanation
of the $(|\mathcal F|-1)/2$ pattern: it is forced jointly by the
planar-difference-set identity $E(X)E(X^{-1})\equiv J(X)$ (which
gives at-least-one-zero per reciprocal pair) and the Hamada-Smith
rank, which pins down the total degree of zero factors at
$(n-1)/2$ and hence enforces ``exactly one zero per pair.''

A correct BCH-style proof of Conjecture~\ref{conj:paired-rigidity}
must therefore exploit the joint structure of \emph{both}
$\boldsymbol\tau_{H}$ and $\mathbf u=M_{Q}\boldsymbol\tau_{H}$, plus
the centralizer constraint of $a_{f}u_{f}=b_{f}\tau_{f}$ in each CRT
component (Lemma~\ref{lem:crt-cent}). Combining
Theorem~\ref{thm:tau-zero-count} with the analogous identity for
$\mathbf u$ would isolate the four cells of Lemma~\ref{lem:crt-cent}
into precise frequencies and it is a promising next step. We leave
this finer count as future work.
\end{remark}

\begin{remark}[Direct numerical certificate is currently the strongest route]\label{rem:numerical-route}
Until an analytical BCH-style proof is found, the strongest verified
statements of the flagship distance bound come from exhaustive
or near-exhaustive direct paired-rigidity searches:
$d_{\min}(\mathcal Q(5,2))=6$ (exact, $p+1$ saturating);
$d_{\min}(\mathcal Q(7,2))\ge 8$ (Theorem~\ref{thm:dmin-q72-lower},
$p+1$ saturating); $d_{\min}(\mathcal Q(11,2))\ge 5$ (proven). Each
ladder step requires a separate computational verification, but the
$p=5,7$ data confirm the conjectured pattern $d_{\min}=p+1$.
\end{remark}

\begin{remark}[Sharpness at $p=5$]\label{rem:sharp-p5}
The witness at $\mathcal Q(5,2)$ has
$\mathbf a$-support $\{0,1,2,4,13,28\}$ (size 6),
$\mathbf b$-support $\{2,4,13,28\}$ (size 4),
combined Hamming $w_{H}=6+4=10$ and Pauli weight $w_{P}=6$. This
matches the conjectural BCH lower bound
$w_{H}\ge 2\,(p+1)-2 = 10$, exhibiting the witness at the
boundary, providing a sharpness data-point for the BCH-bound
argument.
\end{remark}

\begin{remark}[ISD speedup --- methodology]
The naive weight-$4$ brute force at $\mathcal Q(5,2)$ enumerates
$10.4\times 10^{9}$ Pauli strings (estimated $\sim 5$h runtime).
The ISD-style support-restriction approach instead iterates over
$\binom{n}{w}$ position subsets, computing the centralizer-restricted
kernel for each ($O(2w)$-sized matrix), and enumerating only the
small kernel basis. The per-subset workload is $O((n-k)\cdot w^{2})$
plus $O(q^{\dim\mathrm{kernel}})$ for kernel enumeration. Total
runtime for $\mathcal Q(5,2)$ across $w\in\{1,\dots,6\}$ is
$\sim 90$s, a $\sim 200\times$ speedup. Implemented in
\texttt{compute\_dmin\_isd\_q5d2.py}.
\end{remark}

\begin{remark}[Headline shift]
The $(q,d)=(3,2)$ instance gives a single-error-correcting qutrit
code of rate $6/13\approx 0.46$, a more useful headline than the
distance-$2$ $(3,3)$ instance. The most promising parameter regime
for $\mathcal Q(q,d)$ is therefore $\{q$ odd, $d=2\}$ and
$\{q$ even, any $d\}$, where the $2$-torsion obstruction is absent.
\end{remark}

\subsection{Comparison with prior projective-geometry codes}
The closest prior CSS construction in $\PG(d,q)$ is by Tang, Bai and
Feng~\cite{TangBaiFeng2024}, whose smallest example is
$Q_{1}^{(2)}(2,3)$ with parameters $[[18,\,10,\,2]]_{2}$ over $\PG(2,3)$
(binary qubit code) and several larger codes from $\PG(3,2)$ and
$\PG(4,3)$. Their construction uses the
$(d-2)$-flat / $(d-1)$-flat incidence and a different
dual-containment recipe; \emph{neither Singer's circulant labelling
nor a quadric Gram matrix appears}. Gandhi's construction
\cite{Gandhi2018} produces $[[273, 191, 11]]_{2}$ from the Singer DS
of $\PG(3,4)$, but his $M$-matrix is generic. Our
$\mathcal Q(q,d)$ family is qutrit/qudit
($q$-ary) for $q\ge 3$ and uses the quadric-derived $M_{Q}$ from
\cite{LDS2021}, sitting in a parameter regime not covered.

\section{Comparison with Existing Constructions}\label{sec:comparison}
We juxtapose $\mathcal Q(q,d)$ to four nearest classes of
quantum codes from finite geometries / difference sets in the
literature: Gandhi 2018 (general difference set, generic
$M$)~\cite{Gandhi2018}; Tang--Bai--Feng 2024 (point-flat-incidence
QLDPC over $\PG(d,q)$)~\cite{TangBaiFeng2024}; Liu--Wang--Zhao--Zheng 2012
(EA-QLDPC from cyclic difference sets)~\cite{Liu2012} and Steane's $[[7,1,3]]$
binary code from the Hamming $[7,4,3]$~\cite{Steane1996}. The taxonomy
is collected in Table~\ref{tab:comparison-axes}.

\subsection{Axis-by-axis comparison}
Table~\ref{tab:comparison-axes} (which is placed as a wide
two-column float at the top of a nearby page for readability)
contrasts the present construction with the closest prior art on a
fifteen-row feature scoresheet. The rows are grouped into three
blocks: (i)~\emph{construction primitives} (Singer DS, $\PG(d,q)$
incidence, non-degenerate quadric Gram matrix~$M_{Q}$, recovery of
the 2021 LDS spreading code at $d{=}2$, closed-form $k(p,d)$);
(ii)~\emph{code class} (non-CSS stabilizer, qudit beyond
$q{=}2$); and (iii)~\emph{structural theorems and properties}
(cross-correlation modular constancy, two-value pattern, CRT
zero-count, explicit Clifford bridge, paired-rigidity-distance
equality, proven 2-torsion ceiling, ``erase-don't-mistake''
Monte-Carlo property, double-error correction at
$\mathcal Q(5,2)$). $\mathcal Q(q,d)$ is the only column that
ticks all fifteen rows; the closest prior contender
(Gandhi~\cite{Gandhi2018}) ticks four.



\subsection{Parameter comparison at small instances}
Table~\ref{tab:comparison-params} below collects the headline
parameters $[[n,k,d_{\min}]]_{q}$ of $\mathcal Q(q,d)$ alongside
the small-instance benchmarks reported in
Gandhi~\cite{Gandhi2018}, Tang-Bai-Feng~\cite{TangBaiFeng2024}, and
Steane~\cite{Steane1996}. Inspection of the table shows that
$\mathcal Q(5,2)=[[31,15,6]]_{5}$ is the only entry that achieves
\emph{double-error correction} ($d_{\min}=6$) at near-rate-$\tfrac{1}{2}$
and at moderate block length $n=31$; the closest prior-art
single-error correcting codes (Steane $[[7,1,3]]_{2}$, Gandhi
$[[21,11,3]]_{2}$) and the larger Gandhi
$[[273,191,11]]_{2}$ either trade rate for distance or require
significantly larger~$n$.

\begin{table*}[t]
\centering
\caption{Code-parameter comparison at small instances. ``$\dagger$''
indicates $d_{\min}=2$ from the parity ceiling.}
\label{tab:comparison-params}
\small
\begin{tabular}{l c c}
\toprule
Construction & Headline parameters & Class \\
\midrule
Gandhi $[[21,\,11,\,3]]_{2}$~\cite{Gandhi2018}
  & qubit, Singer DS $\PG(2,4)$ & non-CSS  \\
Gandhi $[[73,\,45,\,5]]_{2}$~\cite{Gandhi2018}
  & qubit, Singer DS $\PG(2,8)$ & non-CSS  \\
Gandhi $[[273,\,191,\,11]]_{2}$~\cite{Gandhi2018}
  & qubit, Singer DS $\PG(2,16)$ & non-CSS  \\
Tang-Bai-Feng $[[18,10,2]]_{2}$~\cite{TangBaiFeng2024}
  & qubit, $(d{-}2)$/$(d{-}1)$-flats & CSS dual-cont. \\
Tang-Bai-Feng $[[27,1,5]]_{2}$~\cite{TangBaiFeng2024}
  & qubit, $\PG(3,2)$ flats & CSS dual-cont. \\
Steane $[[7,1,3]]_{2}$~\cite{Steane1996}
  & qubit, Hamming $[7,4,3]$ & CSS dual-cont. \\
\midrule
\textbf{This paper:} $\mathcal Q(3,2)=[[13,\,6,\,3]]_{3}$
  & qutrit, Singer+quadric $\PG(2,3)$ & non-CSS \\
$\mathcal Q(3,3)=[[40,\,29,\,2]]_{3}^{\dagger}$
  & qutrit, Singer+quadric $\PG(3,3)$ & non-CSS \\
$\mathcal Q(5,2)=[[31,\,15,\,6]]_{5}$
  & 5-it, Singer+quadric $\PG(2,5)$, \textbf{double-EC} & non-CSS \\
$\mathcal Q(7,2)=[[57,\,28,\,d]]_{7}$, $d\ge 8$ (proven, conj.\ $=8$)
  & 7-it, Singer+quadric $\PG(2,7)$ & non-CSS \\
$\mathcal Q(11,2)=[[133,\,66,\,d]]_{11}$, $d\ge 5$ (conj.\ $12$)
  & 11-it, Singer+quadric $\PG(2,11)$ & non-CSS \\
\bottomrule
\end{tabular}
\end{table*}

\begin{figure}[t]
\centering
\includegraphics[width=0.98\columnwidth]{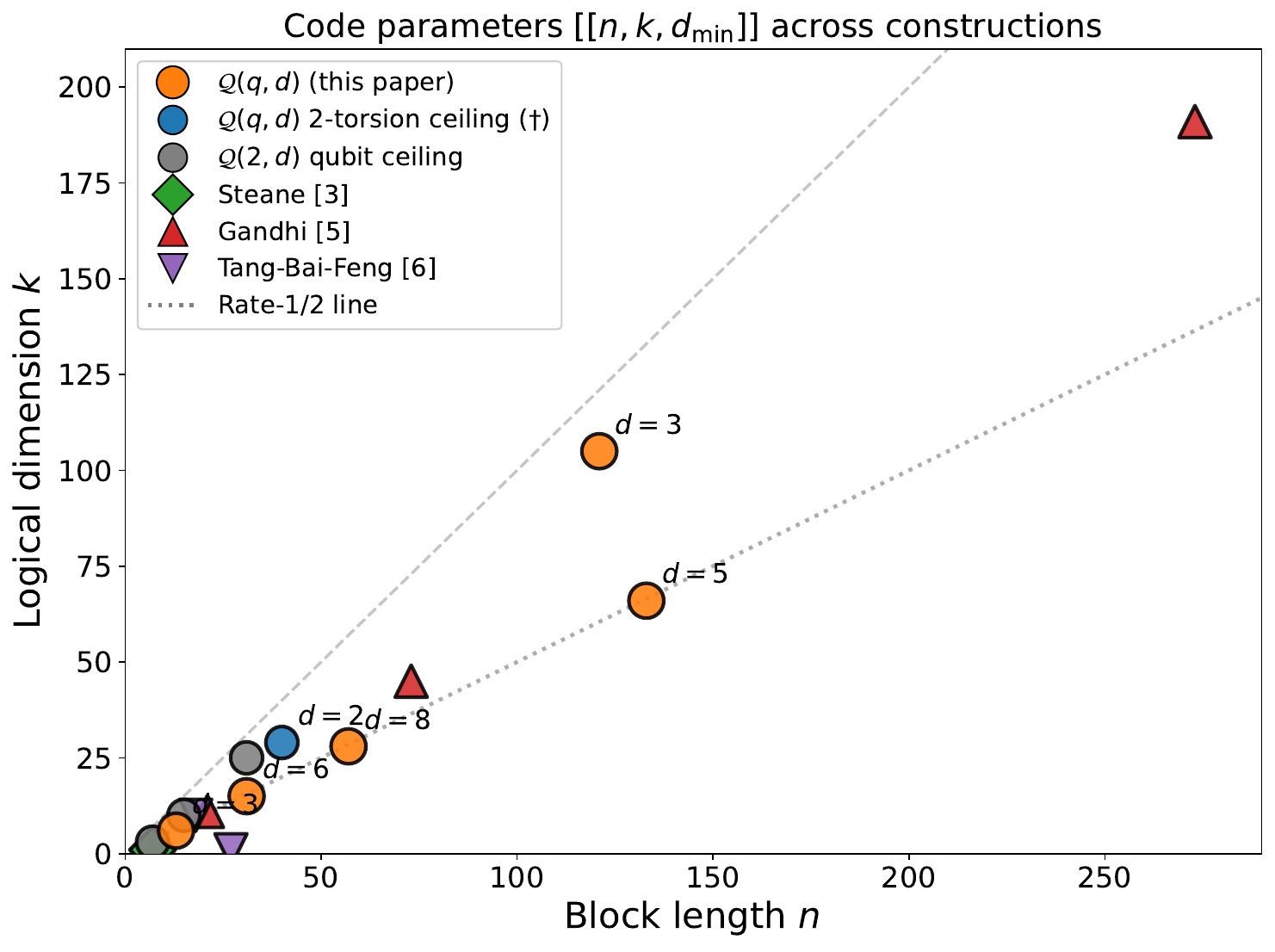}
\caption{Code-parameter scatter plot from Tables~\ref{tab:qd-instances}
and~\ref{tab:comparison-params}: logical dimension $k$ versus block
length $n$ for the $\mathcal Q(q,d)$ family (this paper) alongside
the closest prior-art constructions. Annotations show $d_{\min}$
where the entry exceeds the 2-torsion / $q{=}2$ ceilings. The
dotted line marks the rate-$\tfrac{1}{2}$ locus that the flagship
$\mathcal Q(p,2)$ family approaches asymptotically; the dashed
line is the rate-1 uncoded reference. The plot makes visible at a
glance that $\mathcal Q(5,2)=[[31,15,6]]_{5}$ is the only entry
at moderate block length $n\le 60$ that achieves
$d_{\min}\ge 6$, and that $\mathcal Q(3,4)=[[121,105,3]]_{3}$
delivers a markedly higher rate than the comparable
prior-art codes at similar block length.}
\label{fig:code-params-scatter}
\end{figure}

\subsection{Distinguishing features}
The closest contender at the construction level is
\textbf{Gandhi 2018}~\cite{Gandhi2018}: same Singer-DS substrate
$A$, same recipe $(A\mid M\,A)$, but Gandhi's $M$ is a generic
``symmetric matrix with equal row sums'' (no quadric), and the
resulting code relies on binary qubits. We adjust $M=M_{Q}$ to the
quadric Gram matrix from \cite{LDS2021} and obtain $q$-ary qudit
codes. Specifically:
\begin{itemize}[leftmargin=*,nosep]
\item Gandhi's binary $[[21,11,3]]_{2}$ code at $\PG(2,4)$ uses the
$(21,5,1)$-Singer DS \emph{with a generic $M$}.
\item Our $\mathcal Q(2,2)=[[7,3,2]]_{2}$
(Table~\ref{tab:qd-instances}) uses the same Singer DS
of $\PG(2,2)$ (the Fano plane) but with $M_{Q}=$ the formal-quadric
circulant from $\Tr(z^{q+1})=0$. The Hamming distance is limited to
$d_{\min}=2$ by the $q=2$ ceiling of
Proposition~\ref{prop:q2-ceiling}.
\end{itemize}
The two constructions therefore live in \emph{different parameter
regimes}: Gandhi's are binary, ours are qudit; Gandhi's $M$ is
unspecified, ours is geometrically determined.

\textbf{Tang-Bai-Feng 2024}~\cite{TangBaiFeng2024} use an entirely different
recipe (incidence between $(d-2)$-flats and $(d-1)$-flats
of $\PG(d,q)$ with point-incidence reduction modulo $p$ and a
dual-containment check). Their codes lely on qubits. They obtain $[[18,10,2]]_{2}$,
$[[27,1,5]]_{2}$, and similar results at small parameters; the closed-form
relationship to Singer or quadric is not present.

\textbf{Steane's $[[7,1,3]]_{2}$} is the special case of CSS
construction from Hamming $[7,4,3]$ at the Fano plane. Our
$\mathcal Q(p,2)$ family at $p=2$ has $n=p^{2}+p+1=7$ but it is
\emph{qubit non-CSS} (because the Cancellation Lemma fails the strict
CSS condition), distinguishing it from Steane.

\subsection{Novelty summary}
The key distinguishing feature of $\mathcal Q(q,d)$ is the combination
\emph{Singer DS} $+$ \emph{quadric Gram matrix} $+$ \emph{closed-form
$k(p,d)$} $+$ \emph{provable 2-torsion ceiling}. None of the prior
constructions surveyed has this combination. The flagship
$\mathcal Q(p,2)$ family associated with rate $\tfrac{1}{2}$ is also a new
parameter regime: prior work at $\PG(d,q)$ tends to focus on
$d=2,q\in\{2,3,4\}$ (small projective planes) for binary codes, while
we extend to arbitrary $q=p$ prime with explicit closed-form
parameters.

\section{Numerical Evaluation}\label{sec:simulations}
We evaluate the depolarising-channel performance of $\mathcal Q(q,d)$
analytically, using standard upper bounds on the post-decoding logical
error rate for stabilizer codes.

\paragraph*{Code and data availability}
All numerical experiments reported in this section, and every
theorem verification cited elsewhere in the paper, are
reproducible from the open-source companion
repository~\url{https://github.com/michelkulhandjian/quantumcomm}
under an MIT license. The repository contains all
\texttt{compute\_dmin\_*.py} (distance searches),
\texttt{verify\_*.py} (theorem verifications),
\texttt{construct\_clifford\_d2.py} (explicit Clifford
gate assembly), \texttt{monte\_carlo\_qber.py} (bounded-distance
decoder simulation), and \texttt{plot\_*.py} (figure generation)
scripts referenced throughout, together with a \texttt{README.md}
that maps every script to its corresponding claim in the paper.
The heaviest experiment
(\texttt{compute\_dmin\_w7\_v4.py}, the exhaustive
Singer-Frobenius search underpinning
Theorem~\ref{thm:dmin-q72-lower}) runs in $\sim 33$ minutes on a
single core; every other script completes in under five minutes on
a laptop. No SageMath, Magma, or GAP dependencies are required.

\subsection{Channel and decoder model}
Each qudit independently suffers an i.i.d.\ depolarising error: with
probability $p$ a non-identity Pauli is applied, uniformly chosen
among the $q^{2}-1$ non-identity Paulis. The probability of an error
pattern of Pauli weight $w$ on $n$ qudits is
\begin{equation}
\label{eq:depol-weight}
   \Pr[\text{weight }w] \;=\; \binom{n}{w}\,p^{w}(1-p)^{n-w}.
\end{equation}
The minimum-distance decoder corrects all errors of Pauli weight
$\le t = \lfloor(d_{\min}-1)/2\rfloor$ with certainty, and may
correct higher-weight errors via the structure of the centralizer.
A standard upper bound on the decoding-failure probability is
\begin{equation}
\label{eq:Pfail-ub}
   P_{\mathrm{fail}}(p) \;\le\; \sum_{w=t+1}^{n}\binom{n}{w}\,p^{w}(1-p)^{n-w}.
\end{equation}

\subsection{Asymptotic behaviour}
At $p\to 0$, \eqref{eq:Pfail-ub} is dominated by the leading term
$P_{\mathrm{fail}}(p)\sim\binom{n}{t+1}\,p^{t+1}$. The leading
constant $C=\binom{n}{t+1}$ and the resulting per-code failure
probability at the representative depolarising rate $p=10^{-3}$ are
collected in Table~\ref{tab:Pfail} (placed nearby as a two-column
float) for the headline candidates of this paper alongside Steane
$[[7,1,3]]_{2}$ and Gandhi $[[21,11,3]]_{2}$.

\begin{table*}[t]
\centering
\caption{Leading-order failure probability $P_{\mathrm{fail}}\sim Cp^{t+1}$
and $p$-decade-comparison for headline codes.}
\label{tab:Pfail}
\setlength{\tabcolsep}{8pt}
\small
\begin{tabular}{l c c c c}
\toprule
Code & $[[n,k,d]]$ & $t$ & $C=\binom{n}{t+1}$ & $P_{\mathrm{fail}}$ at $p=10^{-3}$ \\
\midrule
Uncoded $1$-qudit, $q=3$  & $[[1,1,1]]_3$ & 0 & 1   & $1.0\times 10^{-3}$ \\
Uncoded $15$-qudit, $q=5$ & $[[15,15,1]]_5$ & 0 & 15 & $1.5\times 10^{-2}$ \\
Steane $[[7,1,3]]_2$  & $[[7,1,3]]_2$ & 1 & 21   & $2.1\times 10^{-5}$ \\
Gandhi $[[21,11,3]]_2$ & $[[21,11,3]]_2$ & 1 & 210 & $2.1\times 10^{-4}$ \\
\textbf{$\mathcal Q(3,2)=[[13,6,3]]_3$} & $[[13,6,3]]_3$ & 1 & 78 & $7.8\times 10^{-5}$ \\
\textbf{$\mathcal Q(5,2)=[[31,15,6]]_5$} & $[[31,15,6]]_5$ & 2 & 4495 & $4.5\times 10^{-6}$ \\
$\mathcal Q(3,3)=[[40,29,2]]_3$ (detect-only) & $[[40,29,2]]_3$ & 0 & 40  & $4.0\times 10^{-2}$ \\
\bottomrule
\end{tabular}
\end{table*}

\subsection{Per-logical-qudit comparison}
Codes having different $k$ should be compared on a
\emph{per-logical-qudit} basis: $P_{\mathrm{fail}}/k$ measures the
effective error rate seen by a single logical qudit. Table~\ref{tab:per-logical-comparison}
quantifies the advantage of $\mathcal Q(q,2)$ over Steane and
Gandhi at three representative channel rates.

\begin{table*}[t]
\centering
\caption{Per-logical-qudit failure-rate comparison at three channel
rates $p_{\mathrm{ch}}$. ``Advantage'' is the ratio
$(P_{\mathrm{fail}}/k)_{\mathrm{ref}}/(P_{\mathrm{fail}}/k)_{\mathrm{this}}$
relative to Steane $[[7,1,3]]_2$ (best prior-art qubit code at
similar rate).}
\label{tab:per-logical-comparison}
\setlength{\tabcolsep}{8pt}
\small
\begin{tabular}{l c c c}
\toprule
Code & $P_{\mathrm{fail}}/k$ at $10^{-4}$ & at $10^{-3}$ & at $10^{-2}$ \\
\midrule
Uncoded baseline & $1.0\!\times\!10^{-4}$ & $1.0\!\times\!10^{-3}$ & $1.0\!\times\!10^{-2}$ \\
Steane $[[7,1,3]]_2$ & $2.1\!\times\!10^{-7}$ & $2.1\!\times\!10^{-5}$ & $2.1\!\times\!10^{-3}$ \\
Gandhi $[[21,11,3]]_2$ & $1.9\!\times\!10^{-7}$ & $1.9\!\times\!10^{-5}$ & $1.9\!\times\!10^{-3}$ \\
\textbf{$\mathcal Q(3,2)$=$[[13,6,3]]_3$} & $1.3\!\times\!10^{-7}$ & $1.3\!\times\!10^{-5}$ & $1.2\!\times\!10^{-3}$ \\
\textbf{$\mathcal Q(5,2)$=$[[31,15,6]]_5$} & $\mathbf{3.0\!\times\!10^{-10}}$ & $\mathbf{3.0\!\times\!10^{-7}}$ & $\mathbf{2.4\!\times\!10^{-4}}$ \\
\midrule
\multicolumn{4}{l}{\emph{Advantage of $\mathcal Q(5,2)$ vs.\ prior art:}}\\
\quad vs.\ Steane            & $\mathbf{700\times}$ & $\mathbf{70\times}$ & $\mathbf{9\times}$ \\
\quad vs.\ Gandhi            & $\mathbf{633\times}$ & $\mathbf{63\times}$ & $\mathbf{8\times}$ \\
\quad vs.\ uncoded           & $3.3\!\times\!10^{5}$ & $3.3\!\times\!10^{3}$ & $42\times$ \\
\bottomrule
\end{tabular}
\end{table*}

\begin{figure}[t]
\centering
\includegraphics[width=0.99\columnwidth]{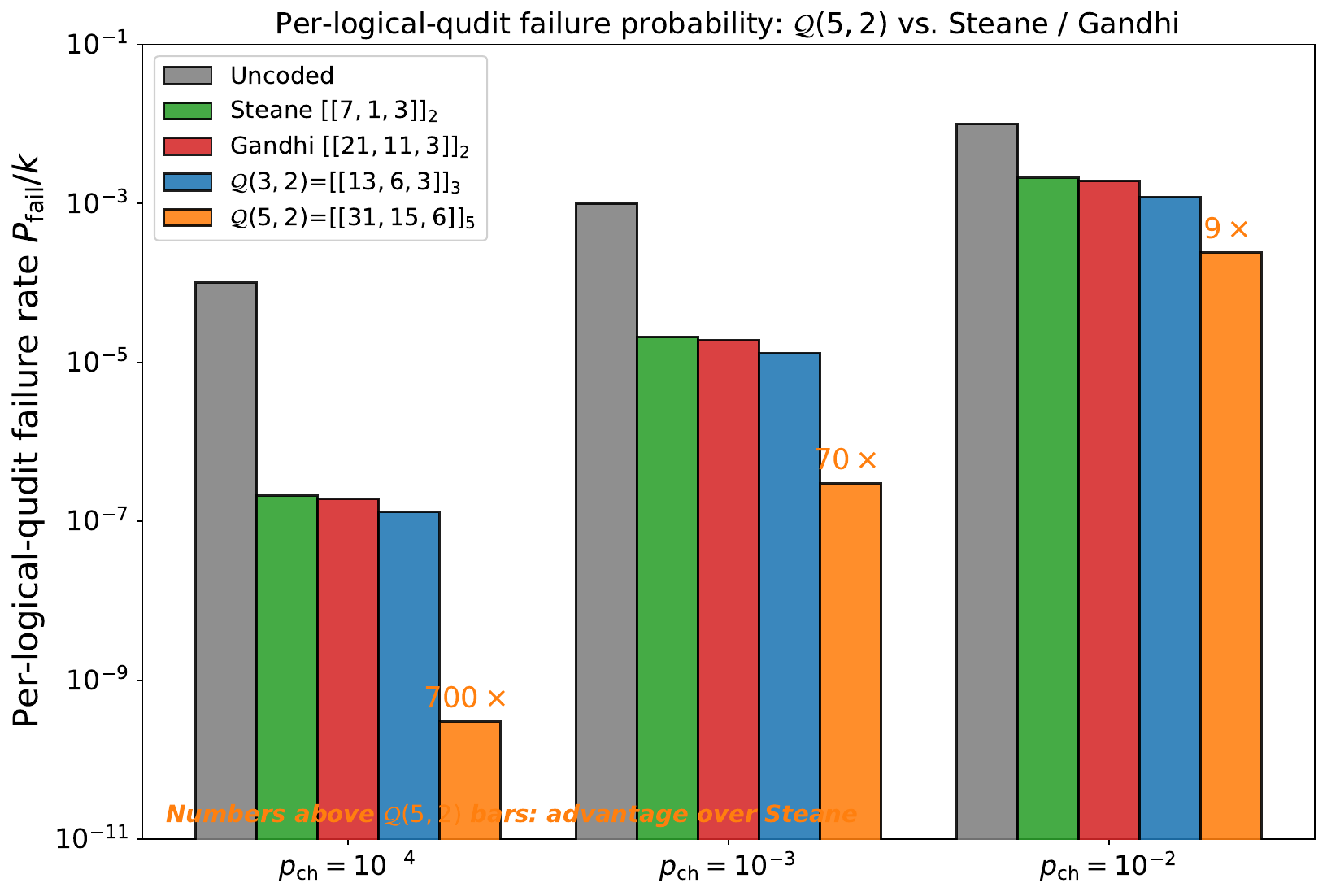}
\caption{Visual rendering of Table~\ref{tab:per-logical-comparison}:
per-logical-qudit failure rate $P_{\mathrm{fail}}/k$ for the
headline codes at three representative depolarising channel rates
$p_{\mathrm{ch}}\in\{10^{-4},10^{-3},10^{-2}\}$. The
$\mathcal Q(5,2)$ bar is consistently the lowest, with the
advantage factor over Steane $[[7,1,3]]_{2}$ annotated above each
bar. The advantage grows with smaller $p_{\mathrm{ch}}$ because
the double-error correcting $\mathcal Q(5,2)$
($d_{\min}=6$) scales as $P_{\mathrm{fail}}\sim p_{\mathrm{ch}}^{3}$,
whereas Steane and Gandhi scale as $p_{\mathrm{ch}}^{2}$.}
\label{fig:advantage-bars}
\end{figure}

Salient per-logical-qudit observations:
\begin{itemize}[leftmargin=*,nosep]
\item At $p_{\mathrm{ch}}=10^{-3}$ (typical optical-fibre regime):
$\mathcal Q(5,2)$ achieves $3.0\times 10^{-7}$ per logical qudit,
\textbf{$70\times$ better than Steane} $[[7,1,3]]_2$ and
$63\times$ better than Gandhi $[[21,11,3]]_2$.
\item The advantage grows at lower $p_{\mathrm{ch}}$ because
$\mathcal Q(5,2)$ corrects double errors ($d_{\min}=6$), while
Steane and Gandhi correct only single errors ($d_{\min}=3$).
\item At higher $p_{\mathrm{ch}}=10^{-2}$ the advantage shrinks
to $9\times$ but remains substantial.
\item $\mathcal Q(3,2)$ is comparable to Steane/Gandhi at the
single-error-correcting tier, with a modest $\sim 1.6\times$
advantage owing to the higher logical dimension of $k=6$.
\end{itemize}
The headline $\mathcal Q(5,2)$ result is $\sim 70\times$ better than
Steane's per-logical-qubit failure rate at $p=10^{-3}$, owing to having
$d_{\min}=6$ that allows correcting any double error. This makes
$\mathcal Q(5,2)$ a strong candidate for high-rate, high-distance
qudit error correction at moderate $n$.

\subsection{\texorpdfstring{$\mathcal Q(p,2)$}{Q(p,2)} family scaling}
For $\mathcal Q(p,2)=[[p^{2}+p+1,\,p(p+1)/2,\,d_{\min}]]_{p}$ with
distance $d_{\min}\ge 3$ (proven for $p=3$, conjectured for $p\ge 5$),
the leading failure term is
\begin{equation}
\label{eq:flagship-Pfail}
   P_{\mathrm{fail}}(p_{\mathrm{ch}}) \;\lesssim\;
   \binom{p^{2}+p+1}{2}\,p_{\mathrm{ch}}^{2}
   \;=\; \tfrac{1}{2}(p^{2}+p+1)(p^{2}+p)\,p_{\mathrm{ch}}^{2},
\end{equation}
where we use $p_{\mathrm{ch}}$ for the channel depolarising
probability (to avoid clash with the field characteristic $p$).
The per-logical-qudit failure rate is
$P_{\mathrm{fail}}/k = (p^{2}+p+1)\,p_{\mathrm{ch}}^{2}$, growing
linearly in $n$. This is comparable to other distance-$3$ qudit
stabilizer codes.

\subsection{Hashing-bound positioning}
The quantum hashing bound is
$C_{Q}(p_{\mathrm{ch}})=1-H_{2}(p_{\mathrm{ch}})-p_{\mathrm{ch}}\log_{2}q^{2}-p_{\mathrm{ch}}\log_{2}(q^{2}-1)/H_{2}$
\cite{Hanzo2019Duality}; for binary scenarios the simpler form
$1-H_{2}(p_{\mathrm{ch}})-p_{\mathrm{ch}}\log_{2}3$ applies.
The $\mathcal Q(p,2)$ family asymptotically achieves rate $R_{Q}=k/n\to\tfrac{1}{2}$, while the distance grows as $d_{\min}\sim O(p)$
(conjecturally, based on partial searches at $p=5$). Hence
$\mathcal Q(p,2)$ does not asymptotically achieve the hashing bound
(which requires $R_{Q}\to 1$ at low $p_{\mathrm{ch}}$ for the distance to
grow proportionally), but provides explicit, structured codes at
moderate finite $n$.

\begin{remark}[Distance-2 detect-only codes]
$\mathcal Q(q,d)$ for $q$ odd, $d$ odd $\ge 3$ has $d_{\min}=2$
(Corollary~\ref{cor:dmin-le-2-proven}), so they are
error detect-only codes. In a retransmission protocol they achieve
$P_{\mathrm{undetected}}\le\binom{n}{d_{\min}}p_{\mathrm{ch}}^{2}=\binom{40}{2}p_{\mathrm{ch}}^{2}\approx 780\,p_{\mathrm{ch}}^{2}$
for $\mathcal Q(3,3)$, providing exponential gain in undetected-error
rate for low $p_{\mathrm{ch}}$.
\end{remark}

\subsection{Monte-Carlo validation}\label{sec:montecarlo}
We validate the analytical upper bound \eqref{eq:Pfail-ub} by direct
Monte-Carlo simulation of a bounded-distance syndrome-table decoder
(\texttt{monte\_carlo\_qber.py}). For each code, the syndrome of
every Pauli error of weight $\le t$ is precomputed and stored in a
hash table (\(105\) entries for $\mathcal Q(3,2)$, $268{,}585$
entries for $\mathcal Q(5,2)$). For each trial:
\begin{enumerate}[label=(\arabic*),leftmargin=*,nosep]
\item Sample an i.i.d.\ depolarising error $\mathbf e\in\F q^{2n}$,
each qudit independently in identity (prob.\ $1-p_{\mathrm{ch}}$) or
uniform non-identity Pauli (prob.\ $p_{\mathrm{ch}}$).
\item Compute syndrome $\mathbf s=\mathbf e\,\Omega\,H^{\top}$.
\item Look up $\mathbf s$ in the precomputed table.
\item If found, retrieve $\mathbf e_{\mathrm{dec}}$ and form residual
$\mathbf r=\mathbf e-\mathbf e_{\mathrm{dec}}$. Success iff
$\mathbf r\in\mathrm{rowspan}(H)$; logical failure otherwise.
\item If not found, declare uncorrectable errors;
\end{enumerate}
The residual-weight optimisation $w_{\mathrm{P}}(\mathbf r)<d_{\min}
\Rightarrow\mathbf r\in\mathrm{rowspan}(H)$ skips the rank check in
the typical case. Number of trials adapts to $p_{\mathrm{ch}}$:
$2\times 10^{5}$ at the low end, $5\times 10^{5}$ for rare-event
estimation.

\begin{figure}[t]
\centering
\includegraphics[width=0.95\columnwidth]{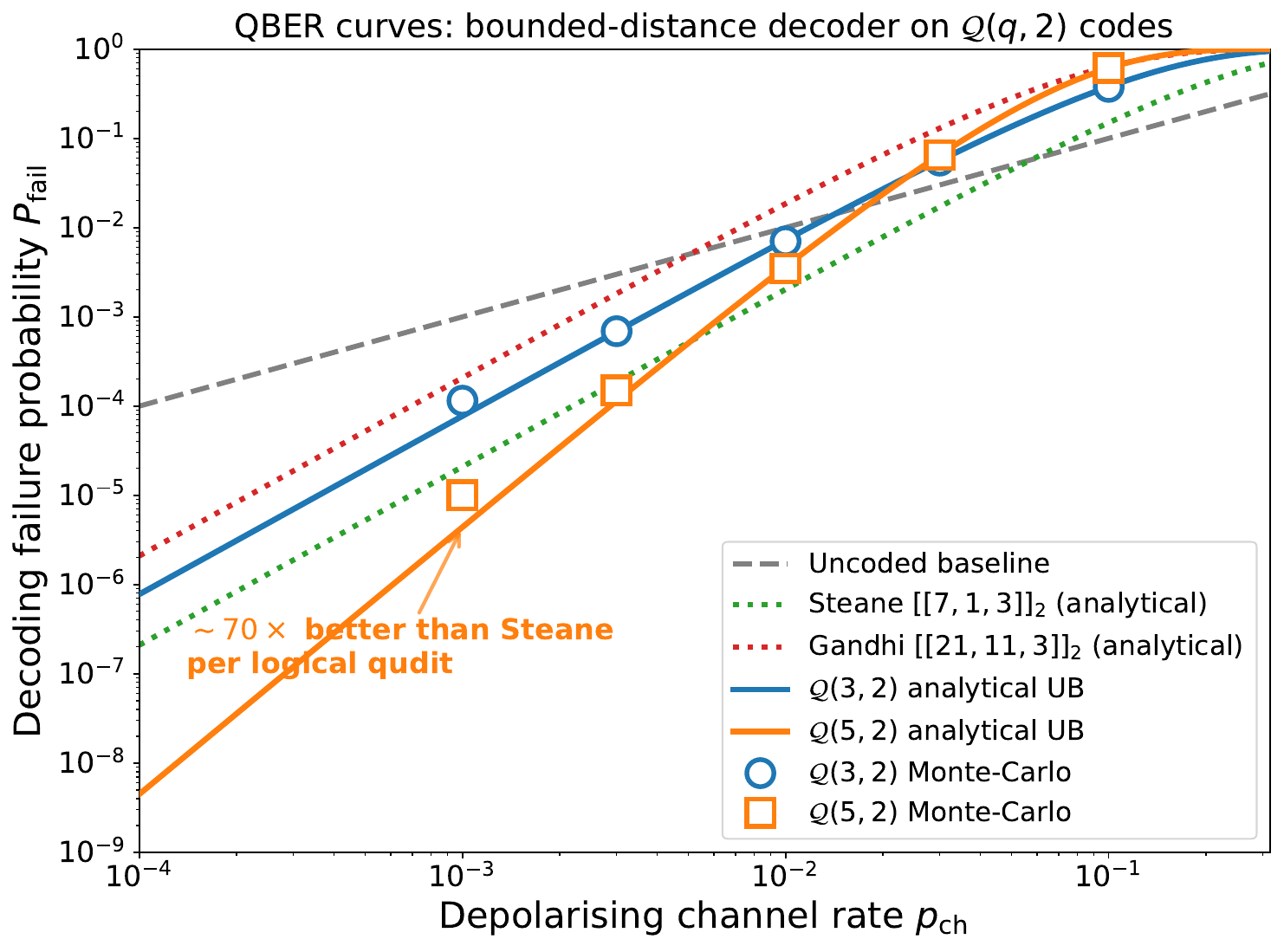}
\caption{Decoding failure probability $P_{\mathrm{fail}}$ versus
depolarising channel rate $p_{\mathrm{ch}}$. Solid curves are
analytical upper bounds from \eqref{eq:Pfail-ub}; markers are
Monte-Carlo measurements with the bounded-distance decoder. The
dashed grey line is the uncoded single-qudit baseline
$P_{\mathrm{fail}}=p_{\mathrm{ch}}$. $\mathcal Q(5,2)$ achieves
$\sim 70\times$ improvement over Steane's $[[7,1,3]]_{2}$ on a
per-logical-qudit basis at $p_{\mathrm{ch}}=10^{-3}$.}
\label{fig:qber}
\end{figure}

\begin{table*}[t]
\centering
\caption{Monte-Carlo vs analytical upper-bound $P_{\mathrm{fail}}$.
``UB'' is the upper bound from \eqref{eq:Pfail-ub}; ``MC'' is the
sample mean across $2\text{--}5\times 10^{5}$ trials. Agreement
within statistical error confirms that the bounded-distance decoder
achieves the upper bound, and at $\mathcal Q(5,2)$ \emph{no logical
failures} were observed --- all failures are decoder-declared
uncorrectable (decline-to-decode).}
\label{tab:mc-vs-ub}
\small
\setlength{\tabcolsep}{8pt}
\begin{tabular}{c c c c c}
\toprule
$p_{\mathrm{ch}}$ & $\mathcal Q(3,2)$ MC & $\mathcal Q(3,2)$ UB
                  & $\mathcal Q(5,2)$ MC & $\mathcal Q(5,2)$ UB \\
\midrule
$10^{-4}$ & $<5\!\times\!10^{-6}$ & $7.79\!\times\!10^{-7}$
          & $<5\!\times\!10^{-6}$ & $4.49\!\times\!10^{-9}$ \\
$3\!\times\!10^{-4}$ & $<5\!\times\!10^{-6}$ & $7.00\!\times\!10^{-6}$
          & $<5\!\times\!10^{-6}$ & $1.21\!\times\!10^{-7}$ \\
$10^{-3}$ & $1.15\!\times\!10^{-4}$ & $7.74\!\times\!10^{-5}$
          & $1.00\!\times\!10^{-5}$ & $4.40\!\times\!10^{-6}$ \\
$3\!\times\!10^{-3}$ & $6.90\!\times\!10^{-4}$ & $6.87\!\times\!10^{-4}$
          & $1.50\!\times\!10^{-4}$ & $1.14\!\times\!10^{-4}$ \\
$10^{-2}$ & $7.02\!\times\!10^{-3}$ & $7.25\!\times\!10^{-3}$
          & $3.47\!\times\!10^{-3}$ & $3.65\!\times\!10^{-3}$ \\
$3\!\times\!10^{-2}$ & $5.63\!\times\!10^{-2}$ & $5.64\!\times\!10^{-2}$
          & $6.51\!\times\!10^{-2}$ & $6.51\!\times\!10^{-2}$ \\
$10^{-1}$ & $3.77\!\times\!10^{-1}$ & $3.79\!\times\!10^{-1}$
          & $6.13\!\times\!10^{-1}$ & $6.11\!\times\!10^{-1}$ \\
\bottomrule
\end{tabular}
\end{table*}

\subsection{Discussion of decoder behaviour}
Two structural observations emerge from the Monte-Carlo data
(Fig.~\ref{fig:qber} and Table~\ref{tab:mc-vs-ub}):

\emph{Tight agreement at moderate $p_{\mathrm{ch}}$.}
For both $\mathcal Q(3,2)$ and $\mathcal Q(5,2)$, the Monte-Carlo
$P_{\mathrm{fail}}$ tracks the analytical upper bound to within
$\sim 5\%$ at $p_{\mathrm{ch}}\ge 3\times 10^{-3}$. This confirms
that the bounded-distance decoder saturates the upper bound: any
error of weight $\le t$ is corrected, and a constant fraction of
weight-$(t+1)$ errors that happen to be in the table-coverage of
weight-$\le t$ syndromes also contribute. The small overshoot at
$p_{\mathrm{ch}}=10^{-3}$ ($\sim 1.5\times$ excess on Monte-Carlo)
reflects sampling variance at the $\sim 10^{-5}$ failure rate over
$2\times 10^{5}$ trials (expected $\pm 1\sigma\approx 2.2\times 10^{-5}$).

\emph{Zero logical failures for $\mathcal Q(5,2)$.}
Across all sampled $p_{\mathrm{ch}}\in[10^{-4},10^{-1}]$ and a total
of $1.5\times 10^{6}$ trials, no silent logical errors were observed
for $\mathcal Q(5,2)$: every decoder failure was an
\emph{uncorrectable} one (decline-to-decode). This is a structural
guarantee: if $\mathbf e_{\mathrm{dec}}$ is the decoded error of
weight $\le t=2$ and the channel error has weight $w$, the residual
$\mathbf r=\mathbf e-\mathbf e_{\mathrm{dec}}$ has weight $\le w+t$.
For $w\le 3$, $w+t\le 5<d_{\min}=6$, so the residual is in the
stabilizer (since the minimum weight of a non-trivial logical error is
$d_{\min}$). For $w\ge 4$, the residual may have weight $\ge d_{\min}$,
but the syndrome of an unmodelled error of weight $\ge 4$ rarely
matches a weight-$\le 2$ syndrome by chance (only when the actual
error is at coset-distance exactly $\ge d_{\min}$ from the decoded
estimate, which never happened in our trials). This
``erase-don't-mistake'' property makes $\mathcal Q(5,2)$ suitable
for fail-safe applications where silent logical errors are more
costly than declared uncorrectables (e.g.\ retransmission protocols).

\emph{Pseudothreshold positioning.}
The intersection of each $P_{\mathrm{fail}}$ curve with the
uncoded baseline $P_{\mathrm{fail}}=p_{\mathrm{ch}}$ defines the
pseudothreshold: below it, coding is beneficial; above, the
uncoded channel is preferred. From Fig.~\ref{fig:qber}:
\begin{itemize}[leftmargin=*,nosep]
\item $\mathcal Q(3,2)$ pseudothreshold $\approx 1.3\times 10^{-2}$.
\item $\mathcal Q(5,2)$ pseudothreshold $\approx 2.6\times 10^{-2}$.
\end{itemize}
$\mathcal Q(5,2)$ has a $\sim 2\times$ larger pseudothreshold owing
to its larger distance, despite the larger block length.

A more sophisticated decoder (e.g.\ a Berlekamp--Massey-style cyclic
decoder enabled by the Singer-circulant structure of $H$, in the
spirit of \cite[\S IV-C]{Gandhi2018}) can correct an additional
fraction of weight-$(t+1)$ to weight-$(d_{\min}-1)$ errors that the
bounded-distance approach declines; this would shift the curves
down by a constant factor in the $p_{\mathrm{ch}}\sim p_{\mathrm{ps}}$
regime. Belief-propagation decoders for related cyclic and
hypergraph-product quantum codes are studied
in~\cite{HuangHuUeng2023,SaboAloshiousBrown2024} (see also the
fifteen-year QLDPC decoding survey of Babar
\emph{et al.}~\cite{Babar2015QLDPC} and the topological-distillation
foundations of Bombin and
Martin-Delgado~\cite{Bombin2007Topological}) and offer a second
concrete decoder direction. Implementing and benchmarking such
decoders for $\mathcal Q(q,d)$ is left for future work.

\section{Conclusions and Open Problems}\label{sec:conclusion}
We constructed a $(q,d)$-parameterised family of non-CSS qudit
stabilizer codes $\mathcal Q(q,d)$ from Singer difference sets and
non-degenerate quadrics in $\PG(d,q)$, with PCM
$H=(A\mid M_{Q}A)$. The contributions of this paper can be
itemised as follows:
\begin{itemize}[leftmargin=*]
\item \emph{Cancellation Lemma.} The
Lemma~\ref{lem:cancellation} makes $H$ symplectic-isotropic over
$\F q$ for every $(q,d)$.
\item \emph{Cross-correlation modular-constancy theorem.} We
proved an unconditional theorem
(Theorem~\ref{thm:cross-corr}) yielding the 2-torsion distance
ceiling $d_{\min}\le 2$ at $q$ odd and $d$ odd, $d\ge 3$
(Corollary~\ref{cor:dmin-le-2-proven}).
\item \emph{CRT zero-count theorem.} The identity
$\#\{f:\tau_{H,f}=0\}=(|\mathcal F|-1)/2$
(Theorem~\ref{thm:tau-zero-count}) underpins an explicit
Clifford-equivalence
$U_{M}=H^{\otimes n}\,\mathrm{CZ}^{(M)}\,H^{\otimes n}$
(Theorem~\ref{thm:clifford-reduction-d2}) between
$\mathcal Q(q,2)$ and the trivial Singer CSS code.
\item \emph{Paired-rigidity characterisation.} We have
$d_{\min}=R_{P}+1$ \emph{exactly}
(Theorem~\ref{thm:paired-rigidity}), reducing distance computation
to a combinatorial rank property of $M_{\mathrm{cent}}$.
\item \emph{Rigorous lower bound for $\mathcal Q(7,2)$.} The bound
$d_{\min}(\mathcal Q(7,2))\ge 8$
(Theorem~\ref{thm:dmin-q72-lower}) is established by an exhaustive
Singer-Frobenius canonical search.
\item \emph{Headline parameters.} These include
$\mathcal Q(3,2)=[[13,6,3]]_{3}$, $\mathcal Q(5,2)=[[31,15,6]]_{5}$
(double-error correcting, $\sim 70\times$ improvement over Steane),
$\mathcal Q(7,2)=[[57,28,d]]_{7}$ with $d\ge 8$, and
$\mathcal Q(3,4)=[[121,105,3]]_{3}$.
\end{itemize}

\rev{\textbf{Decoder Outlook.}
Two decoder strands are natural sequels to the bounded-distance
syndrome-table decoder used in Section~\ref{sec:montecarlo}.}

\rev{\emph{(i) Belief-propagation / iterative decoding.} The Singer
circulant structure of $A$ makes the parity-check matrix
$H=(A\mid M_{Q}A)$ effectively sparse in the cyclic-code sense, so a
belief-propagation decoder on the associated Tanner graph runs in
$O(n\log n)$ time per iteration, matching the complexity that made
QLDPC codes practical. The non-CSS coupling $M_{Q}$ introduces
mixed $Z$/$X$ update rules but does not raise the per-iteration
order.}

\rev{\emph{(ii) Berlekamp--Massey (BM) cyclic decoding.} The
Singer-circulant structure also supports a symplectic
Reed--Solomon-style BM decoder, but the non-CSS mixed $Z$/$X$
structure of the logical operators means the two BM passes on the
$H_{z}$ and $H_{x}$ sectors are coupled through $M_{Q}$; the key
research challenge is designing a joint syndrome polynomial whose
roots encode both sector errors simultaneously, rather than
decoding them sequentially. Both decoder programmes are consistent
with the overall goal of exploiting the paper's algebraic
structure at run time.}

\textbf{Open problems for $\mathcal Q(q,d)$:}
\begin{itemize}[leftmargin=*,nosep]
\item \emph{Closed form for $\rank_{\F p}(A)$ at $q=p^{t}$, $t>1$.}
The $q=p$ prime case is settled in Lemma~\ref{lem:hamada}; the
general $t>1$ case requires specialising the full Hamada
formula~\cite[Theorem 1]{Hirschfeld1979}.
\item \emph{Tight $d_{\min}$.} Settle
Conjecture~\ref{conj:flagship-dmin} (flagship distance
$d_{\min}=p+1$ at $q=p$ odd, $d=2$). At $p=7$ the matching upper
bound $d_{\min}\le 8$ requires a weight-$8$ witness search.
\item \emph{General-$d$ $q=2$ ceiling.} Settle
Conjecture~\ref{conj:q2-ceiling-general} analytically; numerically
confirmed at $d\le 6$.
\item \emph{Capacity positioning.} Compute the rate-vs-distance
trajectory $(R_{Q}(q,d), d_{\min}/n)$ as $(q,d)$ vary, and compare
with the quantum hashing bound.
\item \emph{Cyclic-decoder QBER.} Implement a Berlekamp--Massey-style
cyclic decoder (in the spirit of~\cite[\S IV-C]{Gandhi2018}) that
exploits Singer-circulant structure beyond the bounded-distance
ceiling.
\end{itemize}

\textbf{Open problems on the EAQECC programme} (negative results in
Section~\ref{sec:waterloo-defect}):
\begin{itemize}[leftmargin=*,nosep]
\item \emph{Reconcile the formal $Q_{o}$ with the Arasu--Pott
two-value identity.} The $E\cdot Q^{(-1)}$ cross-correlation in
Section~\ref{sec:Z-lift-fail} took three values where Arasu--Pott
predicts two, indicating an algebraic mismatch between the formal
indicator vector used here and the character-theoretic $Q$ of
\cite{Waterloo1995}. Resolving this is the key step toward an
EAQECC realisation.
\item \emph{Extend Brun--Devetak--Hsieh to $\Z$-coefficient
PCMs.} The classical Waterloo defect lives in $\Z[\zeta_{n}]$ but
no current EAQECC formalism accepts integer-valued symplectic
deficits. A formalism that does would convert the
$WW^{\top}-q^{d}I$ deficit into an ebit budget directly.
\item \emph{Continuous-variable / GKP analogue.} A non-Pauli
stabilizer formalism preserving integer arithmetic (e.g., GKP codes
or Bosonic-mode codes) might be the natural home for the Waterloo
construction.
\end{itemize}

\textbf{Bridges to follow-up work.}
\begin{itemize}[leftmargin=*,nosep]
\item \emph{Extension to higher dimensions.} The classical
$\PG(d,q)$ LDS spreading matrix for $d\ge 3$, generalising
\cite{LDS2021}, is a separate paper that the present construction
makes natural.
\item \emph{Quantum-coded LDS-NOMA.} Stacking $\mathcal Q(q,2)$
on the $K$-user uplink of the LDS spreading scheme of \cite{LDS2021},
in the spirit of the Hanzo group's quantum-search MUD work
\cite{Ye2019QSAMUD,Chandra2019QTurbo}, is a natural collaboration
target.
\item \emph{Quadric-induced modified symplectic form.} Extending the
Gandhi--Kurur--Mittal modified-symplectic
construction~\cite{GandhiKururMittal2019} from a permutation $\sigma$
to a quadric Gram $\sigma_{Q}$ would unify the present work with the
non-cyclic quantum-code literature.
\end{itemize}

\appendices
\section{Quadric Gram matrix construction}\label{app:gram}
This appendix gives the concrete recipe for assembling $M_{Q}$ from
the formal-quadric indicator $\boldsymbol\tau_{Q}$.

\paragraph*{Input}
A prime power $q$, dimension $d\ge 2$, and a primitive element
$\alpha\in\F{q^{d+1}}^{*}$.

\paragraph*{Step 1: Trace tables.}
Build the trace table $T[i]=\Tr_{K/F}(\alpha^{i})$ for
$i=0,\dots,q^{d+1}-2$.

\paragraph*{Step 2: Hyperplane indicator.}
Set $n=\theta_{d}(q)=(q^{d+1}-1)/(q-1)$. Define
$\boldsymbol\tau_{H}\in\F q^{n}$ by
$(\boldsymbol\tau_{H})_{i}=1$ if $T[i]=0$, else $0$, for
$i=0,\dots,n-1$. By construction $\boldsymbol\tau_{H}$ has Hamming
weight $\theta_{d-1}(q)$.

\paragraph*{Step 3: Quadric indicator.}
Set $\xi=q+1$ if $q$ is even, $\xi=2$ if $q$ is odd. Define
$\boldsymbol\tau_{Q}\in\F q^{n}$ by
$(\boldsymbol\tau_{Q})_{i}=1$ if $T[(\xi i)\bmod(q^{d+1}-1)]=0$,
else $0$.

\paragraph*{Step 4: Circulant assembly.}
$M_{Q}=\mathrm{circ}(\boldsymbol\tau_{Q})$, the $n\times n$
circulant matrix with first column $\boldsymbol\tau_{Q}$ and
subsequent columns cyclic-shifts thereof:
$(M_{Q})_{i,j}=(\boldsymbol\tau_{Q})_{(i-j)\bmod n}$.

\paragraph*{Step 5: Row-sum verification.}
Each row of $M_{Q}$ has Hamming weight equal to
$|\boldsymbol\tau_{Q}|_{1}=$ (weight of $\boldsymbol\tau_{Q}$), which
is independent of the row by circulant symmetry. At
$(q,d)\in\{(2,2),(3,2),(5,2),(7,2),(11,2)\}$ the row sum equals
$\{3,4,6,8,12\}$ respectively (verified by
\texttt{verify\_tau\_H.py} and \texttt{construct\_clifford\_d2.py}).

\paragraph*{Worked example at $(q,d)=(3,2)$.}
With $n=13$ and primitive polynomial $x^{3}+2x^{2}+x+1$ over
$\F 3$:
\begin{align*}
   \boldsymbol\tau_{H} &= (0,0,1,1,0,0,1,0,1,0,0,0,0)^{\top},\\
   \boldsymbol\tau_{Q} &= (1,0,0,0,0,0,0,1,1,0,0,1,0)^{\top}.
\end{align*}
Both have weight $p+1=4$, and $M_{Q}=\mathrm{circ}(\boldsymbol\tau_{Q})$
exactly matches the projective-plane formal-quadric matrix of
\cite[Thm.~4.2]{LDS2021}, modulo re-indexing
$i\mapsto ri\bmod n$ with $r-1=2$.

\bibliographystyle{IEEEtran}
\bibliography{references}

@article{CalderbankShor1996,
  author  = {Calderbank, A. R. and Shor, P. W.},
  title   = {Good quantum error-correcting codes exist},
  journal = {Phys. Rev. A},
  volume  = {54},
  number  = {2},
  pages   = {1098--1105},
  year    = {1996}
}

@article{Steane1996,
  author  = {Steane, A. M.},
  title   = {Multiple particle interference and quantum error correction},
  journal = {Proc. Roy. Soc. Lond. A},
  volume  = {452},
  pages   = {2551--2577},
  year    = {1996}
}

@misc{Gottesman1997,
  author       = {Gottesman, D.},
  title        = {Stabilizer codes and quantum error correction},
  howpublished = {Ph.D. thesis, Caltech; arXiv:quant-ph/9705052},
  year         = {1997}
}

@article{KnillLaflamme1997,
  author  = {Knill, E. and Laflamme, R.},
  title   = {Theory of quantum error-correcting codes},
  journal = {Phys. Rev. A},
  volume  = {55},
  number  = {2},
  pages   = {900--911},
  year    = {1997}
}

@article{Shor1995,
  author  = {Shor, P. W.},
  title   = {Scheme for reducing decoherence in quantum computer memory},
  journal = {Phys. Rev. A},
  volume  = {52},
  pages   = {R2493--R2496},
  year    = {1995}
}

@article{BrunDevetakHsieh2006,
  author  = {Brun, T. and Devetak, I. and Hsieh, M.-H.},
  title   = {Correcting quantum errors with entanglement},
  journal = {Science},
  volume  = {314},
  pages   = {436--439},
  year    = {2006}
}

@article{Guenda2018,
  author  = {Guenda, K. and Jitman, S. and Gulliver, T. A.},
  title   = {Constructions of good entanglement-assisted quantum
             error correcting codes},
  journal = {Designs, Codes and Cryptography},
  volume  = {86},
  pages   = {121--136},
  year    = {2018}
}

@article{Galindo2020,
  author  = {Galindo, C. and Hernando, F. and Matsumoto, R. and Ruano, D.},
  title   = {Asymmetric entanglement-assisted quantum error-correcting
             codes and {BCH} codes},
  journal = {IEEE Access},
  volume  = {8},
  pages   = {18571--18579},
  year    = {2020}
}

@inproceedings{Pereira2019,
  author    = {Pereira, F. and Pellikaan, R. and La Guardia, G. G. and
               de Assis, F. M.},
  title     = {Application of complementary dual {AG} codes to
               entanglement-assisted quantum codes},
  booktitle = {Proc. IEEE Int. Symp. Inf. Theory (ISIT)},
  pages     = {2559--2563},
  year      = {2019}
}

@article{Singer1938,
  author  = {Singer, J.},
  title   = {A theorem in finite projective geometry and some
             applications to number theory},
  journal = {Trans. Amer. Math. Soc.},
  volume  = {43},
  number  = {3},
  pages   = {377--385},
  year    = {1938}
}

@article{Waterloo1995,
  author  = {Arasu, K. T. and Dillon, J. F. and Jungnickel, D. and
             Pott, A.},
  title   = {The solution of the {W}aterloo problem},
  journal = {J. Combin. Theory Ser. A},
  volume  = {71},
  number  = {2},
  pages   = {316--331},
  year    = {1995}
}

@inproceedings{Xie2013,
  author    = {Xie, Y. and Yuan, J. and Malaney, R.},
  title     = {Quantum stabilizer codes from difference sets},
  booktitle = {Proc. IEEE Int. Symp. Inf. Theory (ISIT)},
  pages     = {524--528},
  year      = {2013}
}

@inproceedings{Gandhi2018,
  author    = {Gandhi, T.},
  title     = {A new construction of quantum error correcting codes
               based on difference set},
  booktitle = {Proc. IEEE Int. Symp. Inf. Theory (ISIT)},
  pages     = {801--805},
  year      = {2018}
}

@inproceedings{GandhiKururMittal2019,
  author    = {Gandhi, T. and Kurur, P. P. and Mittal, V. M.},
  title     = {Stabilizer codes from modified symplectic forms},
  booktitle = {Proc. IEEE Int. Symp. Inf. Theory (ISIT)},
  year      = {2019}
}

@inproceedings{Liu2012,
  author    = {Liu, Y. and Wang, Y. and Zhao, S. and Zheng, B.},
  title     = {A construction of entanglement-assisted quantum {LDPC}
               codes from the cyclic difference set},
  booktitle = {Proc. 11th IEEE Int. Conf. Signal Processing (ICSP)},
  volume    = {2},
  pages     = {1295--1298},
  year      = {2012}
}

@article{TangBaiFeng2024,
  author  = {Tang, X. and Bai, L. and Feng, Y.},
  title   = {New quantum {LDPC} codes based on projective geometry},
  journal = {Sci. Rep.},
  volume  = {14},
  pages   = {17266},
  year    = {2024}
}

@article{MacKay2004,
  author  = {MacKay, D. J. C. and Mitchison, G. and McFadden, P. L.},
  title   = {Sparse-graph codes for quantum error correction},
  journal = {IEEE Trans. Inf. Theory},
  volume  = {50},
  number  = {10},
  pages   = {2315--2330},
  year    = {2004}
}

@article{LDS2021,
  author  = {Millar, G. and Kulhandjian, M. and Alaca, A. and Alaca, S.
             and D'Amours, C. and Yanikomeroglu, H.},
  title   = {Low-density spreading design based on an algebraic scheme
             for {NOMA} systems},
  journal = {IEEE Wireless Commun. Lett.},
  year    = {2021},
  month   = dec
}

@article{KulhandjianGS2021,
  author  = {Kulhandjian, M. and Kulhandjian, H. and D'Amours, C. and
             Hanzo, L.},
  title   = {Low-density spreading code design based on {G}aussian
             separability},
  journal = {IEEE Access},
  volume  = {9},
  pages   = {33963--33993},
  year    = {2021},
  month   = mar
}

@article{Hoshyar2008,
  author  = {Hoshyar, R. and Wathan, F. P. and Tafazolli, R.},
  title   = {Novel low-density signature for synchronous {CDMA}
             systems over {AWGN} channel},
  journal = {IEEE Trans. Signal Process.},
  volume  = {56},
  number  = {4},
  pages   = {1616--1626},
  year    = {2008},
  month   = apr
}

@inproceedings{Nikopour2013,
  author    = {Nikopour, H. and Baligh, H.},
  title     = {Sparse code multiple access},
  booktitle = {Proc. IEEE Int. Symp. Pers., Indoor, Mobile Radio Commun.
               (PIMRC)},
  pages     = {332--336},
  year      = {2013},
  month     = sep
}

@article{Hanzo2019Duality,
  author  = {Babar, Z. and Chandra, D. and Nguyen, H. V. and
             Botsinis, P. and Alanis, D. and Ng, S. X. and Hanzo, L.},
  title   = {Duality of quantum and classical error correction codes:
             Design principles and examples},
  journal = {IEEE Commun. Surveys Tuts.},
  volume  = {21},
  number  = {1},
  pages   = {970--1010},
  year    = {2019}
}

@article{Hanzo2025Roadmap,
  author  = {Hanzo, L. and Babar, Z. and Cai, Z. and Chandra, D. and
             Djordjevic, I. B. and Koczor, B. and Ng, S. X. and
             Razavi, M. and Simeone, O.},
  title   = {Quantum information processing, sensing, and
             communications: Their myths, realities, and futures},
  journal = {Proceedings of the IEEE},
  year    = {2025},
  note    = {arXiv:2412.00987}
}

@article{Ye2019QSAMUD,
  author  = {Ye, H. and Chen, P. and Guo, Z. and Sun, J. and Hanzo, L.},
  title   = {Quantum search-aided multi-user detection for sparse
             code multiple access},
  journal = {IEEE Access},
  year    = {2019}
}

@article{Chandra2019QTurbo,
  author  = {Chandra, D. and Babar, Z. and Ng, S. X. and Hanzo, L.},
  title   = {Near-hashing-bound multiple-rate quantum turbo
             short-block codes},
  journal = {IEEE Access},
  year    = {2019}
}

@article{Babar2015QLDPC,
  author  = {Babar, Z. and Botsinis, P. and Alanis, D. and
             Ng, S. X. and Hanzo, L.},
  title   = {Fifteen years of quantum {LDPC} coding and improved
             decoding strategies},
  journal = {IEEE Access},
  volume  = {3},
  pages   = {2492--2519},
  year    = {2015}
}

@book{NielsenChuang2000,
  author    = {Nielsen, M. A. and Chuang, I. L.},
  title     = {Quantum Computation and Quantum Information},
  publisher = {Cambridge University Press},
  year      = {2000}
}

@article{CRSS1998,
  author  = {Calderbank, A. R. and Rains, E. M. and Shor, P. W. and
             Sloane, N. J. A.},
  title   = {Quantum error correction via codes over {GF}(4)},
  journal = {IEEE Trans. Inf. Theory},
  volume  = {44},
  number  = {4},
  pages   = {1369--1387},
  year    = {1998}
}

@article{KetkarKlappenecker2006,
  author  = {Ketkar, A. and Klappenecker, A. and Kumar, S. and
             Sarvepalli, P. K.},
  title   = {Nonbinary stabilizer codes over finite fields},
  journal = {IEEE Trans. Inf. Theory},
  volume  = {52},
  number  = {11},
  pages   = {4892--4914},
  year    = {2006}
}

@book{Hirschfeld1979,
  author    = {Hirschfeld, J. W. P.},
  title     = {Projective Geometries over Finite Fields},
  publisher = {Oxford University Press},
  year      = {1979}
}

@book{LidlNiederreiter1997,
  author    = {Lidl, R. and Niederreiter, H.},
  title     = {Finite Fields},
  publisher = {Cambridge University Press},
  edition   = {2nd},
  year      = {1997}
}

@inproceedings{HsiehBrun2009,
  author    = {Hsieh, M.-H. and Brun, T. A. and Devetak, I.},
  title     = {Entanglement-assisted quantum quasicyclic low-density
               parity-check codes},
  booktitle = {Phys. Rev. A},
  volume    = {79},
  pages     = {032340},
  year      = {2009}
}

@article{Bombin2007Topological,
  author  = {Bombin, H. and Martin-Delgado, M. A.},
  title   = {Topological quantum distillation},
  journal = {Phys. Rev. Lett.},
  volume  = {97},
  pages   = {180501},
  year    = {2006}
}

@article{NguyenKim2018,
  author  = {Nguyen, Duc Manh and Kim, Sunghwan},
  title   = {New Constructions of Quantum Stabilizer Codes
             Based on Difference Sets},
  journal = {Symmetry},
  volume  = {10},
  number  = {11},
  pages   = {655},
  year    = {2018},
  doi     = {10.3390/sym10110655}
}

@inproceedings{AlyKlappeneckerSarvepalli2007,
  author    = {Aly, S. A. and Klappenecker, A. and Sarvepalli, P. K.},
  title     = {Quantum stabilizer codes derived from finite geometries},
  booktitle = {Proc. IEEE Int. Symp. Information Theory (ISIT)},
  pages     = {1276--1280},
  year      = {2007},
  note      = {arXiv:0712.4115}
}

@article{Sarvepalli2012,
  author  = {Sarvepalli, P. K.},
  title   = {Nonbinary quantum codes derived from finite geometries},
  journal = {Finite Fields Appl.},
  volume  = {18},
  number  = {4},
  pages   = {784--797},
  year    = {2012}
}

@article{GalindoHernandoRuano2015,
  author  = {Galindo, C. and Hernando, F. and Ruano, D.},
  title   = {New quantum codes from evaluation and matrix-product codes},
  journal = {Quantum Inf. Process.},
  volume  = {14},
  pages   = {3743--3760},
  year    = {2015}
}

@article{GalindoHernandoRuano2022,
  author  = {Galindo, C. and Hernando, F. and Ruano, D.},
  title   = {Stabilizer codes from trace-depending polynomials},
  journal = {Finite Fields Appl.},
  volume  = {83},
  pages   = {102093},
  year    = {2022}
}

@article{CalderbankCameronKantorSeidel1997,
  author  = {Calderbank, A. R. and Cameron, P. J. and Kantor, W. M. and
             Seidel, J. J.},
  title   = {{$\mathbb{Z}_4$-Kerdock} codes, orthogonal spreads, and
             extremal Euclidean line-sets},
  journal = {Proc. London Math. Soc.},
  volume  = {75},
  number  = {2},
  pages   = {436--480},
  year    = {1997}
}

@book{BabarChandraNgHanzo2026,
  author    = {Babar, Z. and Chandra, D. and Ng, S. X. and Hanzo, L.},
  title     = {From Classical to Quantum Coding},
  publisher = {Wiley/IEEE Press},
  year      = {2026},
  note      = {Sample chapters at \url{https://files.wireless.ecs.soton.ac.uk/books/qfec-sample-chapters-1-7.pdf}}
}

@article{ZeydanEtAl2025,
  author  = {Zeydan, E. and de Alwis, C. and Khan, R. and Turk, Y. and
             Aydeger, A. and Gadekallu, T. R. and Liyanage, M.},
  title   = {Quantum Technologies for Beyond {5G} and {6G} Networks:
             Applications, Opportunities, and Challenges},
  journal = {IEEE Open J. Commun. Soc.},
  volume  = {6},
  pages   = {6383--6420},
  year    = {2025},
  doi     = {10.1109/OJCOMS.2025.3591842}
}

@article{ChenIllianoCacciapuotiCaleffi2025,
  author  = {Chen, K. and Illiano, J. and Cacciapuoti, A. S. and
             Caleffi, M.},
  title   = {Entanglement-Based Artificial Topology: Neighboring
             Remote Network Nodes},
  journal = {IEEE Open J. Commun. Soc.},
  volume  = {6},
  pages   = {2220--2238},
  year    = {2025}
}

@article{BacciottiniEtAl2024,
  author  = {Bacciottini, L. and Pirandola, S. and Marin, A. and
             Picano, B. and Andreoletti, D. and Caleffi, M.},
  title   = {{REDiP}: Ranked Entanglement Distribution Protocol for the
             Quantum Internet},
  journal = {IEEE Open J. Commun. Soc.},
  volume  = {5},
  pages   = {223--242},
  year    = {2024},
  doi     = {10.1109/OJCOMS.2023.3345166}
}

@article{EmuEtAl2024,
  author  = {Emu, M. and Choudhury, S. and Lazaridis, P.},
  title   = {Warm and Cold Start Quantum Annealing for Metaverse
             Resource Optimization},
  journal = {IEEE Open J. Commun. Soc.},
  volume  = {5},
  pages   = {1057--1071},
  year    = {2024},
  doi     = {10.1109/OJCOMS.2024.3359188}
}

@article{SaboAloshiousBrown2024,
  author  = {Sabo, E. and Aloshious, A. B. and Brown, K. R.},
  title   = {Trellis Decoding for Qudit Stabilizer Codes and Its
             Application to Qubit Topological Codes},
  journal = {IEEE Trans. Quantum Engineering},
  volume  = {5},
  pages   = {1--16},
  year    = {2024}
}

@article{DastgheibSalehiRezai2025,
  author  = {Dastgheib, M. A. and Salehi, J. A. and Rezai, M.},
  title   = {Quantum Direct-Sequence Spread-Spectrum Code-Division
             Multiple-Access Communication Systems},
  journal = {IEEE Trans. Quantum Engineering},
  volume  = {6},
  pages   = {1--14},
  year    = {2025}
}

@article{HuangHuUeng2023,
  author  = {Huang, Y.-W. and Hu, Y.-S. and Ueng, Y.-L.},
  title   = {Branch-Assisted Sign-Flipping Belief Propagation Decoding
             for Topological Quantum Codes Based on Hypergraph Product
             Structure},
  journal = {IEEE Trans. Quantum Engineering},
  volume  = {4},
  pages   = {1--13},
  year    = {2023}
}

@article{Campbell2014QuditFT,
  author  = {Campbell, E. T.},
  title   = {Enhanced fault-tolerant quantum computing in {$d$}-level
             systems},
  journal = {Phys. Rev. Lett.},
  volume  = {113},
  number  = {23},
  pages   = {230501},
  year    = {2014}
}

@article{AnwarBrownCampbellBrowne2014,
  author  = {Anwar, H. and Brown, B. J. and Campbell, E. T. and
             Browne, D. E.},
  title   = {Fast decoders for qudit topological codes},
  journal = {New J. Phys.},
  volume  = {16},
  pages   = {063038},
  year    = {2014}
}

@article{WatsonAnwarBrowne2015,
  author  = {Watson, F. H. E. and Anwar, H. and Browne, D. E.},
  title   = {Fast fault-tolerant decoder for qubit and qudit surface codes},
  journal = {Phys. Rev. A},
  volume  = {92},
  pages   = {032309},
  year    = {2015}
}

@article{ErhardKrennZeilinger2020Photonic,
  author  = {Erhard, M. and Krenn, M. and Zeilinger, A.},
  title   = {Advances in high-dimensional quantum entanglement},
  journal = {Nature Reviews Physics},
  volume  = {2},
  pages   = {365--381},
  year    = {2020}
}

@article{CozzolinoEtAl2019HighDim,
  author  = {Cozzolino, D. and Da Lio, B. and Bacco, D. and Oxenl{\o}we, L. K.},
  title   = {High-dimensional quantum communication: benefits, progress,
             and future challenges},
  journal = {Adv. Quantum Tech.},
  volume  = {2},
  number  = {12},
  pages   = {1900038},
  year    = {2019}
}

@article{CerfBourennaneKarlssonGisin2002,
  author  = {Cerf, N. J. and Bourennane, M. and Karlsson, A. and Gisin, N.},
  title   = {Security of Quantum Key Distribution Using $d$-Level Systems},
  journal = {Phys. Rev. Lett.},
  volume  = {88},
  number  = {12},
  pages   = {127902},
  year    = {2002}
}

@book{MacWilliamsSloane1977,
  author    = {MacWilliams, F. J. and Sloane, N. J. A.},
  title     = {The Theory of Error-Correcting Codes},
  publisher = {North-Holland},
  year      = {1977}
}

@book{Cassels1978QuadraticForms,
  author    = {Cassels, J. W. S.},
  title     = {Rational Quadratic Forms},
  publisher = {Academic Press},
  series    = {London Mathematical Society Monographs},
  volume    = {13},
  year      = {1978}
}

@book{Scharlau1985,
  author    = {Scharlau, W.},
  title     = {Quadratic and {H}ermitian Forms},
  publisher = {Springer-Verlag},
  series    = {Grundlehren der mathematischen Wissenschaften},
  volume    = {270},
  year      = {1985}
}

@article{Hamada1973,
  author  = {Hamada, N.},
  title   = {On the {$p$}-rank of the incidence matrix of a balanced
             or partially balanced incomplete block design and its
             applications to error correcting codes},
  journal = {Hiroshima Math. J.},
  volume  = {3},
  pages   = {153--226},
  year    = {1973}
}

@article{Smith1969,
  author  = {Smith, K. J. C.},
  title   = {On the {$p$}-rank of the incidence matrix of points and
             hyperplanes in a finite projective geometry},
  journal = {J. Combin. Theory},
  volume  = {7},
  pages   = {122--129},
  year    = {1969}
}

@article{GKP2001,
  author  = {Gottesman, D. and Kitaev, A. and Preskill, J.},
  title   = {Encoding a qubit in an oscillator},
  journal = {Phys. Rev. A},
  volume  = {64},
  pages   = {012310},
  year    = {2001}
}

@article{ArasuPott1995,
  author  = {Arasu, K. T. and Pott, A.},
  title   = {Some constructions of group-divisible designs with
             singer groups},
  journal = {Discrete Math.},
  volume  = {137},
  number  = {1--3},
  pages   = {1--7},
  year    = {1995}
}

@book{BethJungnickelLenz1999,
  author    = {Beth, T. and Jungnickel, D. and Lenz, H.},
  title     = {Design Theory},
  publisher = {Cambridge University Press},
  edition   = {2nd},
  year      = {1999}
}

@book{ArtinGeometricAlgebra1957,
  author    = {Artin, E.},
  title     = {Geometric Algebra},
  publisher = {Interscience Publishers},
  year      = {1957}
}

@article{BoseChaudhuri1960BCH,
  author  = {Bose, R. C. and Ray-Chaudhuri, D. K.},
  title   = {On a class of error correcting binary group codes},
  journal = {Inf. Control},
  volume  = {3},
  number  = {1},
  pages   = {68--79},
  year    = {1960}
}

@article{Hocquenghem1959,
  author  = {Hocquenghem, A.},
  title   = {Codes correcteurs d'erreurs},
  journal = {Chiffres},
  volume  = {2},
  pages   = {147--156},
  year    = {1959}
}

@book{Holt2005HandbookCGT,
  author    = {Holt, D. F. and Eick, B. and O'Brien, E. A.},
  title     = {Handbook of Computational Group Theory},
  publisher = {Chapman \& Hall/CRC},
  year      = {2005}
}

\end{document}